\newtheorem{thm}{Theorem}
\newtheorem{lem}{Lemma}
\newcommand{\red}[1]{#1}
\renewcommand{\hbar}[1]{h_{\overline{#1}}}
\newcommand{\bbar}[1]{b_{\overline{#1}}}
\newcommand{\indepset}[2]{{\mathcal{S}}_{#1}^{(#2)}}
\newcommand{\indepsetwhole}[1]{{\mathcal{S}}_{#1}}
\newcommand{\stilde}[1]{{S}^{(#1)}}
\newcommand{\floor}[1]{\left\lfloor#1\right\rfloor}
\newcommand{\vfrz}[1]{h^{\mathrm{frz},#1}}
\newcommand{\vspo}{h^{\mathrm{spo}}}
\newcommand{\vrear}[1]{h^{\mathcal{R}, #1}}
\newcommand{\vleafdep}[1]{h^{\mathcal{L},{#1}}}
\newcommand{\vleafindep}{h^{\mathcal{L}}}
\newcommand{\leaves}[2]{\mathcal{L}_{#1}^{#2}}
\newcommand{\rearrangeable}[2]{\mathcal{R}_{#1}^{#2}}
\newcommand{\ccc}[2]{\mathcal{A}^{#1}_{#2}}
\newcommand{\ccl}[3]{\mathcal{B}^{#1, #2}_{#3}}
\newcommand{\cclwtotype}[2]{\mathcal{B}^{#1}_{#2}}
\newcommand{\cc}[3]{\mathcal{C}_{#3}^{#1,#2}}
\newcommand{\LopR}{L \oplus R}
\newcommand{\s}[1]{\mathbb{S}_{#1}}
\renewcommand{\a}[1]{\mathbb{A}_{#1}}
\newcommand{\C}[1]{\mathbb{C}_{#1}}
\newcommand{\Q}[2]{Q_{#1}^{(#2)}}
\newcommand{\placevertex}[4]{
  \pgfmathsetmacro{\nodevalue}{int(#1 + #2)}
  \node[#4] (v#1) at (#1, #3) {$\nodevalue$};
}
\newcommand{\placevertexbar}[4]{
  \pgfmathsetmacro{\nodevalue}{int(#1 + #2)}
  \node[#4] (bar#1) at (#1, #3) {$\overline{\nodevalue}$};
}
\newcommand{\placevertexLR}[6]{
  \pgfmathsetmacro{\nodevalue}{int(#5)}
  \ifnum#6=0
    \node[#4] (#3) at (#1, #2) {$\nodevalue$};
  \else
    \node[#4] (#3) at (#1, #2) {$\overline{\nodevalue}$};
  \fi
}
\newcommand{\placevertexwithoutlabel}[4]{
  \node[#4] (#3) at (#1, #2) {};
}
\newcommand{\drawedge}[3]{
  % Extract row and column from vertex names
  \pgfmathsetmacro{\rowone}{int(substr("#1",1,1))}
  \pgfmathsetmacro{\colone}{int(substr("#1",3,10))}
  \pgfmathsetmacro{\rowtwo}{int(substr("#2",1,1))}
  \pgfmathsetmacro{\coltwo}{int(substr("#2",3,10))}

  % Check if edge is between row 1 and 3
  \pgfmathsetmacro{\iscurved}{(abs(\rowone-\rowtwo)==2) ? 1 : 0}

  \ifnum\iscurved=1
    % Curved edge between row 1 and 3
    \pgfmathsetmacro{\coldiff}{\coltwo-\colone}
    \ifnum\coldiff<0
      \draw[#3, bend right=10] (#1) to (#2);
    \else
      \draw[#3, bend left=10] (#1) to (#2);
    \fi
  \else
    % Straight edge
    \draw[#3] (#1) -- (#2);
  \fi
}
\newcommand{\getvertexstyle}[5]{%
  \pgfmathsetmacro{\iscenter}{(#1==#3 && #2==#4) ? 1 : 0}%
  \ifnum\iscenter=1
    clawcenter%
  \else
    \pgfmathsetmacro{\isleaf}{0}%
    \foreach \leaf in #5 {%
      \pgfmathparse{(\leaf == 10*#1+#2) ? 1 : \isleaf}%
      \xdef\isleaf{\pgfmathresult}%
    }%
    \pgfmathsetmacro{\isleafint}{int(\isleaf)}%
    \ifnum\isleafint=1
      clawleaves%
    \else
      vertex%
    \fi
  \fi
}
\newcommand{\curveratio}{13.5}
\newcommand{\curveratioright}{10}
\tikzset{
	 triangular lattice small/.style={
		vertexminimal/.style={circle, minimum size=6mm, inner sep=0pt},
		vertex/.style={vertexminimal, draw},
		vertexleft/.style={vertexminimal,draw=MidnightBlue,line width=2.5pt},
		vertexright/.style={vertexminimal,draw=OliveGreen,line width=2.5pt},
		rearrangeablevertex/.style={vertex,  line width=2pt, draw=Apricot},
		frozeneven/.style={vertex, minimum size=5.25mm,rectangle,  line width=1pt},
		specialodd/.style={vertex, draw=CadetBlue,line width=2pt},
		clawcenter/.style={
			vertexminimal,
			text=white,
			fill=CadetBlue,
			font=\boldmath,
		},
		clawcenter_small/.style={
			circle, minimum size=4mm, inner sep=0pt,
			fill=CadetBlue,
		},
		clawleaves/.style={
			vertexminimal,
			fill=OliveGreen,
			font=\boldmath,
			text=white
		},
		clawleaves_small/.style={
			circle, minimum size=4mm, inner sep=0pt,
			fill=OliveGreen,
		},
		edge/.style={thick, draw=Gray},
			edged/.style={thick, draw, line width=1.25pt},
		edgefar/.style={dashed, draw=Gray, bend left=\curveratio},
			edgefar_right/.style={dashed, draw=Gray, bend right=\curveratioright},
		clawedge/.style={line width=3pt, draw=OliveGreen},
			extendedge/.style={line width=3pt, draw=RoyalBlue},
		curved/.style={bend left=\curveratio},
		curved_right/.style={bend right=\curveratioright},
		highlightededge/.style={line width=2pt, draw},
			scale=1,
			groupellipse/.style={dotted, thick},
			potensialedge/.style={dotted, ultra thick},
			vertexgreen/.style={vertex,draw=OliveGreen,line width=2pt},
			vertexblue/.style={vertex,draw=CadetBlue,line width=2pt},
		vs/.style={circle,draw = CadetBlue,minimum size=2mm,inner sep=1mm, line width=0.5mm}
	},
}
\newcommand{\drawpartdotted}[2]{
  \draw[edged, ultra thick] (#1) -- ($(#1)!0.7!(#2)$);
  \draw[edged, dotted, ultra thick] ($(#1)!0.3!(#2)$) -- (#2);
}
\newcommand{\frustrationgraph}[3]{

% Define bounds

\pgfmathsetmacro{\minvalone}{#1-1}
\pgfmathsetmacro{\minvaltwo}{#1}
\pgfmathsetmacro{\minvalthree}{#1+1}

\pgfmathsetmacro{\maxvalone}{#2+1}
\pgfmathsetmacro{\maxvaltwo}{#2+2}
\pgfmathsetmacro{\maxvalthree}{#2+3}

\pgfmathsetmacro{\start}{int(#1/2)}
\pgfmathsetmacro{\last}{int(#2/2)+1}

\pgfmathsetmacro{\lastminusone}{int(#2/2)}
\pgfmathsetmacro{\lastplussone}{int(#2/2)+2}

\def\offsetinfunc{#3}

% Place vertices for visualization
\pgfmathsetmacro{\yrowone}{0}
\pgfmathsetmacro{\yrowtwo}{-1.732}
\pgfmathsetmacro{\yrowthree}{-3.464}

\foreach \i in {\start,...,\last} {
  \pgfmathsetmacro{\firstv}{int(2*\i-1)}
	\pgfmathsetmacro{\secondv}{int(2*\i)}
	\pgfmathsetmacro{\thirdv}{int(2*\i+1)}
	\placevertex{\firstv}{\offsetinfunc}{\yrowone}{vertex}
	\placevertex{\secondv}{\offsetinfunc}{\yrowtwo}{vertexminimal}
	\placevertexbar{\thirdv}{\offsetinfunc}{\yrowthree}{vertex}
}

% METHOD 1: Improved loops with boundary checking

% Horizontal edges in row 1
\foreach \i in {\start,...,\lastminusone} {
  \pgfmathsetmacro{\firstv}{int(2*\i-1)}
  \pgfmathsetmacro{\firstvnext}{int(2*\i+1)}
	\pgfmathsetmacro{\secondv}{int(2*\i)}
  \pgfmathsetmacro{\secondvnext}{int(2*\i+2)}
	\pgfmathsetmacro{\thirdv}{int(2*\i+1)}
  \pgfmathsetmacro{\thirdvnext}{int(2*\i+3)}
  \draw[edge] (v\firstv) -- (v\firstvnext);
	\draw[edge] (v\secondv) -- (v\secondvnext);
	\draw[edge] (bar\thirdv) -- (bar\thirdvnext);
}

\foreach \i in {\start,...,\last} {
  \pgfmathsetmacro{\firstv}{int(2*\i-1)}
	\pgfmathsetmacro{\secondv}{int(2*\i)}
	\pgfmathsetmacro{\thirdv}{int(2*\i+1)}
  \draw[edge] (v\firstv) -- (v\secondv) -- (bar\thirdv);
}

% Horizontal edges in row 1
\foreach \i in {\start,...,\lastplussone} {
  \pgfmathsetmacro{\firstv}{int(2*\i-1)}
	\pgfmathsetmacro{\secondv}{int(2*\i-2)}
	\pgfmathsetmacro{\thirdv}{int(2*\i-3)}
	\ifnum\secondv>\minvaltwo
		\ifnum\firstv<\maxvalone
    	\draw[edge] (v\firstv) -- (v\secondv);
		\fi
  \fi
	\ifnum\thirdv>\minvalthree
		\ifnum\secondv<\maxvaltwo
    	\draw[edge] (v\secondv) -- (bar\thirdv);
		\fi
  \fi
}

% Horizontal edges in row 1
\foreach \i in {\start,...,\last} {
  \pgfmathsetmacro{\firstv}{int(2*\i-1)}
	\pgfmathsetmacro{\thirdv}{int(2*\i-1)}
	\pgfmathsetmacro{\thirdvnext}{int(2*\i+1)}
	\pgfmathsetmacro{\thirdvnextnext}{int(2*\i+3)}
	\pgfmathsetmacro{\thirdvnextnextnext}{int(2*\i+5)}
		\ifnum\thirdv<\maxvalthree
			\ifnum\thirdv>\minvalthree
    		\draw[edgefar, curved_right] (v\firstv) to (bar\thirdv);
			\fi
		\fi

		\ifnum\thirdvnext<\maxvalthree
			\ifnum\thirdvnext>\minvalthree
    		\draw[edgefar] (v\firstv) to (bar\thirdvnext);
			\fi
		\fi

		\ifnum\thirdvnextnext<\maxvalthree
			\ifnum\thirdvnextnext>\minvalthree
    		\draw[edgefar] (v\firstv) to (bar\thirdvnextnext);
			\fi
		\fi

		\ifnum\thirdvnextnextnext<\maxvalthree
			\ifnum\thirdvnextnextnext>\minvalthree
    		\draw[edgefar] (v\firstv) to (bar\thirdvnextnextnext);
			\fi
		\fi

}

}
\begin{document}

% TODO: write your article's title here.
% The article title is centered, Large boldface, and should fit in two lines
\begin{center}{\Large \textbf{
			A free fermions in disguise model with claws
		}}\end{center}

% TODO: write the author list here. Use first name (+ other initials) + surname format.
% Separate subsequent authors by a comma, omit comma and use "and" for the last author.
% Mark the corresponding author with a superscript star.
\begin{center}
	Kohei Fukai\textsuperscript{1,2$\star$},
	Istv\'an Vona\textsuperscript{3,4$\dag$} and
	Bal\'azs Pozsgay\textsuperscript{3$\ddag$}
\end{center}

\title{A free fermions in disguise model with claws}
\author[1,2]{Kohei Fukai \thanks{kohei.fukai@riken.jp}}
\author[3,4]{Istv\'an Vona \thanks{vona.istvan@wigner.hun-ren.hu}}
\author[3]{Bal\'azs Pozsgay \thanks{pozsgay.balazs@ttk.elte.hu}}

% TODO: write all affiliations here.
% Format: institute, city, country
\begin{center}
	{\bf 1} Department of Physics, Graduate School of Science, \authorcr The University of Tokyo, \authorcr7-3-1, Hongo, Bunkyo-ku, Tokyo, 113-0033, Japan
	\\
	{\bf 2} RIKEN iTHEMS, Wako, Saitama 351-0198, Japan
	\\
	{\bf 3} MTA-ELTE ``Momentum'' Integrable Quantum Dynamics Research Group, \authorcr ELTE E\"otv\"os Lor\'and University,\authorcr P\'azm\'any P. s\'et\'any 1/A, H-1117 Budapest, Hungary
	\\
	{\bf 4}  Holographic Quantum Field Theory Research Group,\authorcr
	HUN-REN Wigner Research Centre for Physics, Budapest, Hungary
	\\
	% TODO: provide email address of corresponding author
	${}^\star$ {\small \sf kohei.fukai@riken.jp},
	${}^\dag$ {\small \sf vona.istvan@wigner.hun-ren.hu},
	${}^\ddag$ {\small \sf pozsgay.balazs@ttk.elte.hu}
\end{center}

\begin{center}
	\today
\end{center}

% For convenience during refereeing (optional),
% you can turn on line numbers by uncommenting the next line:
%\linenumbers
% You should run LaTeX twice in order for the line numbers to appear.

\section*{Abstract}
 {\bf
  \noindent
  Recently, several spin chain models have been discovered that admit solutions in terms of "free fermions in disguise."
  A graph-theoretical treatment of such models was also established, giving sufficient conditions for free fermionic solvability.
  These conditions involve a particular property of the so-called frustration graph of the Hamiltonian, namely that it must be claw-free.
  Additionally, one set of sufficient conditions also requires the absence of so-called even holes.
  In this paper, we present a model with disguised free fermions where the frustration graph contains both claws and even holes.
  Special relations between coupling constants ensure that the free fermionic property still holds.
  \red{Notably, the central elements associated with the even holes can be removed by fixing the gauge, revealing our model to be an integrable deformation within the original algebra of free fermions in disguise.}
  The transfer matrix of this model can be factorized in a special case, thereby proving the conjectured free fermionic nature of a special quantum circuit published recently by two of the present authors.
   }

% TODO: include a table of contents (optional)
% Guideline: if your paper is longer that 6 pages, include a TOC
% To remove the TOC, simply cut the following block
\vspace{10pt}
\noindent\rule{\textwidth}{1pt}
\tableofcontents\thispagestyle{fancy}
\noindent\rule{\textwidth}{1pt}
\vspace{10pt}

\section{Introduction}
The quest for exactly solvable quantum many-body systems has driven much of the progress in theoretical physics.
Among the most tractable classes are systems mappable to free fermions, where the absence of interactions permits exact computation of many physical observables.
The Jordan-Wigner transformation \cite{jordan-wigner} stands as the archetypal example, converting local spin variables into fermionic operators and rendering numerous spin chain models exactly solvable \cite{XX-original,Schulz-Mattis-Lieb}.

Recent advances have expanded our understanding of free fermionic solvability beyond the traditional Jordan-Wigner paradigm.
In~\cite{fendley-fermions-in-disguise}, Fendley discovered the ``free fermions in disguise'' (FFD) model, a spin chain that appears to have genuine four-fermion interactions, yet possesses hidden free fermionic structures.
The model exhibits the remarkable property of remaining free fermionic for arbitrary coupling constants, including fully disordered configurations.

Subsequently, graph-theoretical frameworks were established to characterize free fermionic solvability systematically.
Two independent works \cite{chapman-jw} and \cite{japan-free-fermion-JW} first derived graph-theoretical criteria for generalized Jordan-Wigner transformations.
Afterwards, a comprehensive framework was established in \cite{fermions-behind-the-disguise,unified-graph-th}, which unified both Jordan-Wigner and FFD models.
Extensions to parafermionic commutation relations were treated in \cite{alcaraz-medium-fermion-1,alcaraz-medium-fermion-2,free-parafermionic-graphs}.

Importantly, the papers \cite{fermions-behind-the-disguise,unified-graph-th} derived sufficient conditions for free fermionic solvability that apply for all possible choices of the coupling constants.
However, these conditions are not necessary, as the original works themselves noted~\cite{fermions-behind-the-disguise,unified-graph-th}.
This leaves room for models that do not satisfy all of these conditions, while maintaining free fermionic structures through special selections of the coupling constants.
One of the sufficient conditions is the so-called ``claw-free'' property of the frustration graph.
A concrete example of a free fermionic model violating this condition was given in \cite{sajat-FP-model}, which achieved solvability through a particular extension of the FFD algebra.

In this work, we present a further example of a free fermionic model ``with claws''.
Most notably, our model also contains ``even holes'', which pose additional algebraic challenges for constructing free fermionic solutions~\cite{unified-graph-th}.
\red{Around each even hole, the product of the participating operators is central and generates a ``gauge symmetry'', whose effect can be removed by fixing the gauge.}
\red{The model of \cite{sajat-FP-model} mentioned above in fact also contains even holes; there, however, the gauge is fixed from the outset, so the even-hole structure remains implicit.}
\red{In the present work, we treat this even-hole gauge structure explicitly and show that gauge fixing yields a nontrivial integrable deformation of Fendley's FFD Hamiltonian~\cite{fendley-fermions-in-disguise}.}

The model Hamiltonian was inspired by the recent work \cite{sajat-floquet} of two of the present authors, where selected quantum circuits were conjectured to have free fermionic solvability.
We prove one of these conjectures by showing that the transfer matrix of our model becomes proportional to the quantum circuit in question for a special choice of the spectral parameter.

The solvability is achieved through carefully constructed algebraic relations among the coupling constants that compensate for the graph-theoretical obstructions.
We derive these relations by introducing an auxiliary extension of the FFD algebra subject to additional algebraic conditions.
\red{After fixing the gauge, this extended algebra can be realized in terms of Fendley's original FFD generators, so that the resulting Hamiltonian lies within the original FFD algebra.}
We successfully solve the inverse problem by expressing local operators in terms of free fermionic modes, following the methodology of \cite{sajat-ffd-corr}.
We further compute real-time correlation functions for selected operators.
The number of fermionic eigenmodes in our model is the same as in the original FFD model, resulting in an exponentially large symmetry algebra analogous to that recently characterized for the FFD case in \cite{eric-lorenzo-ffd-1}.

The structure of this paper is as follows.
In Sec.~\ref{sec:H}, we review previous progress on the FFD and introduce a new free fermionic Hamiltonian.
In Sec.~\ref{sec:extended-FFD}, we propose an extension of the FFD algebra and derive the new Hamiltonian from this extended algebra.
In Sec.~\ref{sec:charges}, we provide a family of commuting conserved charges.
We introduce the transfer matrix in Sec.~\ref{sec:transfer} and the fermionic eigenmodes in Sec.~\ref{sec:fermion}.
In Sec.~\ref{sec:corr}, we solve the inverse problem and compute correlation functions.
In Sec.~\ref{sec:fact}, we discuss the factorization of the transfer matrix, and finally in Sec.~\ref{sec:concl}, we present our conclusions.
Certain technical parts of the proofs are presented in the Appendices.

\section{The FFD algebra}
\label{sec:H}
In this section, we first review the graph-theoretical framework of the FFD model~\cite{fermions-behind-the-disguise} and then introduce our spin chain Hamiltonian, explaining how it extends Fendley's original Hamiltonian~\cite{fendley-fermions-in-disguise}.
In later sections, we present the frustration graph for our FFD Hamiltonian, which contains both claws and even holes, and prove that the Hamiltonian is integrable and can be solved using free fermionic operators.

\subsection{Review of the FFD algebra}
We first review the algebra introduced by Fendley in~\cite{fendley-fermions-in-disguise}, which we dub FFD algebra.
This algebra is made of the generators $\qty{h_j}_{j=1}^{M}$, where $M\ge 1$ is some integer.
The squares of the generators are scalars\footnote{Some of the previous work on the topic used a different convention, namely to have $h_m^2=1$ and to add the coupling constants separately to the Hamiltonian.}:
\begin{align}
	h_m^2 = b_m^2
	\,,
\end{align}
where $b_m$ are arbitrary coupling constants.
The generators are anticommuting when the indices are neighboring or next to neighboring, and commuting otherwise:
\begin{align}
	h_m h_{m+1} & = - h_{m+1} h_{m}                   \\
	h_m h_{m+2} & = - h_{m+2} h_{m}                   \\
	h_m h_{l}   & = h_{l} h_{m} \quad(\abs{l-m}>2)\,.
\end{align}
There is no periodicity condition for the indices, therefore the algebra naturally describes models with open boundary conditions.
The FFD algebra is a particular case of a generalized Clifford algebra.
\begin{figure}[tbp]
	\centering
	\begin{tikzpicture}[triangular lattice small, vs/.style={circle,draw = CadetBlue,minimum size=2mm,inner sep=1mm, line width=0.5mm}]

		\pgfmathsetmacro{\yrowone}{0}
		\pgfmathsetmacro{\yrowtwo}{-1.732}
		\pgfmathsetmacro{\yrowthree}{-3.464}

		% First row vertices (odd numbers)
		\foreach \i in {1,...,7} {
				\pgfmathsetmacro{\v}{int(2*\i-1)}
				\pgfmathsetmacro{\vlabel}{int(2*\i-1)}
				\node[vs] (v\v) at (\v, \yrowone) {};
				\node[above = 1mm of v\v] {$h_{\vlabel}$};
			}

		% Second row vertices (even numbers)
		\foreach \i in {1,...,6} {
				\pgfmathsetmacro{\v}{int(2*\i)}
				\pgfmathsetmacro{\vlabel}{int(2*\i)}
				\node[vs] (v\v) at (\v, \yrowtwo) {};
				\node[below = 1mm of v\v] {$h_{\vlabel}$};
			}

		\foreach \i in {1,...,11} {
				\pgfmathsetmacro{\inext}{int(\i+2)}
				\draw[edge] (v\i) -- (v\inext);
			}
		\foreach \i in {1,...,12} {
				\pgfmathsetmacro{\inext}{int(\i+1)}
				\draw[edge] (v\i) -- (v\inext);
			}

	\end{tikzpicture}

	\caption{Frustration graph for the original FFD model ($M=13$ case).
		Each vertex represents a generator of the FFD algebra.
		Vertices are connected by an edge when the corresponding generators anticommute.}
	\label{fig:ffd-frustration-graph}
\end{figure}
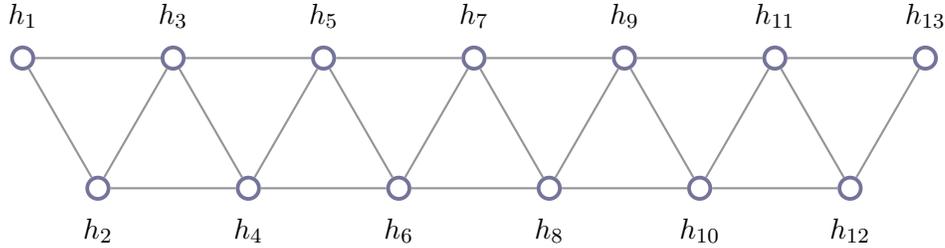
%%%%%%%%%%%%%%%%%%5
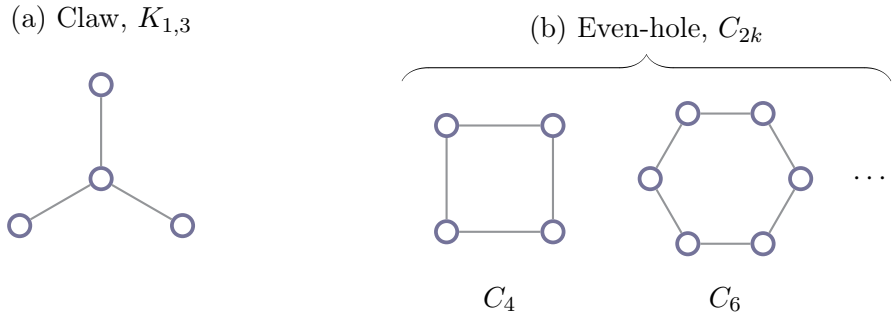
\begin{figure}
	\centering
	\begin{tikzpicture}[
			node/.style={circle,draw = CadetBlue,minimum size=2mm,inner sep=1mm, line width=0.5mm}
		]
		% (b) Claw
		\begin{scope}[xshift=0]
			% Clawの上にキャプション
			\node at (0,2.1) {(a) Claw, $K_{1,3}$};

			% Clawの中心
			\node[node] (b1) at (0,0) {};
			% Clawの3つの葉
			\node[node] (b2) at ({1.25*cos(90)}, {1.25*sin(90)}) {};
			\node[node] (b3) at ({1.25*cos(210)}, {1.25*sin(210)}) {};
			\node[node] (b4) at ({1.25*cos(-30)}, {1.25*sin(-30)}) {};
			% Clawの辺
			\draw[thick,draw = Gray] (b1) -- (b2);
			\draw[thick,draw = Gray] (b1) -- (b3);
			\draw[thick,draw = Gray] (b1) -- (b4);
			% ラベル
		\end{scope}

		% ClawとEven-holeを分ける線
		%\draw[thick] (3.7, -2) -- (3.7, 2.75);

		% (c) Even Holes (C4とC6)
		\begin{scope}[xshift=7cm]
			% Even-holeの上に波括弧（上下逆に）とキャプション
			\draw[decorate,decoration={calligraphic brace,amplitude=10pt,mirror}]
			(3.5,1.3) -- (-3,1.3) node[midway,above=10pt] {(b) Even-hole, $C_{2k}$};

			% C4（正方形）
			\begin{scope}[xshift=-1.7cm]
				\foreach \i in {1,...,4} {
						\node[node] (c\i) at ({cos(90*\i+45)}, {sin(90*\i+45)}) {};
					}
				\draw[thick,draw = Gray] (c1) -- (c2) -- (c3) -- (c4) -- (c1);
				\node at (0,-1.6) {$C_{4}$};
			\end{scope}

			% C6（六角形）
			\begin{scope}[xshift=1.3cm]
				\foreach \i in {1,...,6} {
						\node[node] (d\i) at ({1.*cos(60*\i)}, {1.*sin(60*\i)}) {};
					}
				\draw[thick,draw = Gray] (d1) -- (d2) -- (d3) -- (d4) -- (d5) -- (d6) -- (d1);
				\node at (0,-1.6) {$C_{6}$};
			\end{scope}

			% C6の横に \cdots を追加
			\node at (3.25,0) {$\cdots$};

		\end{scope}
	\end{tikzpicture}
	\caption{
		Forbidden structures in a frustration graph for free fermionic solutions of~\cite{fermions-behind-the-disguise}: (a) the claw $K_{1,3}$ and (b) an even hole $C_{2k}$.
		A graph is claw-free or even-hole-free if none of its induced subgraphs contain a claw or an even hole.
		The frustration graph in Figure~\ref{fig:ffd-frustration-graph} is (even-hole, claw)-free.
	}
	\label{fig:even-hole-claw}
\end{figure}

The standard presentation of the FFD algebra can be given for a spin-1/2 chain as follows.
Let the Pauli matrix acting on the $j$-th site be denoted by $\sigma_{j}^{\alpha}$, where $\alpha \in \{x,y,z\}$ indicates the component of the Pauli matrix.
Then we have
\begin{align}
	h_m = b_m \sigma_{m}^{z} \sigma_{m+1}^{z} \sigma_{m+2}^{x}.
	\label{eq:zzx-rep}
\end{align}
For other presentations, see \cite{fendley-fermions-in-disguise,sajat-floquet}.

The original FFD Hamiltonian of Fendley~\cite{fendley-fermions-in-disguise} reads simply
\begin{equation}
	H_{\mathrm{FFD}}=\sum_{m=1}^M h_m,
	\label{eq:original-FFD-Hamiltonian}
\end{equation}
where the Hamiltonian is the sum of all FFD generators.
The higher-order charges are constructed from products of FFD generators~\cite{fendley-fermions-in-disguise}.

We next explain the frustration graph.
Generally, the frustration graph is defined for spin chain Hamiltonians where the individual terms either commute or anticommute, and this condition is fulfilled if the Hamiltonian is expressed as a sum of products of Pauli matrices.
Each term in the Hamiltonian corresponds to a vertex in the graph, and two vertices are connected if and only if the two terms in question anticommute.
For an extension to the more general case of parafermionic commutation relations, see~\cite{free-parafermionic-graphs,para-frustration-graph-2}.

The frustration graph for the FFD algebra is shown in Figure~\ref{fig:ffd-frustration-graph}.
%The vertices represent the generators of the FFD algebra, and those which anticommute are connected by edges.
This graph has the property of being \emph{(even-hole, claw)-free}~\cite{fermions-behind-the-disguise}.
The structures of even holes and the claw are illustrated in Figure~\ref{fig:even-hole-claw}.
The seminal result in~\cite{fermions-behind-the-disguise} is that any Hamiltonian with a frustration graph that is (even-hole, claw)-free can be solved by free fermions in disguise~\cite{fendley-fermions-in-disguise}.
Thus, we can solve the FFD Hamiltonian~\eqref{eq:original-FFD-Hamiltonian} using these methods.

In~\cite{unified-graph-th}, the framework of (even-hole, claw)-free graphs was generalized to the \emph{simplicial, claw-free} case, which can admit even holes in the frustration graph.
\red{Nevertheless, the free-fermionic solvability beyond the claw-free setting remains much less developed, especially when claws and even holes coexist.}

\red{In the following, we present a new hidden free-fermion model whose frustration graph contains both claws and even holes and which can be constructed essentially within Fendley's original FFD algebra.}

\subsection{A new integrable deformation of the FFD Hamiltonian}
We propose a new integrable Hamiltonian in terms of the FFD algebra:
\begin{equation}
	H_M = \sum_{m=1}^{M} h_m + \sum_{m=2}^{\floor{M/2}}  \beta_{2m-3} h_{2m-2} h_{2m-3} h_{2m}\,,
	\label{eq:new-model}
\end{equation}
where $\{\beta_1, \beta_3, \beta_5,\ldots\}$ are additional coupling constants that satisfy the relations
\begin{align}
	\beta_{2m+1}
	=
	\frac{\beta_{2m-1}}{b_{2m-2}^2\beta_{2m-3} - b_{2m+2}^2 \beta_{2m-1} + 1}
	\quad(m \ge 2)
	\label{eq:beta_relation}
	\,.
\end{align}
The initial values for the recursion, $\beta_{1}$ and $\beta_{3}$ can be chosen arbitrarily.
The ordering of the triple product in~\eqref{eq:new-model} is adopted for the purpose of simplification presented below.

The new model~\eqref{eq:new-model} can be seen as a deformation of the original FFD model~\cite{fendley-fermions-in-disguise}.
The first term in~\eqref{eq:new-model} is equal to the Hamiltonian of the original FFD model, and the second term is the new interaction term, which retains the free fermionic solvability.
The original FFD Hamiltonian~\eqref{eq:original-FFD-Hamiltonian} is reproduced in the limit $\beta_{2m-1}\equiv 0$.

In the next section, we will show the frustration graph for the Hamiltonian~\eqref{eq:new-model} and demonstrate that it contains both claws and even holes.
We will also derive this new integrable Hamiltonian from an algebraic extension of the original FFD algebra.

\section{The extended FFD algebra \label{sec:extended-FFD}}
\red{In this section, we introduce the extended FFD algebra, the resulting model, and its frustration graph.}
\red{The proof of its free-fermionic solvability is given later: the conserved charges are constructed in Section~\ref{sec:charges}, the transfer matrix is introduced in Section~\ref{sec:transfer}, and the fermionic eigenmodes are constructed in Section~\ref{sec:fermion}.}
We first introduce additional generators and the frustration graph for the extended algebra.
\red{We then impose an algebraic relation among them, whose role is crucial in proving the conservation law in Section~\ref{sec:charges} and the mutual commutativity of the charges in Appendix~\ref{app:mutual-commutativity}.}
Finally, we reproduce the new Hamiltonian~\eqref{eq:new-model} as a special case.
We consider the case where the number of generators of the original FFD algebra is odd: $M = 2M' + 1$.
However, our argument can be extended to the even $M$ case with $M = 2M'$.
%%%%%%%%%%%%%%%%%%%%%%%%
%%%%%%%%%%%%%%%%%%%%%%%%
%%%%%%%%%%%%%%%%%%%%%%%%
\begin{figure}[tbp]
	\centering
	\begin{tikzpicture}[triangular lattice small, extendededge/.style={draw=RoyalBlue, bend right=\curveratio, line width=1.25pt}, extendededgeright/.style={draw=RoyalBlue, bend right=\curveratioright, line width=1.25pt}]

		\pgfmathsetmacro{\yrowone}{0}
		\pgfmathsetmacro{\yrowtwo}{-1.732}
		\pgfmathsetmacro{\yrowthree}{-3.464}

		% First row vertices (odd numbers)
		\foreach \i in {1,...,7} {
				\pgfmathsetmacro{\v}{int(2*\i-1)}
				\pgfmathsetmacro{\vlabel}{int(2*\i-1)}
				\node[vs] (v\v) at (\v, \yrowone) {};
				\node[vs] (bar\v) at (\v, \yrowthree) {};
			}

		% Second row vertices (even numbers)
		\foreach \i in {1,...,6} {
				\pgfmathsetmacro{\v}{int(2*\i)}
				\pgfmathsetmacro{\vlabel}{int(2*\i)}
				\node[vs] (v\v) at (\v, \yrowtwo) {};
			}
		\foreach \i in {1,...,11} {
				\pgfmathsetmacro{\inext}{int(\i+2)}
				\draw[edge] (v\i) -- (v\inext);
			}
		\foreach \i in {1,...,12} {
				\pgfmathsetmacro{\inext}{int(\i+1)}
				\draw[edge] (v\i) -- (v\inext);
			}

		\foreach \i in {1,...,6} {
				\pgfmathsetmacro{\v}{int(2*\i-1)}
				\pgfmathsetmacro{\vnext}{int(2*\i+1)}
				\pgfmathsetmacro{\even}{int(2*\i)}
				\draw[edge] (bar\v) -- (bar\vnext);
				\draw[edge] (v\even) -- (bar\vnext);
				\draw[edge] (v\even) -- (bar\v);
			}

		\foreach \i in {1,...,7} {
				\pgfmathsetmacro{\v}{int(2*\i-1)}
				\draw[edgefar, curved_right] (v\v) to (bar\v);
			}

		\foreach \i in {1,...,6} {
				\pgfmathsetmacro{\v}{int(2*\i-1)}
				\pgfmathsetmacro{\vnext}{int(2*\i+1)}
				\draw[edgefar] (v\v) to (bar\vnext);
			}

		\foreach \i in {1,...,5} {
				\pgfmathsetmacro{\v}{int(2*\i-1)}
				\pgfmathsetmacro{\vnext}{int(2*\i+3)}
				\draw[edgefar] (v\v) to (bar\vnext);
			}

		\foreach \i in {1,...,4} {
				\pgfmathsetmacro{\v}{int(2*\i-1)}
				\pgfmathsetmacro{\vnext}{int(2*\i+5)}
				\draw[edgefar] (v\v) to (bar\vnext);
			}

		% First row vertices (odd numbers)
		\foreach \i in {1,...,7} {
				\pgfmathsetmacro{\v}{int(2*\i-1)}
				\pgfmathsetmacro{\vlabel}{int(2*\i-1)}
				\node[above = 1mm of v\v] {$h_{\vlabel}$};
				\node[below = 1mm of bar\v] {$\hbar{\vlabel}$};
			}

		% Second row vertices (even numbers)
		\foreach \i in {1,...,6} {
				\pgfmathsetmacro{\v}{int(2*\i)}
				\pgfmathsetmacro{\vlabel}{int(2*\i)}
				\node[above left = 0.01mm and 0.1mm of v\v] {\small $h_{\vlabel}$};
			}

		\draw[extendededgeright] (v9) to (bar9);
		\draw[extendededge] (bar9) to (v7);
		\draw[extendededge] (bar9) to (v5);
		\draw[extendededge] (bar9) to (v3);

	\end{tikzpicture}

	\caption{
		Frustration graph for the extended FFD algebra $G_{M=13}$.
		Vertices are connected by an edge when the corresponding generators anticommute.
		The edges between $\{h_m\}$ and $\{\hbar{m'}\}$ are represented by dotted lines.
		To highlight the basic structure, we have highlighted the dotted edges from $\hbar{9}$ with solid blue lines.
		The other dotted lines are obtained by translation.
	}
	\label{fig:extended-ffd-frustration-graph}
\end{figure}
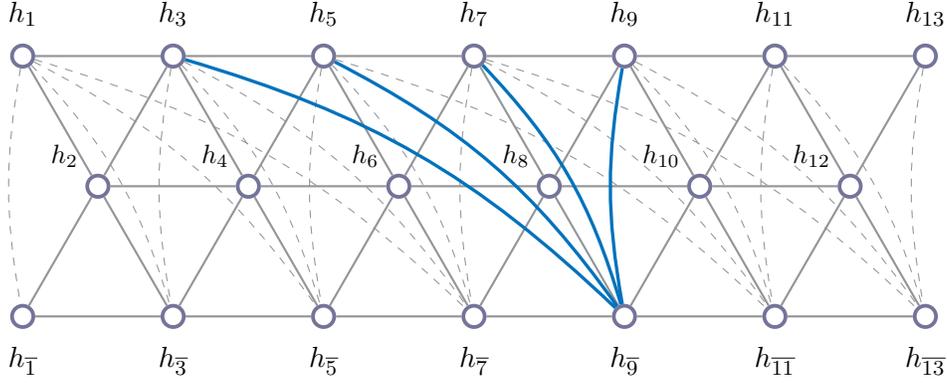
%%%%%%%%%%%%%%%%%%%%%%%%%%%%5
\begin{figure}[tbp]
	\centering
	\begin{tikzpicture}[triangular lattice small]
		% Define common y-coordinates
		\pgfmathsetmacro{\yrowone}{0}
		\pgfmathsetmacro{\yrowtwo}{-1.732}  % -sqrt(3)
		\pgfmathsetmacro{\yrowthree}{-3.464} % -2*sqrt(3)
		\pgfmathsetmacro{\offset}{-2}

		\def\captionx{7}
		\def\captiony{-5}
		\def\shifty{-6.5}

		% Type A claw
		\begin{scope}[shift={(0,0)}]
			\foreach \i in {1,...,6} {
					\pgfmathsetmacro{\v}{int(2*\i-1)}
					\pgfmathsetmacro{\vlabel}{int(2*\i-1)}
					\node[vs] (v\v) at (\v, \yrowone) {};
				}

			\foreach \i in {2,...,7} {
					\pgfmathsetmacro{\v}{int(2*\i-1)}
					\pgfmathsetmacro{\vlabel}{int(2*\i-1)}
					\node[vs] (bar\v) at (\v, \yrowthree) {};
				}

			% Second row vertices (even numbers)
			\foreach \i in {1,...,6} {
					\pgfmathsetmacro{\v}{int(2*\i)}
					\pgfmathsetmacro{\vlabel}{int(2*\i)}
					\node[vs] (v\v) at (\v, \yrowtwo) {};
				}

			% Horizontal edges
			\foreach \i in {1,...,10} {
					\pgfmathsetmacro{\inext}{int(\i+2)}
					\draw[edge] (v\i) -- (v\inext);
				}
			\foreach \i in {1,...,11} {
					\pgfmathsetmacro{\inext}{int(\i+1)}
					\draw[edge] (v\i) -- (v\inext);
				}

			% Diagonal edges
			\foreach \i in {1,...,6} {
					\pgfmathsetmacro{\v}{int(2*\i-1)}
					\pgfmathsetmacro{\vnext}{int(2*\i+1)}
					\pgfmathsetmacro{\even}{int(2*\i)}
					\draw[edge] (v\even) -- (bar\vnext);
				}

			\foreach \i in {2,...,6} {
					\pgfmathsetmacro{\v}{int(2*\i-1)}
					\pgfmathsetmacro{\vnext}{int(2*\i+1)}
					\pgfmathsetmacro{\even}{int(2*\i)}
					\draw[edge] (v\even) -- (bar\v);
				}

			\foreach \i in {2,...,6} {
					\pgfmathsetmacro{\v}{int(2*\i-1)}
					\pgfmathsetmacro{\vnext}{int(2*\i+1)}
					\pgfmathsetmacro{\even}{int(2*\i)}
					\draw[edge] (bar\v) -- (bar\vnext);
				}

			% Curved edges
			\foreach \i in {2,...,6} {
					\pgfmathsetmacro{\v}{int(2*\i-1)}
					\draw[edgefar, curved_right] (v\v) to (bar\v);
				}

			\foreach \i in {1,...,6} {
					\pgfmathsetmacro{\v}{int(2*\i-1)}
					\pgfmathsetmacro{\vnext}{int(2*\i+1)}
					\draw[edgefar] (v\v) to (bar\vnext);
				}

			\foreach \i in {1,...,5} {
					\pgfmathsetmacro{\v}{int(2*\i-1)}
					\pgfmathsetmacro{\vnext}{int(2*\i+3)}
					\draw[edgefar] (v\v) to (bar\vnext);
				}

			\foreach \i in {1,...,4} {
					\pgfmathsetmacro{\v}{int(2*\i-1)}
					\pgfmathsetmacro{\vnext}{int(2*\i+5)}
					\draw[edgefar] (v\v) to (bar\vnext);
				}

			% Left extensions
			\foreach \v in {v1, v2, bar3}{
					\coordinate[left=5.5mm of \v] (l\v);
					\draw[edge] (l\v) -- (\v);
					\coordinate[left=5.5mm of l\v] (ll\v);
					\draw[edge, dotted] (ll\v) -- (l\v);
				}

			% Right extensions
			\foreach \v in {v11, v12, bar13}{
					\coordinate[right=5.5mm of \v] (r\v);
					\draw[edge] (\v) -- (r\v);
					\coordinate[right=5.5mm of r\v] (rr\v);
					\draw[edge, dotted] (r\v) -- (rr\v);
				}

			% Claw structure
			\node[clawcenter_small] (clawcenter) at (5, \yrowone) {};
			\node[clawleaves_small] (clawleaf1) at (4, \yrowtwo) {};
			\node[clawleaves_small] (clawleaf2) at (7, \yrowthree) {};
			\node[clawleaves_small] (clawleaf3) at (11, \yrowthree) {};

			\draw[clawedge] (clawcenter) -- (clawleaf1);
			\draw[clawedge, curved] (clawcenter) to (clawleaf2);
			\draw[clawedge, curved] (clawcenter) to (clawleaf3);

			% Labels
			\node[above=2mm of clawcenter, font=\large] {$h_{2j-1}$};
			\node[above left=1mm and 1mm of clawleaf1, font=\large] {$h_{2j-2}$};
			\node[below=2mm of clawleaf2, font=\large] {$\hbar{2j+1}$};
			\node[below=2mm of clawleaf3, font=\large] {$\hbar{2j+5}$};

			% Caption
			\node at (\captionx,\captiony) {(a) Type A claw at $h_{2j-1}$};
		\end{scope}

		% Type B claw
		\begin{scope}[shift={(0,-7.5)}]
			\foreach \i in {1,...,6} {
					\pgfmathsetmacro{\v}{int(2*\i-1)}
					\pgfmathsetmacro{\vlabel}{int(2*\i-1)}
					\node[vs] (v\v) at (\v, \yrowone) {};
				}

			\foreach \i in {2,...,7} {
					\pgfmathsetmacro{\v}{int(2*\i-1)}
					\pgfmathsetmacro{\vlabel}{int(2*\i-1)}
					\node[vs] (bar\v) at (\v, \yrowthree) {};
				}

			% Second row vertices (even numbers)
			\foreach \i in {1,...,6} {
					\pgfmathsetmacro{\v}{int(2*\i)}
					\pgfmathsetmacro{\vlabel}{int(2*\i)}
					\node[vs] (v\v) at (\v, \yrowtwo) {};
				}

			% Horizontal edges
			\foreach \i in {1,...,10} {
					\pgfmathsetmacro{\inext}{int(\i+2)}
					\draw[edge] (v\i) -- (v\inext);
				}
			\foreach \i in {1,...,11} {
					\pgfmathsetmacro{\inext}{int(\i+1)}
					\draw[edge] (v\i) -- (v\inext);
				}

			% Diagonal edges
			\foreach \i in {1,...,6} {
					\pgfmathsetmacro{\v}{int(2*\i-1)}
					\pgfmathsetmacro{\vnext}{int(2*\i+1)}
					\pgfmathsetmacro{\even}{int(2*\i)}
					\draw[edge] (v\even) -- (bar\vnext);
				}

			\foreach \i in {2,...,6} {
					\pgfmathsetmacro{\v}{int(2*\i-1)}
					\pgfmathsetmacro{\vnext}{int(2*\i+1)}
					\pgfmathsetmacro{\even}{int(2*\i)}
					\draw[edge] (v\even) -- (bar\v);
				}

			\foreach \i in {2,...,6} {
					\pgfmathsetmacro{\v}{int(2*\i-1)}
					\pgfmathsetmacro{\vnext}{int(2*\i+1)}
					\pgfmathsetmacro{\even}{int(2*\i)}
					\draw[edge] (bar\v) -- (bar\vnext);
				}

			% Curved edges
			\foreach \i in {2,...,6} {
					\pgfmathsetmacro{\v}{int(2*\i-1)}
					\draw[edgefar, curved_right] (v\v) to (bar\v);
				}

			\foreach \i in {1,...,6} {
					\pgfmathsetmacro{\v}{int(2*\i-1)}
					\pgfmathsetmacro{\vnext}{int(2*\i+1)}
					\draw[edgefar] (v\v) to (bar\vnext);
				}

			\foreach \i in {1,...,5} {
					\pgfmathsetmacro{\v}{int(2*\i-1)}
					\pgfmathsetmacro{\vnext}{int(2*\i+3)}
					\draw[edgefar] (v\v) to (bar\vnext);
				}

			\foreach \i in {1,...,4} {
					\pgfmathsetmacro{\v}{int(2*\i-1)}
					\pgfmathsetmacro{\vnext}{int(2*\i+5)}
					\draw[edgefar] (v\v) to (bar\vnext);
				}

			% Left extensions
			\foreach \v in {v1, v2, bar3}{
					\coordinate[left=5.5mm of \v] (l\v);
					\draw[edge] (l\v) -- (\v);
					\coordinate[left=5.5mm of l\v] (ll\v);
					\draw[edge, dotted] (ll\v) -- (l\v);
				}

			% Right extensions
			\foreach \v in {v11, v12, bar13}{
					\coordinate[right=5.5mm of \v] (r\v);
					\draw[edge] (\v) -- (r\v);
					\coordinate[right=5.5mm of r\v] (rr\v);
					\draw[edge, dotted] (r\v) -- (rr\v);
				}

			% Claw structure
			\node[clawcenter_small] (clawcenter) at (5, \yrowone) {};
			\node[clawleaves_small] (clawleaf1) at (3, \yrowone) {};
			\node[clawleaves_small] (clawleaf2) at (6, \yrowtwo) {};
			\node[clawleaves_small] (clawleaf3) at (11, \yrowthree) {};

			\draw[clawedge] (clawcenter) -- (clawleaf1);
			\draw[clawedge] (clawcenter) -- (clawleaf2);
			\draw[clawedge, curved] (clawcenter) to (clawleaf3);

			% Labels
			\node[above=2mm of clawcenter, font=\large] {$h_{2j-1}$};
			\node[above=2mm of clawleaf1, font=\large] {$h_{2j-3}$};
			\node[below right=0.2mm and 1mm of clawleaf2, font=\large] {$h_{2j}$};
			\node[below=2mm of clawleaf3, font=\large] {$\hbar{2j+5}$};

			% Caption
			\node at (\captionx,\captiony) {(b) Type B claw at $h_{2j-1}$};
		\end{scope}
	\end{tikzpicture}
	\caption{
		Structures of claws centered at $h_{2j-1}$.
		The bold green edges connect the claw center (blue circle) to its three leaves (green circles), while gray edges indicate all other edges in the frustration graph.
		(a) Type A claw: leaves are located at vertices $h_{2j-2}$, $\hbar{2j+1}$, and $\hbar{2j+5}$.
		(b) Type B claw: leaves are located at vertices $h_{2j-3}$, $h_{2j}$, and $\hbar{2j+5}$.
		Note that both claw types share the common leaf $\hbar{2j+5}$.
		For a more detailed structural analysis, see Figure~\ref{fig:claw-configurations}.
	}
	\label{fig:claw}
\end{figure}
%%%%%%%%%%%%%%%%%%%%%%%%%%%%5
\begin{figure}[tbp]
	\centering
	\begin{tikzpicture}[triangular lattice small]
		% Define common y-coordinates
		\pgfmathsetmacro{\yrowone}{0}
		\pgfmathsetmacro{\yrowtwo}{-1.732}  % -sqrt(3)
		\pgfmathsetmacro{\yrowthree}{-3.464} % -2*sqrt(3)
		\pgfmathsetmacro{\offset}{-2}

		\def\captionx{7}
		\def\captiony{-5}
		\def\shifty{-6.5}

		% Type A claw
		\begin{scope}[shift={(0,0)}]
			\foreach \i in {1,...,6} {
					\pgfmathsetmacro{\v}{int(2*\i-1)}
					\pgfmathsetmacro{\vlabel}{int(2*\i-1)}
					\node[vs] (v\v) at (\v, \yrowone) {};
				}

			\foreach \i in {2,...,7} {
					\pgfmathsetmacro{\v}{int(2*\i-1)}
					\pgfmathsetmacro{\vlabel}{int(2*\i-1)}
					\node[vs] (bar\v) at (\v, \yrowthree) {};
				}

			% Second row vertices (even numbers)
			\foreach \i in {1,...,6} {
					\pgfmathsetmacro{\v}{int(2*\i)}
					\pgfmathsetmacro{\vlabel}{int(2*\i)}
					\node[vs] (v\v) at (\v, \yrowtwo) {};
				}

			% Horizontal edges
			\foreach \i in {1,...,10} {
					\pgfmathsetmacro{\inext}{int(\i+2)}
					\draw[edge] (v\i) -- (v\inext);
				}
			\foreach \i in {1,...,11} {
					\pgfmathsetmacro{\inext}{int(\i+1)}
					\draw[edge] (v\i) -- (v\inext);
				}

			% Diagonal edges
			\foreach \i in {1,...,6} {
					\pgfmathsetmacro{\v}{int(2*\i-1)}
					\pgfmathsetmacro{\vnext}{int(2*\i+1)}
					\pgfmathsetmacro{\even}{int(2*\i)}
					\draw[edge] (v\even) -- (bar\vnext);
				}

			\foreach \i in {2,...,6} {
					\pgfmathsetmacro{\v}{int(2*\i-1)}
					\pgfmathsetmacro{\vnext}{int(2*\i+1)}
					\pgfmathsetmacro{\even}{int(2*\i)}
					\draw[edge] (v\even) -- (bar\v);
				}

			\foreach \i in {2,...,6} {
					\pgfmathsetmacro{\v}{int(2*\i-1)}
					\pgfmathsetmacro{\vnext}{int(2*\i+1)}
					\pgfmathsetmacro{\even}{int(2*\i)}
					\draw[edge] (bar\v) -- (bar\vnext);
				}

			% Curved edges
			\foreach \i in {2,...,6} {
					\pgfmathsetmacro{\v}{int(2*\i-1)}
					\draw[edgefar, curved_right] (v\v) to (bar\v);
				}

			\foreach \i in {1,...,6} {
					\pgfmathsetmacro{\v}{int(2*\i-1)}
					\pgfmathsetmacro{\vnext}{int(2*\i+1)}
					\draw[edgefar] (v\v) to (bar\vnext);
				}

			\foreach \i in {1,...,5} {
					\pgfmathsetmacro{\v}{int(2*\i-1)}
					\pgfmathsetmacro{\vnext}{int(2*\i+3)}
					\draw[edgefar] (v\v) to (bar\vnext);
				}

			\foreach \i in {1,...,4} {
					\pgfmathsetmacro{\v}{int(2*\i-1)}
					\pgfmathsetmacro{\vnext}{int(2*\i+5)}
					\draw[edgefar] (v\v) to (bar\vnext);
				}

			% Left extensions
			\foreach \v in {v1, v2, bar3}{
					\coordinate[left=5.5mm of \v] (l\v);
					\draw[edge] (l\v) -- (\v);
					\coordinate[left=5.5mm of l\v] (ll\v);
					\draw[edge, dotted] (ll\v) -- (l\v);
				}

			% Right extensions
			\foreach \v in {v11, v12, bar13}{
					\coordinate[right=5.5mm of \v] (r\v);
					\draw[edge] (\v) -- (r\v);
					\coordinate[right=5.5mm of r\v] (rr\v);
					\draw[edge, dotted] (r\v) -- (rr\v);
				}

			% Claw structure
			\node[clawcenter_small] (clawcenter) at (9, \yrowthree) {};
			\node[clawleaves_small] (clawleaf1) at (3, \yrowone) {};
			\node[clawleaves_small] (clawleaf2) at (7, \yrowone) {};
			\node[clawleaves_small] (clawleaf3) at (10, \yrowtwo) {};

			\draw[clawedge, curved] (clawleaf1) to (clawcenter);
			\draw[clawedge, curved] (clawleaf2) to (clawcenter);
			\draw[clawedge] (clawcenter) -- (clawleaf3);

			% Labels
			\node[below=2mm of clawcenter, font=\large] {$\hbar{2j+3}$};
			\node[above=2mm of clawleaf1, font=\large] {$h_{2j-3}$};
			\node[above=2mm of clawleaf2, font=\large] {$h_{2j+1}$};
			\node[above right=-2mm and 0.75mm of clawleaf3, font=\large] {$h_{2j+4}$};

			% Caption
			\node at (\captionx,\captiony) {(a) Type A claw at $\hbar{2j+3}$};
		\end{scope}

		% Type B claw
		\begin{scope}[shift={(0,-7.5)}]
			\foreach \i in {1,...,6} {
					\pgfmathsetmacro{\v}{int(2*\i-1)}
					\pgfmathsetmacro{\vlabel}{int(2*\i-1)}
					\node[vs] (v\v) at (\v, \yrowone) {};
				}

			\foreach \i in {2,...,7} {
					\pgfmathsetmacro{\v}{int(2*\i-1)}
					\pgfmathsetmacro{\vlabel}{int(2*\i-1)}
					\node[vs] (bar\v) at (\v, \yrowthree) {};
				}

			% Second row vertices (even numbers)
			\foreach \i in {1,...,6} {
					\pgfmathsetmacro{\v}{int(2*\i)}
					\pgfmathsetmacro{\vlabel}{int(2*\i)}
					\node[vs] (v\v) at (\v, \yrowtwo) {};
				}

			% Horizontal edges
			\foreach \i in {1,...,10} {
					\pgfmathsetmacro{\inext}{int(\i+2)}
					\draw[edge] (v\i) -- (v\inext);
				}
			\foreach \i in {1,...,11} {
					\pgfmathsetmacro{\inext}{int(\i+1)}
					\draw[edge] (v\i) -- (v\inext);
				}

			% Diagonal edges
			\foreach \i in {1,...,6} {
					\pgfmathsetmacro{\v}{int(2*\i-1)}
					\pgfmathsetmacro{\vnext}{int(2*\i+1)}
					\pgfmathsetmacro{\even}{int(2*\i)}
					\draw[edge] (v\even) -- (bar\vnext);
				}

			\foreach \i in {2,...,6} {
					\pgfmathsetmacro{\v}{int(2*\i-1)}
					\pgfmathsetmacro{\vnext}{int(2*\i+1)}
					\pgfmathsetmacro{\even}{int(2*\i)}
					\draw[edge] (v\even) -- (bar\v);
				}

			\foreach \i in {2,...,6} {
					\pgfmathsetmacro{\v}{int(2*\i-1)}
					\pgfmathsetmacro{\vnext}{int(2*\i+1)}
					\pgfmathsetmacro{\even}{int(2*\i)}
					\draw[edge] (bar\v) -- (bar\vnext);
				}

			% Curved edges
			\foreach \i in {2,...,6} {
					\pgfmathsetmacro{\v}{int(2*\i-1)}
					\draw[edgefar, curved_right] (v\v) to (bar\v);
				}

			\foreach \i in {1,...,6} {
					\pgfmathsetmacro{\v}{int(2*\i-1)}
					\pgfmathsetmacro{\vnext}{int(2*\i+1)}
					\draw[edgefar] (v\v) to (bar\vnext);
				}

			\foreach \i in {1,...,5} {
					\pgfmathsetmacro{\v}{int(2*\i-1)}
					\pgfmathsetmacro{\vnext}{int(2*\i+3)}
					\draw[edgefar] (v\v) to (bar\vnext);
				}

			\foreach \i in {1,...,4} {
					\pgfmathsetmacro{\v}{int(2*\i-1)}
					\pgfmathsetmacro{\vnext}{int(2*\i+5)}
					\draw[edgefar] (v\v) to (bar\vnext);
				}

			% Left extensions
			\foreach \v in {v1, v2, bar3}{
					\coordinate[left=5.5mm of \v] (l\v);
					\draw[edge] (l\v) -- (\v);
					\coordinate[left=5.5mm of l\v] (ll\v);
					\draw[edge, dotted] (ll\v) -- (l\v);
				}

			% Right extensions
			\foreach \v in {v11, v12, bar13}{
					\coordinate[right=5.5mm of \v] (r\v);
					\draw[edge] (\v) -- (r\v);
					\coordinate[right=5.5mm of r\v] (rr\v);
					\draw[edge, dotted] (r\v) -- (rr\v);
				}

			% Claw structure
			\node[clawcenter_small] (clawcenter) at (9, \yrowthree) {};
			\node[clawleaves_small] (clawleaf1) at (3, \yrowone) {};
			\node[clawleaves_small] (clawleaf2) at (8, \yrowtwo) {};
			\node[clawleaves_small] (clawleaf3) at (11, \yrowthree) {};

			\draw[clawedge, curved] (clawleaf1) to (clawcenter);
			\draw[clawedge] (clawleaf2) -- (clawcenter);
			\draw[clawedge] (clawcenter) -- (clawleaf3);

			% Labels
			\node[below=2mm of clawcenter, font=\large] {$\hbar{2j+3}$};
			\node[above=2mm of clawleaf1, font=\large] {$h_{2j-3}$};
			\node[above right=0mm and 1mm of clawleaf2, font=\large] {$h_{2j+2}$};
			\node[below=2mm of clawleaf3, font=\large] {$\hbar{2j+5}$};

			% Caption
			\node at (\captionx,\captiony) {(b) Type B claw at $\hbar{2j+3}$};
		\end{scope}
	\end{tikzpicture}
	\caption{
		Structures of claws centered at $\hbar{2j+3}$.
		The bold green edges connect the claw center (blue circle) to its three leaves (green circles), while gray edges indicate all other edges in the frustration graph.
		(a) Type A claw: leaves are located at vertices $h_{2j-3}$, $h_{2j+1}$, and $h_{2j+4}$.
		(b) Type B claw: leaves are located at vertices $h_{2j-3}$, $h_{2j+2}$, and $\hbar{2j+5}$.
		Note that both claw types share the common leaf $h_{2j-3}$.
		For a more detailed structural analysis, see Figure~\ref{fig:claw-configurations-bar}.
	}
	\label{fig:claw-bar}
\end{figure}
\begin{figure}[tbp]
	\centering
	\begin{tikzpicture}[triangular lattice small, evenhole/.style={draw=CadetBlue, line width=2pt},]
		% Define common y-coordinates
		\pgfmathsetmacro{\yrowone}{0}
		\pgfmathsetmacro{\yrowtwo}{-1.732}  % -sqrt(3)
		\pgfmathsetmacro{\yrowthree}{-3.464} % -2*sqrt(3)
		\pgfmathsetmacro{\offset}{-2}

		\def\captionx{7}
		\def\captiony{-5}
		\def\shifty{-6.5}

		% even hole
		\foreach \i in {1,...,4} {
				\pgfmathsetmacro{\v}{int(2*\i-1)}
				\node[vs] (v\v) at (\v, \yrowone) {};
			}

		\foreach \i in {2,...,5} {
				\pgfmathsetmacro{\v}{int(2*\i-1)}
				\node[vs] (bar\v) at (\v, \yrowthree) {};
			}

		% Second row vertices (even numbers)
		\foreach \i in {1,...,4} {
				\pgfmathsetmacro{\v}{int(2*\i)}
				\pgfmathsetmacro{\vlabel}{int(2*\i)}
				\node[vs] (v\v) at (\v, \yrowtwo) {};
			}

		% Horizontal edges
		\foreach \i in {1,...,6} {
				\pgfmathsetmacro{\inext}{int(\i+2)}
				\draw[edge] (v\i) -- (v\inext);
			}
		\foreach \i in {1,...,7} {
				\pgfmathsetmacro{\inext}{int(\i+1)}
				\draw[edge] (v\i) -- (v\inext);
			}

		% Diagonal edges
		\foreach \i in {1,...,4} {
				\pgfmathsetmacro{\v}{int(2*\i-1)}
				\pgfmathsetmacro{\vnext}{int(2*\i+1)}
				\pgfmathsetmacro{\even}{int(2*\i)}
				\draw[edge] (v\even) -- (bar\vnext);
			}

		\foreach \i in {2,...,4} {
				\pgfmathsetmacro{\v}{int(2*\i-1)}
				\pgfmathsetmacro{\vnext}{int(2*\i+1)}
				\pgfmathsetmacro{\even}{int(2*\i)}
				\draw[edge] (v\even) -- (bar\v);
			}

		\foreach \i in {2,...,4} {
				\pgfmathsetmacro{\v}{int(2*\i-1)}
				\pgfmathsetmacro{\vnext}{int(2*\i+1)}
				\pgfmathsetmacro{\even}{int(2*\i)}
				\draw[edge] (bar\v) -- (bar\vnext);
			}

		% Curved edges
		\foreach \i in {2,...,4} {
				\pgfmathsetmacro{\v}{int(2*\i-1)}
				\draw[edgefar, curved_right] (v\v) to (bar\v);
			}

		\foreach \i in {1,...,4} {
				\pgfmathsetmacro{\v}{int(2*\i-1)}
				\pgfmathsetmacro{\vnext}{int(2*\i+1)}
				\draw[edgefar] (v\v) to (bar\vnext);
			}

		\foreach \i in {1,...,3} {
				\pgfmathsetmacro{\v}{int(2*\i-1)}
				\pgfmathsetmacro{\vnext}{int(2*\i+3)}
				\draw[edgefar] (v\v) to (bar\vnext);
			}

		\foreach \i in {1,...,2} {
				\pgfmathsetmacro{\v}{int(2*\i-1)}
				\pgfmathsetmacro{\vnext}{int(2*\i+5)}
				\draw[edgefar] (v\v) to (bar\vnext);
			}

		% Left extensions
		\foreach \v in {v1, v2, bar3}{
				\coordinate[left=5.5mm of \v] (l\v);
				\draw[edge] (l\v) -- (\v);
				\coordinate[left=5.5mm of l\v] (ll\v);
				\draw[edge, dotted] (ll\v) -- (l\v);
			}

		% Right extensions
		\foreach \v in {v7, v8, bar9}{
				\coordinate[right=5.5mm of \v] (r\v);
				\draw[edge] (\v) -- (r\v);
				\coordinate[right=5.5mm of r\v] (rr\v);
				\draw[edge, dotted] (r\v) -- (rr\v);
			}

		% Claw structure
		\node[clawcenter_small] (e1) at (3, \yrowone) {};
		\node[clawcenter_small] (e2) at (4, \yrowtwo) {};
		\node[clawcenter_small] (e3) at (6, \yrowtwo) {};
		\node[clawcenter_small] (e4) at (7, \yrowthree) {};

		\draw[evenhole] (e1) to (e2) to (e3) to (e4);
		\draw[evenhole, curved] (e1) to (e4);

		% Labels
		\node[above=2mm of e1, font=\large] {$h_{2j-3}$};
		\node[above left=-0.75mm and 1mm of e2, font=\large] {$h_{2j-2}$};
		\node[above right=-0.75mm and 0.5mm of e3, font=\large] {$h_{2j}$};
		\node[below=2mm of e4, font=\large] {$\hbar{2j+1}$};

	\end{tikzpicture}
	\caption{
		Structure of an even hole $C_4$ in the frustration graph for the extended FFD algebra: $\mu_{2j-1} = h_{2j-3} h_{2j} h_{2j-2} \hbar{2j+1}$.
		The vertices in the even hole are represented by blue-filled circles and the four edges in the even hole are represented by bold blue lines.
	}
	\label{fig:even-hole}
\end{figure}
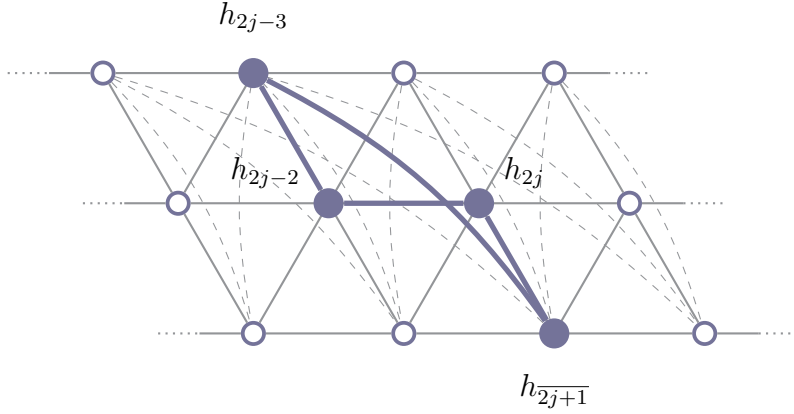

\subsection{Additional generators and the frustration graph}
We propose an algebraic extension of the FFD algebra with new generators $\{\hbar{2m+1}\}_{m=0}^{M^\prime}$.
As in the original FFD algebra, the squares are scalars:
\begin{align}
	\hbar{2m-1}^2 = \bbar{2m-1}^2,
\end{align}
where $\bbar{2m-1}$ is a coupling constant.
The generator $\hbar{2m-1}$ anticommutes with the following FFD generators with odd indices:
\begin{align}
	\{\hbar{2m-1},h_{2m-7}\} = \{\hbar{2m-1},h_{2m-5}\} = \{\hbar{2m-1},h_{2m-3}\} = \{\hbar{2m-1},h_{2m-1}\} = 0,
	\label{eq:extended-anticommutator-odd}
\end{align}
and also with those with even indices:
\begin{align}
	\{\hbar{2m-1},h_{2m-2}\} = \{\hbar{2m-1},h_{2m}\} = 0,
\end{align}
as well as within the extended algebra:
\begin{align}
	\{\hbar{2m-1},\hbar{2m+1}\} = 0,
\end{align}
and commutes with all others.

Following~\cite{fermions-behind-the-disguise,unified-graph-th}, we introduce the frustration graph for the extended FFD algebra.
We show the frustration graph $G_M$ for this extended FFD algebra in Figure~\ref{fig:extended-ffd-frustration-graph}, where the additional vertices for $\hbar{2j-1}$ are placed in the lower row, and the dotted edges represent the anticommutation relations~\eqref{eq:extended-anticommutator-odd}.

\subsection{Claw in the extended frustration graph}
\label{sec:claw-notations}
A notable property of the new frustration graph is that it contains claws.
We show the claws in the frustration graph for the extended FFD algebra in Figures~\ref{fig:claw} and \ref{fig:claw-bar}.
The claw center can be $h_{2j-1}$ or $\hbar{2j-1}$, with leaves shown in Figure~\ref{fig:claw} for claws centered at $h_{2j-1}$ and in Figure~\ref{fig:claw-bar} for those centered at $\hbar{2j-1}$.
Hereafter, we refer to a claw with center $h_c$ as a ``claw at $h_c$''.

Claws with the same center have two patterns: type A and type B.
Let the three leaves of a claw at center $h_c$ be denoted by $\leaves{c}{t} = (\vleafdep{t,1}_{c}, \vleafdep{t,2}_{c}, \vleafdep{t,3}_{c})$.
For claws at $h_{2j-1}$ (Figure~\ref{fig:claw}): type A has leaves at the vertices $\leaves{2j-1}{A} \equiv (h_{\overline{2j+1}}, h_{2j-2}, h_{\overline{2j+5}})$, while type B has leaves at the vertices $\leaves{2j-1}{B} \equiv (h_{2j-3}, h_{2j}, h_{\overline{2j+5}})$.
For claws at $\hbar{2j+3}$ (Figure~\ref{fig:claw-bar}): type A has leaves at the vertices $\leaves{\overline{2j+3}}{A} \equiv (h_{2j+1}, h_{2j+4}, h_{2j-3})$, while type B has leaves at the vertices $\leaves{\overline{2j+3}}{B} \equiv (h_{\overline{2j+5}}, h_{2j+2}, h_{2j-3})$.
Note that the leaf $\vleafdep{t,3}_{c}$ remains the same for both claw types with the same center $c$, and we denote $\vleafindep_{c} \equiv \vleafdep{t,3}_{c}$.

The range of claw centers is as follows: when $h_c$ with $c = 2m-1$ is a claw center, then $3 \le c \le 2M'-5$; when $\hbar{c}$ is a claw center, then $7 \le c \le 2M'-1$.

The sufficient conditions of~\cite{fermions-behind-the-disguise,unified-graph-th} for integrability and free fermionic solvability include claw-freeness of the frustration graph.
At first sight, having claws seems to imply that the methods of~\cite{fermions-behind-the-disguise,unified-graph-th} cannot be applied.
However, the situation is different.
The works~\cite{fermions-behind-the-disguise,unified-graph-th} treated models where the terms in the Hamiltonian were algebraically independent of each other.
In our case, this is not true because we have the relation~\eqref{eq:ac-ca-relation}.
We will argue that this extra relation compensates for the claw obstruction and salvages the applicability of the methods of~\cite{fermions-behind-the-disguise,unified-graph-th}.

\subsection{Even hole in the extended frustration graph}
Another notable feature of the new frustration graph is the existence of even holes $C_{4}$.
We show an even hole in the frustration graph for the extended FFD algebra in Figure~\ref{fig:even-hole}.
These even holes are formed by the following four generators: $\{h_{2m-3}, h_{2m}, h_{2m-2}, h_{\overline{2m+1}}\}$ for $1 < m \le M'$.
We define $\mu_{2m-1}$ as the product of these generators:
\begin{align}
	\mu_{2m-1}
	\equiv
	h_{2m-3} h_{2m} h_{2m-2} \hbar{2m+1}
	\quad \text{for $1 < m \le M'$}.
	\label{eq:center-even-hole}
\end{align}

The important feature of~\eqref{eq:center-even-hole} is that it is central in the extended FFD algebra; $\mu_{2m-1}$ commutes with all generators in the extended FFD algebra: $[\mu_{2m-1}, h_{m'}] = 0$ for $1 \le m' \le M$, and $[\mu_{2m-1}, \hbar{2m'+1}] = 0$ for $0 \le m' \le M'$.
These properties can be proven from the fact that each vertex in the frustration graph has zero, two, or four connections to the vertices in the even hole, as can be confirmed in Figure~\ref{fig:even-hole}.

Moreover, its square is a scalar: $\mu_{2m-1}^{2} = b_{2m-3}^2 b_{2m}^2 b_{2m-2}^2 \bbar{2m+1}^2$.
\red{After fixing a central sector, we may replace $\mu_{2m-1}$ by a scalar eigenvalue:}
\begin{align}
	\mu_{2m-1} = \pm b_{2m} b_{2m-3} b_{2m-2} \bbar{2m+1}
	\,,
	\label{eq:mu-scalar}
\end{align}
where the sign is chosen independently for each central element $\mu_{2m-1}$.
\red{This sign is a gauge choice: choosing the opposite sign changes the corresponding effective parameter $\beta_{2m-3}$ and is absorbed into the gauge-fixed couplings.}
Therefore, we can express the extra generator $\hbar{2m+1}$ as
\begin{align}
	\hbar{2m+1} = \beta_{2m-3} h_{2m-2} h_{2m-3} h_{2m} \quad \text{for $1 < m \le M'$},
	\label{eq:triples-from-algebra}
\end{align}
where we define
\begin{align}
	\beta_{2m-3} \equiv \pm \frac{\bbar{2m+1}}{b_{2m-2} b_{2m-3} b_{2m}}.
	\label{eq:beta-origin}
\end{align}
Here, we can see that the triple products in the new Hamiltonian~\eqref{eq:new-model} can be derived from the extended FFD algebra as in~\eqref{eq:triples-from-algebra}.
Note that $\hbar{1}$ and $\hbar{3}$ cannot be expressed in terms of the generators of the original FFD algebra.

\red{The central even holes of the present model are a special case of the generalized cycle symmetries of~\cite{unified-graph-th}: each $\mu_{2m-1}$ is the product of the generators around a single even hole, which here is not merely a symmetry of the Hamiltonian but is central in the algebra, commuting with every generator.}

\subsection{Extra relations for free fermionic integrability}
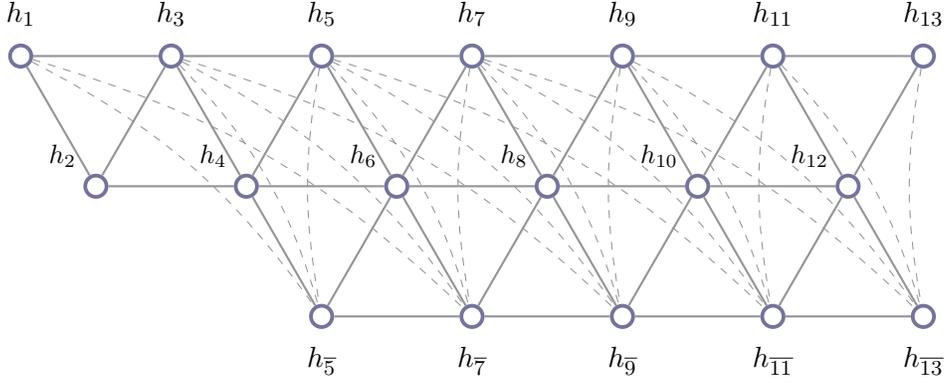
\begin{figure}[tbp]
	\centering
	\begin{tikzpicture}[triangular lattice small, extendededge/.style={draw=RoyalBlue, bend right=\curveratio, line width=1.25pt}, extendededgeright/.style={draw=RoyalBlue, bend right=\curveratioright, line width=1.25pt}]

		\pgfmathsetmacro{\yrowone}{0}
		\pgfmathsetmacro{\yrowtwo}{-1.732}
		\pgfmathsetmacro{\yrowthree}{-3.464}

		% First row vertices (odd numbers)
		\foreach \i in {1,...,7} {
				\pgfmathsetmacro{\v}{int(2*\i-1)}
				\pgfmathsetmacro{\vlabel}{int(2*\i-1)}
				\node[vs] (v\v) at (\v, \yrowone) {};
			}

		\foreach \i in {3,...,7} {
				\pgfmathsetmacro{\v}{int(2*\i-1)}
				\pgfmathsetmacro{\vlabel}{int(2*\i-1)}
				\node[vs] (bar\v) at (\v, \yrowthree) {};
			}

		% Second row vertices (even numbers)
		\foreach \i in {1,...,6} {
				\pgfmathsetmacro{\v}{int(2*\i)}
				\pgfmathsetmacro{\vlabel}{int(2*\i)}
				\node[vs] (v\v) at (\v, \yrowtwo) {};
			}
		\foreach \i in {1,...,11} {
				\pgfmathsetmacro{\inext}{int(\i+2)}
				\draw[edge] (v\i) -- (v\inext);
			}
		\foreach \i in {1,...,12} {
				\pgfmathsetmacro{\inext}{int(\i+1)}
				\draw[edge] (v\i) -- (v\inext);
			}

		\foreach \i in {2,...,6} {
				\pgfmathsetmacro{\v}{int(2*\i-1)}
				\pgfmathsetmacro{\vnext}{int(2*\i+1)}
				\pgfmathsetmacro{\even}{int(2*\i)}
				\draw[edge] (v\even) -- (bar\vnext);
			}

		\foreach \i in {3,...,6} {
				\pgfmathsetmacro{\v}{int(2*\i-1)}
				\pgfmathsetmacro{\vnext}{int(2*\i+1)}
				\pgfmathsetmacro{\even}{int(2*\i)}
				\draw[edge] (bar\v) -- (bar\vnext);
			}

		\foreach \i in {3,...,6} {
				\pgfmathsetmacro{\v}{int(2*\i-1)}
				\pgfmathsetmacro{\vnext}{int(2*\i+1)}
				\pgfmathsetmacro{\even}{int(2*\i)}
				\draw[edge] (v\even) -- (bar\v);
			}

		\foreach \i in {3,...,7} {
				\pgfmathsetmacro{\v}{int(2*\i-1)}
				\draw[edgefar, curved_right] (v\v) to (bar\v);
			}

		\foreach \i in {2,...,6} {
				\pgfmathsetmacro{\v}{int(2*\i-1)}
				\pgfmathsetmacro{\vnext}{int(2*\i+1)}
				\draw[edgefar] (v\v) to (bar\vnext);
			}

		\foreach \i in {1,...,5} {
				\pgfmathsetmacro{\v}{int(2*\i-1)}
				\pgfmathsetmacro{\vnext}{int(2*\i+3)}
				\draw[edgefar] (v\v) to (bar\vnext);
			}

		\foreach \i in {1,...,4} {
				\pgfmathsetmacro{\v}{int(2*\i-1)}
				\pgfmathsetmacro{\vnext}{int(2*\i+5)}
				\draw[edgefar] (v\v) to (bar\vnext);
			}

		% First row vertices (odd numbers)
		\foreach \i in {1,...,7} {
				\pgfmathsetmacro{\v}{int(2*\i-1)}
				\pgfmathsetmacro{\vlabel}{int(2*\i-1)}
				\node[above = 1mm of v\v] {$h_{\vlabel}$};
			}

		\foreach \i in {3,...,7} {
				\pgfmathsetmacro{\v}{int(2*\i-1)}
				\pgfmathsetmacro{\vlabel}{int(2*\i-1)}
				\node[below = 1mm of bar\v] {$\hbar{\vlabel}$};
			}

		% Second row vertices (even numbers)
		\foreach \i in {1,...,6} {
				\pgfmathsetmacro{\v}{int(2*\i)}
				\pgfmathsetmacro{\vlabel}{int(2*\i)}
				\node[above left = 0.01mm and 0.1mm of v\v] {\small $h_{\vlabel}$};
			}

	\end{tikzpicture}

	\caption{
		Frustration graph for the Hamiltonian~\eqref{eq:new-model} ($M=13$).
		The vertices $\hbar{1}$ and $\hbar{3}$ are not present for this case.
	}
	\label{fig:extended-ffd-frustration-graph-for-M2}
\end{figure}
\begin{figure}[tbp]
	\centering
	\begin{tikzpicture}[triangular lattice small]
		% Define common y-coordinates
		\pgfmathsetmacro{\yrowone}{0}
		\pgfmathsetmacro{\yrowtwo}{-1.732}  % -sqrt(3)
		\pgfmathsetmacro{\yrowthree}{-3.464} % -2*sqrt(3)
		\pgfmathsetmacro{\offset}{-2}

		\def\captionx{6.5}
		\def\captiony{-4.425}
		\def\shifty{-6.5}

		\begin{scope}[shift={(0,0)}]
			\foreach \i in {1,...,6} {
					\pgfmathsetmacro{\v}{int(2*\i-1)}
					\pgfmathsetmacro{\vlabel}{int(2*\i-1)}
					\node[vs] (v\v) at (\v, \yrowone) {};
				}

			\foreach \i in {2,...,6} {
					\pgfmathsetmacro{\v}{int(2*\i-1)}
					\pgfmathsetmacro{\vlabel}{int(2*\i-1)}
					\node[vs] (bar\v) at (\v, \yrowthree) {};
				}

			% Second row vertices (even numbers)
			\foreach \i in {1,...,5} {
					\pgfmathsetmacro{\v}{int(2*\i)}
					\pgfmathsetmacro{\vlabel}{int(2*\i)}
					\node[vs] (v\v) at (\v, \yrowtwo) {};
				}

			% Horizontal edges
			\foreach \i in {1,...,9} {
					\pgfmathsetmacro{\inext}{int(\i+2)}
					\draw[edge] (v\i) -- (v\inext);
				}
			\foreach \i in {1,...,10} {
					\pgfmathsetmacro{\inext}{int(\i+1)}
					\draw[edge] (v\i) -- (v\inext);
				}

			% Diagonal edges
			\foreach \i in {1,...,5} {
					\pgfmathsetmacro{\v}{int(2*\i-1)}
					\pgfmathsetmacro{\vnext}{int(2*\i+1)}
					\pgfmathsetmacro{\even}{int(2*\i)}
					\draw[edge] (v\even) -- (bar\vnext);
				}

			\foreach \i in {2,...,5} {
					\pgfmathsetmacro{\v}{int(2*\i-1)}
					\pgfmathsetmacro{\vnext}{int(2*\i+1)}
					\pgfmathsetmacro{\even}{int(2*\i)}
					\draw[edge] (v\even) -- (bar\v);
				}

			\foreach \i in {2,...,5} {
					\pgfmathsetmacro{\v}{int(2*\i-1)}
					\pgfmathsetmacro{\vnext}{int(2*\i+1)}
					\pgfmathsetmacro{\even}{int(2*\i)}
					\draw[edge] (bar\v) -- (bar\vnext);
				}

			% Curved edges
			\foreach \i in {2,...,6} {
					\pgfmathsetmacro{\v}{int(2*\i-1)}
					\draw[edgefar, curved_right] (v\v) to (bar\v);
				}

			\foreach \i in {1,...,5} {
					\pgfmathsetmacro{\v}{int(2*\i-1)}
					\pgfmathsetmacro{\vnext}{int(2*\i+1)}
					\draw[edgefar] (v\v) to (bar\vnext);
				}

			\foreach \i in {1,...,4} {
					\pgfmathsetmacro{\v}{int(2*\i-1)}
					\pgfmathsetmacro{\vnext}{int(2*\i+3)}
					\draw[edgefar] (v\v) to (bar\vnext);
				}

			\foreach \i in {1,...,3} {
					\pgfmathsetmacro{\v}{int(2*\i-1)}
					\pgfmathsetmacro{\vnext}{int(2*\i+5)}
					\draw[edgefar] (v\v) to (bar\vnext);
				}

			% Left extensions
			\foreach \v in {v1, v2, bar3}{
					\coordinate[left=5.5mm of \v] (l\v);
					\draw[edge] (l\v) -- (\v);
					\coordinate[left=5.5mm of l\v] (ll\v);
					\draw[edge, dotted] (ll\v) -- (l\v);
				}

			% Labels
			\node[right=2mm of v10, font=\large] {$h_{2m}$};
			\node[right=2mm of v11, font=\large] {$h_{2m+1}$};
			\node[right=2mm of bar11, font=\large] {$\hbar{2m+1}$};

			% Caption
			\node at (\captionx,\captiony) {(a) Free fermionic frustration graph $G_{2m+1}$};
		\end{scope}

		\begin{scope}[shift={(0,-6)}]
			\foreach \i in {1,...,5} {
					\pgfmathsetmacro{\v}{int(2*\i-1)}
					\pgfmathsetmacro{\vlabel}{int(2*\i-1)}
					\node[vs] (v\v) at (\v, \yrowone) {};
				}

			\foreach \i in {2,...,6} {
					\pgfmathsetmacro{\v}{int(2*\i-1)}
					\pgfmathsetmacro{\vlabel}{int(2*\i-1)}
					\node[vs] (bar\v) at (\v, \yrowthree) {};
				}

			% Second row vertices (even numbers)
			\foreach \i in {1,...,5} {
					\pgfmathsetmacro{\v}{int(2*\i)}
					\pgfmathsetmacro{\vlabel}{int(2*\i)}
					\node[vs] (v\v) at (\v, \yrowtwo) {};
				}

			% Horizontal edges
			\foreach \i in {1,...,8} {
					\pgfmathsetmacro{\inext}{int(\i+2)}
					\draw[edge] (v\i) -- (v\inext);
				}
			\foreach \i in {1,...,9} {
					\pgfmathsetmacro{\inext}{int(\i+1)}
					\draw[edge] (v\i) -- (v\inext);
				}

			% Diagonal edges
			\foreach \i in {1,...,5} {
					\pgfmathsetmacro{\v}{int(2*\i-1)}
					\pgfmathsetmacro{\vnext}{int(2*\i+1)}
					\pgfmathsetmacro{\even}{int(2*\i)}
					\draw[edge] (v\even) -- (bar\vnext);
				}

			\foreach \i in {2,...,5} {
					\pgfmathsetmacro{\v}{int(2*\i-1)}
					\pgfmathsetmacro{\vnext}{int(2*\i+1)}
					\pgfmathsetmacro{\even}{int(2*\i)}
					\draw[edge] (v\even) -- (bar\v);
				}

			\foreach \i in {2,...,5} {
					\pgfmathsetmacro{\v}{int(2*\i-1)}
					\pgfmathsetmacro{\vnext}{int(2*\i+1)}
					\pgfmathsetmacro{\even}{int(2*\i)}
					\draw[edge] (bar\v) -- (bar\vnext);
				}

			% Curved edges
			\foreach \i in {2,...,5} {
					\pgfmathsetmacro{\v}{int(2*\i-1)}
					\draw[edgefar, curved_right] (v\v) to (bar\v);
				}

			\foreach \i in {1,...,5} {
					\pgfmathsetmacro{\v}{int(2*\i-1)}
					\pgfmathsetmacro{\vnext}{int(2*\i+1)}
					\draw[edgefar] (v\v) to (bar\vnext);
				}

			\foreach \i in {1,...,4} {
					\pgfmathsetmacro{\v}{int(2*\i-1)}
					\pgfmathsetmacro{\vnext}{int(2*\i+3)}
					\draw[edgefar] (v\v) to (bar\vnext);
				}

			\foreach \i in {1,...,3} {
					\pgfmathsetmacro{\v}{int(2*\i-1)}
					\pgfmathsetmacro{\vnext}{int(2*\i+5)}
					\draw[edgefar] (v\v) to (bar\vnext);
				}

			% Left extensions
			\foreach \v in {v1, v2, bar3}{
					\coordinate[left=5.5mm of \v] (l\v);
					\draw[edge] (l\v) -- (\v);
					\coordinate[left=5.5mm of l\v] (ll\v);
					\draw[edge, dotted] (ll\v) -- (l\v);
				}

			% Labels
			\node[right=2mm of v10, font=\large] {$h_{2m}$};
			\node[right=2mm of v9, font=\large] {$h_{2m-1}$};
			\node[right=2mm of bar11, font=\large] {$\hbar{2m+1}$};

			% Caption
			\node at (\captionx,\captiony) {(b) Free fermionic frustration graph $G_{2m}$};
		\end{scope}

		\begin{scope}[shift={(0,-12)}]
			\foreach \i in {1,...,4} {
					\pgfmathsetmacro{\v}{int(2*\i-1)}
					\pgfmathsetmacro{\vlabel}{int(2*\i-1)}
					\node[vs] (v\v) at (\v, \yrowone) {};
				}

			\foreach \i in {2,...,6} {
					\pgfmathsetmacro{\v}{int(2*\i-1)}
					\pgfmathsetmacro{\vlabel}{int(2*\i-1)}
					\node[vs] (bar\v) at (\v, \yrowthree) {};
				}

			% Second row vertices (even numbers)
			\foreach \i in {1,...,5} {
					\pgfmathsetmacro{\v}{int(2*\i)}
					\pgfmathsetmacro{\vlabel}{int(2*\i)}
					\node[vs] (v\v) at (\v, \yrowtwo) {};
				}

			% Horizontal edges
			\foreach \i in {1,...,6} {
					\pgfmathsetmacro{\inext}{int(\i+2)}
					\draw[edge] (v\i) -- (v\inext);
				}
			\draw[edge] (v8) -- (v10);

			\foreach \i in {1,...,7} {
					\pgfmathsetmacro{\inext}{int(\i+1)}
					\draw[edge] (v\i) -- (v\inext);
				}

			% Diagonal edges
			\foreach \i in {1,...,5} {
					\pgfmathsetmacro{\v}{int(2*\i-1)}
					\pgfmathsetmacro{\vnext}{int(2*\i+1)}
					\pgfmathsetmacro{\even}{int(2*\i)}
					\draw[edge] (v\even) -- (bar\vnext);
				}

			\foreach \i in {2,...,5} {
					\pgfmathsetmacro{\v}{int(2*\i-1)}
					\pgfmathsetmacro{\vnext}{int(2*\i+1)}
					\pgfmathsetmacro{\even}{int(2*\i)}
					\draw[edge] (v\even) -- (bar\v);
				}

			\foreach \i in {2,...,5} {
					\pgfmathsetmacro{\v}{int(2*\i-1)}
					\pgfmathsetmacro{\vnext}{int(2*\i+1)}
					\pgfmathsetmacro{\even}{int(2*\i)}
					\draw[edge] (bar\v) -- (bar\vnext);
				}

			% Curved edges
			\foreach \i in {2,...,4} {
					\pgfmathsetmacro{\v}{int(2*\i-1)}
					\draw[edgefar, curved_right] (v\v) to (bar\v);
				}

			\foreach \i in {1,...,4} {
					\pgfmathsetmacro{\v}{int(2*\i-1)}
					\pgfmathsetmacro{\vnext}{int(2*\i+1)}
					\draw[edgefar] (v\v) to (bar\vnext);
				}

			\foreach \i in {1,...,4} {
					\pgfmathsetmacro{\v}{int(2*\i-1)}
					\pgfmathsetmacro{\vnext}{int(2*\i+3)}
					\draw[edgefar] (v\v) to (bar\vnext);
				}

			\foreach \i in {1,...,3} {
					\pgfmathsetmacro{\v}{int(2*\i-1)}
					\pgfmathsetmacro{\vnext}{int(2*\i+5)}
					\draw[edgefar] (v\v) to (bar\vnext);
				}

			% Left extensions
			\foreach \v in {v1, v2, bar3}{
					\coordinate[left=5.5mm of \v] (l\v);
					\draw[edge] (l\v) -- (\v);
					\coordinate[left=5.5mm of l\v] (ll\v);
					\draw[edge, dotted] (ll\v) -- (l\v);
				}

			% Labels
			\node[right=2mm of v7, font=\large] {$h_{2m-3}$};
			\node[above right=0mm and -2mm of v8, font=\large] {$h_{2m-2}$};
			\node[right=2mm of v10, font=\large] {$h_{2m}$};
			\node[right=2mm of v9, font=\large] {$h_{2m-1}$};
			\node[right=2mm of bar11, font=\large] {$\hbar{2m+1}$};

			% Caption
			\node at (\captionx,\captiony) {(c) Free fermionic frustration graph $G_{2m}^\prime$};
		\end{scope}
	\end{tikzpicture}
		\caption{
			Frustration graphs for the free-fermionic Hamiltonians.
			(a) The frustration graph $G_{2m+1}$.
			(b) The frustration graph $G_{2m+1}|_{b_{2m+1} = 0} = G_{2m}$.
			This frustration graph corresponds to the new Hamiltonian~\eqref{eq:new-model} with even $M$.
			(c) The frustration graph $\eval{G_{2m+1}}_{b_{2m+1}, b_{2m-1} = 0} = G_{2m}^\prime$.
			\red{The free-fermionic solvability of the corresponding Hamiltonians is proved in Sections~\ref{sec:charges}--\ref{sec:fermion}.}
			The left edge of the frustration graph should also have these configurations rotated by 180 degrees.
		}
		\label{fig:free-fermionic-frustration-graph}
	\end{figure}
%%%%%%%%%%%%%%%%%%%%%%%%%%
%%%%%%%%%%%%%%%%%%%%%%%%%%
%%%%%%%%%%%%%%%%%%%%%%%%%%
We further impose relations among the generators of the extended algebra.
We first define the following quantities:
\begin{align}
	\a{2m-1} & \equiv h_{2m} h_{2m-3} + h_{2m-2} h_{\overline{2m+1}},
	\label{eq:def-a}
	\\
	\C{2m}   & \equiv h_{2m-3} h_{\overline{2m+3}}.
	\label{eq:def-c}
\end{align}
% Note that the square of $\a{2m-1}$ is a scalar:
% \begin{align}
% 	\a{2m-1}^2 = (b_{2m} b_{2m-3} + b_{2m-2} \bbar{2m+1})^2,
% \end{align}
% which can be proven using~\eqref{eq:mu-scalar}.
\red{The following relation is the key algebraic input that compensates for the claw obstruction in the frustration graph.
Its role will become explicit in Theorem~\ref{thm:charge-conservation-law}, where the claw contributions to the commutator $[H_M,\Q{M}{k}]$ cancel precisely because of this relation.
The same relation is also the key input in the proof of mutual commutativity in Appendix~\ref{app:mutual-commutativity}.}
We require
\begin{align}
	\a{2m-1} \C{2m+2}
	=
	\C{2m} \a{2m+3}
	\quad \text{for $1 < m < M' - 1$}.
	\label{eq:ac-ca-relation}
\end{align}
Substituting~\eqref{eq:triples-from-algebra} into~\eqref{eq:ac-ca-relation}, we obtain the relation~\eqref{eq:beta_relation}:
\begin{align}
	 & \a{2m-1} \C{2m+2} - \C{2m} \a{2m+3}
	% \nonumber\\
	% =&
	% \qty(h_{2m-3} h_{2m} + h_{2m-2} \hbar{2m+1})
	% h_{2m-1} \hbar{2m+5}
	% -
	% h_{2m-3} \hbar{2m+3}
	% \qty(h_{2m+1} h_{2m+4} + h_{2m+2} \hbar{2m+5})
	% \nonumber\\
	% =&
	% \beta_{2m+1} \qty(h_{2m-3} h_{2m} + \beta_{2m-3} h_{2m-2} h_{2m-2} h_{2m-3} h_{2m})
	% h_{2m-1} h_{2m+2} h_{2m+1} h_{2m+4}
	% \nonumber\\
	% &\hspace{2em}
	% -
	% \beta_{2m-1} h_{2m-3} h_{2m} h_{2m-1} h_{2m+2}
	% \qty(h_{2m+1} h_{2m+4} + \beta_{2m+1} h_{2m+2} h_{2m+2} h_{2m+1} h_{2m+4})
	% \nonumber\\
	% =&
	% - \beta_{2m+1} \qty(h_{2m-3} h_{2m} + \beta_{2m-3} b_{2m-2}^2 h_{2m-3} h_{2m})
	% h_{2m-1} h_{2m+2} h_{2m+1} h_{2m+4}
	% \nonumber\\
	% &\hspace{2em}
	% +
	% \beta_{2m-1} h_{2m-3} h_{2m-1} h_{2m} h_{2m+2}
	% \qty(h_{2m+1} h_{2m+4} + \beta_{2m+1} b_{2m+2}^2 h_{2m+1} h_{2m+4})
	% \nonumber\\
	=
	\qty[\beta_{2m-1} (1 + \beta_{2m+1} b_{2m+2}^2) - \beta_{2m+1} (1 + \beta_{2m-3} b_{2m-2}^2)]
	\nonumber                                                                   \\
	 & \hspace{16em} \times h_{2m-3} h_{2m-1} h_{2m} h_{2m+2} h_{2m+1} h_{2m+4}
	\,,
\end{align}
thus we have
\begin{align}
	\beta_{2m-1} (1 + \beta_{2m+1} b_{2m+2}^2) = \beta_{2m+1} (1 + \beta_{2m-3} b_{2m-2}^2)
	\,,
\end{align}
which is equivalent to~\eqref{eq:beta_relation}, except when $1 + \beta_{2m-3} b_{2m-2}^2 - \beta_{2m-1} b_{2m+2}^2 = 0$.
The latter case is not considered here.

The condition~\eqref{eq:ac-ca-relation} does not constrain the following coupling constants: $\bbar{1}$, $\bbar{3}$, $b_{2M'-1}$, and $b_{2M'+1}$.
Thus, we can choose these coupling constants freely.

\red{Equivalently, one may choose all $\{b_m\}_{m=1}^{2M'+1}$ freely, together with the unconstrained boundary couplings $\bbar{1}$ and $\bbar{3}$ and the two initial values $\bbar{5}$ and $\bbar{7}$.
Then all remaining couplings $\{\bbar{2m+1}\}_{m=4}^{M'}$ are determined by the recursion relations~\eqref{eq:beta_relation} and~\eqref{eq:beta-origin}.
Thus the continuous parameter count is $2M'+5$, apart from the discrete gauge signs in~\eqref{eq:mu-scalar}.}

Since the coupling constants $b_{2M'-1}$ and $b_{2M'+1}$ can be set freely, we can set them to zero.
Figure~\ref{fig:free-fermionic-frustration-graph}(a) shows the frustration graph $G_{2m+1}$.
Setting $b_{2m+1} = 0$ (i.e., removing the vertex $h_{2m+1}$) yields the frustration graph $G_{2m} = \eval{G_{2m+1}}_{b_{2m+1} = 0}$ for the Hamiltonian~\eqref{eq:new-model} with even $M = 2m$ (Figure~\ref{fig:free-fermionic-frustration-graph}(b)).
Furthermore, removing the vertices $h_{2m-1}$ and $h_{2m+1}$ yields the frustration graph $G_{2m}^\prime = \eval{G_{2m}}_{b_{2m-1} = 0}$ shown in Figure~\ref{fig:free-fermionic-frustration-graph}(c).
\red{We call a frustration graph \emph{free fermionic} if the corresponding Hamiltonian is solvable by free fermions.}
The frustration graphs in Figure~\ref{fig:free-fermionic-frustration-graph} are free fermionic frustration graphs, \red{which will be proved in Sections~\ref{sec:charges}--\ref{sec:fermion}}.
For Figure~\ref{fig:free-fermionic-frustration-graph}(c), we can further eliminate the vertices $h_{2m-3}$ and $\hbar{2m+1}$ simultaneously, as depicted in Figure~\ref{fig:frustrationgraph-for-even-Q-recursion}, which gives another free fermionic frustration graph.

For the left edge of the graph, corresponding to $\bbar{1}$ and $\bbar{3}$, the frustration graph for the free fermionic model must have the same structure as in Figure~\ref{fig:free-fermionic-frustration-graph}, but rotated by 180 degrees.

\subsection{Representation of the extended FFD algebra}
We now show the representation for the extended algebra.
From~\eqref{eq:triples-from-algebra}, for $2 \le m \le M'$, $\hbar{2m+1}$ can be represented using the representation of the original FFD algebra~\eqref{eq:zzx-rep}.

However, $\hbar{1}$ and $\hbar{3}$ cannot be represented by the original FFD representation~\eqref{eq:zzx-rep}.
They require representations outside the original one~\eqref{eq:zzx-rep}:
\begin{align}
	\hbar{1} & = \bbar{1} \sigma_{1}^{z} \sigma_{2}^{y}\,,
	\\
	\hbar{3} & = \bbar{3} \sigma_{1}^{y} \sigma_{2}^{y} \sigma_{3}^{z} \sigma_{4}^{x}\,.
\end{align}

\subsection{General integrable Hamiltonian}
We define the abstract Hamiltonian in terms of the extended FFD algebra:
\begin{align}
	H_M
	 & =
	\sum_{m=1}^{M} h_m + \sum_{m=0}^{\floor{M/2}} \hbar{2m+1}\,,
	\label{eq:general-extended-FFD-hamiltonian}
\end{align}
and the corresponding frustration graph is $G_M$.
This Hamiltonian includes the previously introduced Hamiltonian~\eqref{eq:new-model} as a special case with coupling constants $\bbar{1} = \bbar{3} = 0$.
Again, the Hamiltonian for even $M=2M'$ can be obtained from $\eval{H_{2M'+1}}_{b_{2M'+1} = 0}$, whose corresponding frustration graph is $G_{2M'}$.

In Section~\ref{sec:charges}, we prove that the Hamiltonian~\eqref{eq:general-extended-FFD-hamiltonian} has an extensive number of mutually commuting charges.
The condition~\eqref{eq:ac-ca-relation} is crucial for the existence of these higher-order charges and their mutual commutativity.
In Section~\ref{sec:transfer}, we introduce the transfer matrix for the model~\eqref{eq:general-extended-FFD-hamiltonian} and derive its recursion relation.

From Section~\ref{sec:fermion} onward, we consider only the Hamiltonian with $\bbar{1} = \bbar{3} = 0$, i.e., the Hamiltonian~\eqref{eq:new-model}, which can be constructed within the original FFD algebra.
The frustration graph for the Hamiltonian~\eqref{eq:new-model} is shown in Figure~\ref{fig:extended-ffd-frustration-graph-for-M2}.
Nevertheless, the more general Hamiltonian~\eqref{eq:general-extended-FFD-hamiltonian} can also be solved in the same way.

\section{Conserved charges}
\label{sec:charges}
In this section, we give an extensive number of conserved charges for the Hamiltonian~\eqref{eq:general-extended-FFD-hamiltonian}.
We first introduce notations for a graph-theoretical description of the charges.
Then we define the extensive number of charges and provide a proof of the conservation laws.
In Appendix~\ref{app:mutual-commutativity}, we provide the proof for the mutual commutativity of the charges.

\subsection{Notations}
We introduce the graph-theoretical notation used throughout this paper in a similar manner as in~\cite{fermions-behind-the-disguise}.
A graph $G = (V, E)$ consists of a finite vertex set $V$ and an edge set $E \subseteq V^{\times 2}$, where each edge is a 2-element subset of $V$.
All graphs considered are simple (no self-loops or multiple edges).

For any subset $S \subseteq V$, the induced subgraph is denoted by $G[S] = (S, E \cap S^{\times 2})$, which contains all vertices in $S$ and all edges from $E$ with both endpoints in $S$.
We also define the shorthand notation $E_{S} \equiv E \cap S^{\times 2}$.
Throughout this paper, we often refer to a subset of vertices interchangeably with the subgraph it induces.

The open neighborhood of a vertex $v$ is $\Gamma(v) = \{u \in V \mid \{u,v\} \in E\}$, consisting of all vertices neighboring $v$.
The closed neighborhood is $\Gamma[v] = \Gamma(v) \cup \{v\}$, which includes $v$ itself.

An independent set of a graph $G=(V,E)$ is a subset of vertices $S \subseteq V$ which induces a subgraph with no edges, i.e., $G[S] = (S, \emptyset)$.
Equivalently, $S$ is independent if and only if no two vertices in $S$ are adjacent.

A \emph{clique} $K$ in a graph $G=(V,E)$ is a subset of vertices such that every pair of vertices in $K$ is neighboring.

Now we introduce the notation for the frustration graph of our model.
First, we introduce the vertex set for the original FFD Hamiltonian~\eqref{eq:original-FFD-Hamiltonian}: $V_M^{\mathrm{FFD}} \equiv \{1,2,\ldots,M\}$, where each vertex corresponds to a term in the Hamiltonian~\eqref{eq:original-FFD-Hamiltonian}, i.e., the vertex labeled as $j$ denotes the FFD generator $h_j$.
The original edge set of the FFD algebra is given by
\begin{align}
	E_M^{\mathrm{FFD}} \equiv \{(m, n) \in V_M^{\mathrm{FFD}} \times V_M^{\mathrm{FFD}} \mid \{h_m, h_n\} = 0\}.
\end{align}

We next introduce the additional vertices for the additional generators in the Hamiltonian~\eqref{eq:general-extended-FFD-hamiltonian}: $\overline{V}_M \equiv \{\overline{1}, \overline{3}, \ldots, \overline{2\floor{M/2}+1}\}$.
Note that the indices with overlines consist of odd integers.
Each vertex corresponds to a term in the Hamiltonian.
The vertex labeled by an overlined index $\overline{2m+1}$ represents the extra generator $\hbar{2m+1}$.
We also define the union of these two sets as ${V}_M \equiv V_M^{\mathrm{FFD}} \cup \overline{V}_M$.
We introduce the notation for the frustration graph of the extended FFD algebra ${G}_M = ({V}_M, {E}_M)$, where the set of edges is defined by
\begin{align}
	{E}_M \equiv \{(\boldsymbol{i}, \boldsymbol{j}) \in {V}_M \times {V}_M \mid \{h_{\boldsymbol{i}}, h_{\boldsymbol{j}}\} = 0\},
	\label{eq:pseudo-edge}
\end{align}
which correspond to the edges in Figure~\ref{fig:extended-ffd-frustration-graph}.
Thus, $G_M = (V_M, E_M)$ is the frustration graph for the extended FFD algebra introduced in the previous section.
With this notation, the Hamiltonian~\eqref{eq:general-extended-FFD-hamiltonian} is rewritten as
\begin{align}
	H_M = \sum_{\boldsymbol{j} \in V_M} h_{\boldsymbol{j}}.
	\label{eq:new-model-graph}
\end{align}
In the following, we often identify the vertex $\boldsymbol{j} \in V_M$ with the corresponding generator $h_{\boldsymbol{j}}$.

\subsection{Higher-order charges}
In this subsection, we introduce the higher-order charges of the Hamiltonian~\eqref{eq:general-extended-FFD-hamiltonian} and prove their conservation law.

Let $\indepset{M}{k}$ denote the collection of all independent sets of order $k$ in $G_M = (V_M, E_M)$.
For $S \in \indepset{M}{k}$, any two generators from $S$ mutually commute:
\begin{align}
	[h_{\boldsymbol{i}}, h_{\boldsymbol{j}}] = 0\quad \text{for all } \boldsymbol{i}, \boldsymbol{j} \in S.
\end{align}

The charges are defined as sums of products of commuting generators:
\begin{align}
	\Q{M}{k} \equiv \sum_{S \in \indepset{M}{k}} \prod_{\boldsymbol{i} \in S} h_{\boldsymbol{i}}.
	\label{eq:charges}
\end{align}
Note that the first charge is the Hamiltonian itself: $\Q{M}{1} = H_M$.
The highest value for $k$ is the independence number of the frustration graph, denoted by $S_M$, that is, the cardinality of its largest independent vertex set.
For $k > S_M$, we have $Q_M^{(k)} = 0$.
In our case, $S_M = \lfloor (M+2)/3 \rfloor$, similarly to the original FFD case, which can be seen from the recursion equations for the polynomials~\eqref{eq:m2-poly-recursion-even} and~\eqref{eq:m2-poly-recursion-odd}; the proof is given in Appendix~\ref{app:proof-Sk}.

\begin{thm}
	\label{thm:charge-conservation-law}
	The charges defined in~\eqref{eq:charges} are conserved:
	\begin{equation}
		[H_M, \Q{M}{k}] = 0
		\qquad \forall k \in \{1, 2, \ldots, S_M\}.
		\label{eq:conservation-law}
	\end{equation}
\end{thm}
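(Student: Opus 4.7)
The plan is to expand
\begin{equation*}
  [H_M, \Q{M}{k}] \;=\; \sum_{\boldsymbol{j} \in V_M} \sum_{S \in \indepset{M}{k}} \Bigl[\, h_{\boldsymbol{j}},\; \prod_{\boldsymbol{i} \in S} h_{\boldsymbol{i}} \,\Bigr]
\end{equation*}
and argue that the right-hand side vanishes through a combinatorial pairing of summands, combined with the algebraic identity~\eqref{eq:ac-ca-relation}. The ``claw-free'' portion of the pairing is essentially the involution used for the original FFD Hamiltonian in~\cite{fermions-behind-the-disguise}; the new ingredient is that~\eqref{eq:ac-ca-relation} cancels exactly the contributions that the standard pairing fails to remove because of the claws in $G_M$.

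First I reduce to the interesting contributions. When $\boldsymbol{j} \in S$, independence of $S$ forces $h_{\boldsymbol{j}}$ to commute with every factor of the product, so the commutator vanishes. When $\boldsymbol{j} \notin S$, introduce $n(\boldsymbol{j},S) := |\Gamma(\boldsymbol{j}) \cap S|$; commuting $h_{\boldsymbol{j}}$ through the product generates a sign $(-1)^{n(\boldsymbol{j},S)}$, so only odd-$n$ configurations contribute, each equal to $2\, h_{\boldsymbol{j}} \prod_{\boldsymbol{i} \in S} h_{\boldsymbol{i}}$. I then dispose of the $n(\boldsymbol{j},S) = 1$ terms by the usual swap involution: let $\boldsymbol{i}$ be the unique neighbor of $\boldsymbol{j}$ in $S$ and set $S' := (S \setminus \{\boldsymbol{i}\}) \cup \{\boldsymbol{j}\}$. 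Then $S'$ is again an independent set of size $k$ and $\boldsymbol{i}$ is the unique neighbor of $\boldsymbol{j}$ in $S'$; a direct reordering shows that $h_{\boldsymbol{j}} \prod_{\boldsymbol{l} \in S} h_{\boldsymbol{l}} + h_{\boldsymbol{i}} \prod_{\boldsymbol{l} \in S'} h_{\boldsymbol{l}} = 0$, so these contributions cancel in pairs.

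The main step treats $n(\boldsymbol{j},S) \geq 3$, which are precisely the configurations where the neighbors of $\boldsymbol{j}$ in $S$ form the leaves of a claw in $G_M$. By the classification in Section~\ref{sec:claw-notations}, each admissible center $\boldsymbol{j}$ supports exactly two claws, of types $A$ and $B$, sharing the common leaf $\vleafindep_{\boldsymbol{j}}$. For fixed center $\boldsymbol{j}$ and fixed outer part $T := S \cap (V_M \setminus \Gamma[\boldsymbol{j}])$ I pair $S_A := T \cup \leaves{\boldsymbol{j}}{A}$ with $S_B := T \cup \leaves{\boldsymbol{j}}{B}$; both are independent sets of size $k$. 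Rearranging factors, the joint contribution factors out a common prefactor $h_{\boldsymbol{j}}\, h_{\vleafindep_{\boldsymbol{j}}} \prod_{\boldsymbol{l} \in T} h_{\boldsymbol{l}}$, leaving a sum of two-element products of the remaining leaves which, by the definitions~\eqref{eq:def-a} and~\eqref{eq:def-c}, equals the combination $\a{2m-1}\C{2m+2} - \C{2m}\a{2m+3}$ for an appropriate $m$; this vanishes by~\eqref{eq:ac-ca-relation}. For $n(\boldsymbol{j},S) \geq 5$ I would verify by local inspection of the neighborhoods in $G_M$ that no vertex of the extended frustration graph carries an independent $5$-set among its neighbors, so such configurations do not arise.

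The main obstacle is the bookkeeping in the last step: I must show that the $A/B$ pairing is a bijection on the set of claw summands, that the signs introduced by rearranging the products align to produce the difference $\a{2m-1}\C{2m+2} - \C{2m}\a{2m+3}$ and not its sum or some other combination, and that the boundary generators $\hbar{1}$, $\hbar{3}$ and the two ends of the chain — which lie outside the range where~\eqref{eq:ac-ca-relation} applies — fail to support a full pair of claws of the offending type in the first place. The centrality of $\mu_{2m-1}$ from~\eqref{eq:center-even-hole} in the algebra is a natural organizational tool for bringing products into a canonical order without introducing spurious signs, and I expect it to play a key role in carrying out this matching cleanly.
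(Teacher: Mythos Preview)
Your overall architecture matches the paper: expand the commutator, discard $\boldsymbol{j}\in S$, keep only odd $n(\boldsymbol{j},S)$, kill $n=1$ by the swap involution, and deal with the claw contributions at $n=3$. The gap is in the claw step.

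Pairing the type~$A$ and type~$B$ configurations for a \emph{single} center $\boldsymbol{j}$ does not produce the combination $\a{2m-1}\C{2m+2}-\C{2m}\a{2m+3}$. Take $\boldsymbol{j}=2j-1$: after factoring out $h_{2j-1}\,h_{\overline{2j+5}}\,\prod_{\boldsymbol{l}\in T}h_{\boldsymbol{l}}$, the remaining leaf sums are $h_{\overline{2j+1}}h_{2j-2}+h_{2j-3}h_{2j}$, which is (up to ordering) just $\a{2j-1}$. Together with the prefactor $h_{2j-1}h_{\overline{2j+5}}\sim\C{2j+2}$ you obtain only the $\a{2j-1}\C{2j+2}$ side of the relation; the $\C{2j}\a{2j+3}$ side is simply absent. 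Your worry about ``sum vs.\ difference'' is therefore not just a sign issue to be cleaned up --- the second half of the identity never appears in your scheme.

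The missing idea is that the cancellation is \emph{four}-fold: one must also bring in the companion center $\boldsymbol{c}'=\overline{2j+3}$ (and conversely $2j-1$ when $\boldsymbol{c}=\overline{2j+3}$). The paper groups the four summands $(\boldsymbol{d},S_{\boldsymbol{d},\tau})$ with $\boldsymbol{d}\in\{\boldsymbol{c},\boldsymbol{c}'\}$ and $\tau\in\{A,B\}$, where $S_{\boldsymbol{d},\tau}=\leaves{\boldsymbol{d}}{\tau}\cup(S\setminus\leaves{\boldsymbol{c}}{t})$; the two $\boldsymbol{c}$-terms contribute $\a{2j-1}\C{2j+2}$ and the two $\boldsymbol{c}'$-terms contribute $-\C{2j}\a{2j+3}$, so the sum vanishes by~\eqref{eq:ac-ca-relation}. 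What then needs checking is that all four $S_{\boldsymbol{d},\tau}$ are genuine independent $k$-sets and that $\boldsymbol{d}$ has exactly the three neighbours $\leaves{\boldsymbol{d}}{\tau}$ in $S_{\boldsymbol{d},\tau}$; the paper defers this to Appendix~\ref{app:cc-part-cancellation}. Once you adopt the two-center pairing, your outline goes through.
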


\begin{proof}
	The proof follows the strategy of Lemma~1 in~\cite{fermions-behind-the-disguise}.
	For any independent set ${S} \subset {V}_M$, we define
	\begin{align}
		h_{{S}} \equiv \prod_{\boldsymbol{j} \in {S}} h_{\boldsymbol{j}}.
		\label{eq:shorthand-notation-for-prod}
	\end{align}
	For any $\boldsymbol{j} \in {V}_M$ and independent set ${S} \subset {V}_M$, we have
	\begin{align}
		[h_{\boldsymbol{j}}, h_{{S}}]
		=
		\begin{cases}
			2 h_{\boldsymbol{j}} h_{{S}}
			  &
			\text{if } |\Gamma(\boldsymbol{j}) \cap {S}| \text{ is odd},
			\\[15pt]
			0 & \text{if } |\Gamma(\boldsymbol{j}) \cap {S}| \text{ is even}.
		\end{cases}
	\end{align}
	The commutator of the Hamiltonian with the charge is then
	\begin{align}
		[H_M , \Q{M}{k}]
		 & =
		2
		\sum_{\boldsymbol{j} \in {V}_M}
		\sum_{\substack{
				{S} \in \indepset{M}{k}
		\\
				|\Gamma(\boldsymbol{j}) \cap {S}| = 1
			}
		}
		h_{\boldsymbol{j}} h_{{S}}
		+
		2
		\sum_{\boldsymbol{j} \in {V}_M}
		\sum_{\substack{
				{S} \in  \indepset{M}{k}
		\\
				|\Gamma(\boldsymbol{j}) \cap {S}| = 3
			}
		}
		h_{\boldsymbol{j}} h_{{S}}.
		\label{eq:commutator-hamiltonian-charge}
	\end{align}
	We show that both terms vanish separately.

	\noindent
	\textbf{Cancellation of the first term:} This follows the same argument as in the original FFD case~\cite{fendley-fermions-in-disguise,fermions-behind-the-disguise}.
	Consider $\boldsymbol{j} \in {V}_M$ and ${S} \in  \indepset{M}{k}$ with $|\Gamma(\boldsymbol{j}) \cap {S}| = 1$.
	We can write $h_{{S}} = h_{\boldsymbol{j}'} h_{{S} \setminus \{\boldsymbol{j}'\}}$ where $\boldsymbol{j}' \in \Gamma(\boldsymbol{j}) \cap {S}$.
	The summation includes another pair with $\boldsymbol{j}' \in {V}_M$ and ${S}' = \{\boldsymbol{j}\} \cup ({S} \setminus \{\boldsymbol{j}'\}) \in  \indepset{M}{k}$.
	These pairs cancel:
	\begin{align}
		h_{\boldsymbol{j}} h_{{S}} + h_{\boldsymbol{j}'} h_{{S}'}
		=
		(h_{\boldsymbol{j}} h_{\boldsymbol{j}'} + h_{\boldsymbol{j}'} h_{\boldsymbol{j}}) h_{{S} \setminus \{\boldsymbol{j}'\}}
		=
		0.
		\label{eq:cancelation-first-term}
	\end{align}

	\noindent
	\textbf{Cancellation of the second term:} This is a new feature of our model.
	Consider $\boldsymbol{j} = \boldsymbol{c} \in {V}_M$ and ${S} \in  \indepset{M}{k}$ with $|\Gamma(\boldsymbol{c}) \cap {S}| = 3$, meaning $h_{\boldsymbol{c}}$ is a claw center with leaves $\leaves{\boldsymbol{c}}{t} \subset {S}$ for some $t \in \{A, B\}$.
	Since the leaves form an independent set, they can be simultaneously included in ${S}$.

	We can decompose $h_{{S}} = h_{\vleafdep{t,1}_{\boldsymbol{c}}} h_{\vleafdep{t,2}_{\boldsymbol{c}}} h_{\vleafdep{3}_{\boldsymbol{c}}} h_{{S} \setminus \leaves{\boldsymbol{c}}{t}}$.
	Define $\boldsymbol{c}' = \overline{2m+3}$ if $\boldsymbol{c} = 2m-1$, and vice versa.
	For $\boldsymbol{d} \in \{\boldsymbol{c}, \boldsymbol{c}'\}$ and $\tau \in \{A, B\}$, define ${S}_{\boldsymbol{d}, \tau} \equiv \leaves{\boldsymbol{d}}{\tau} \cup ({S} \setminus \leaves{\boldsymbol{c}}{t})$.

	The summation includes all four pairs $(\boldsymbol{d}, {S}_{\boldsymbol{d}, \tau})$ (see Appendix~\ref{app:cc-part-cancellation} for details).
	These cancel as follows:
	\begin{align}
		\sum_{\boldsymbol{d} \in \{\boldsymbol{c}, \boldsymbol{c}'\}} \sum_{\tau \in \{A, B\}}
		h_{\boldsymbol{d}} h_{{S}_{\boldsymbol{d}, \tau}}
		 & =
		\sum_{\tau \in \{A, B\}}
		\left[
			h_{\boldsymbol{c}}
			h_{\vleafdep{\tau,1}_{\boldsymbol{c}}}
			h_{\vleafdep{\tau,2}_{\boldsymbol{c}}}
			h_{\vleafdep{3}_{\boldsymbol{c}}}
			+
			h_{\boldsymbol{c}'}
			h_{\vleafdep{\tau,1}_{\boldsymbol{c}'}}
			h_{\vleafdep{\tau,2}_{\boldsymbol{c}'}}
			h_{\vleafdep{3}_{\boldsymbol{c}'}}
			\right]
		h_{{S} \setminus \leaves{\boldsymbol{c}}{t}}
		\nonumber \\
		 & =
		(\a{2m-1} \C{2m+2} - \C{2m} \a{2m+3}) h_{{S} \setminus \leaves{\boldsymbol{c}}{t}}
		\nonumber \\
		 & =
		0,
	\end{align}
	where the last equality follows from the relation~\eqref{eq:ac-ca-relation}.
\end{proof}

The next theorem establishes that the higher-order charges are mutually commuting:

\begin{thm}
	\label{thm:mutual-commutativity}
	The charges~\eqref{eq:charges} mutually commute:
	\begin{align}
		[\Q{M}{k}, \Q{M}{l}] = 0 \quad \forall k, l \in \{1, 2, \ldots, S_M\}.
	\end{align}
\end{thm}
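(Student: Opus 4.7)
The plan is to generalize the cancellation argument of Theorem~\ref{thm:charge-conservation-law} to the double sum
\begin{equation*}
	[\Q{M}{k},\Q{M}{l}] = \sum_{S\in\indepset{M}{k}}\sum_{T\in\indepset{M}{l}} [h_S, h_T],
\end{equation*}
in which $[h_S,h_T]=2h_Sh_T$ exactly when the number of edges of $G_M$ with one endpoint in $S\setminus T$ and the other in $T\setminus S$ is odd. I would aim to exhibit a sign-reversing involution on this set of anticommuting pairs, organizing the nonzero contributions into cancelling pairs (from single-edge flips) or cancelling quartets (from claw flips), directly paralleling the two mechanisms that appeared in the proof of Theorem~\ref{thm:charge-conservation-law}.

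For the \textbf{edge-flip} step, given an anticommuting pair $(S,T)$, I would single out a canonical cross-edge $\{\boldsymbol{i},\boldsymbol{j}\}$ with $\boldsymbol{i}\in S\setminus T$, $\boldsymbol{j}\in T\setminus S$ such that the swap $S\mapsto(S\setminus\{\boldsymbol{i}\})\cup\{\boldsymbol{j}\}$ and $T\mapsto(T\setminus\{\boldsymbol{j}\})\cup\{\boldsymbol{i}\}$ yields independent sets of the prescribed sizes $k$ and $l$. The paired contributions then cancel because $h_{\boldsymbol{i}}h_{\boldsymbol{j}}=-h_{\boldsymbol{j}}h_{\boldsymbol{i}}$, extending~\eqref{eq:cancelation-first-term} from the single-vertex setting of Theorem~\ref{thm:charge-conservation-law} to general pairs of independent sets.

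The obstruction to this edge-flip involution is precisely that every eligible cross-edge has some endpoint whose swap would violate independence, which, by the structural analysis of $G_M$, forces the presence of a claw whose center and three leaves are distributed between $S$ and $T$ in one of the patterns catalogued in Figures~\ref{fig:claw} and~\ref{fig:claw-bar}. For these residual pairs I would apply the quartet cancellation already used in the proof of Theorem~\ref{thm:charge-conservation-law}: summing over the two claw centers $\boldsymbol{d}\in\{h_{2j-1},\hbar{2j+3}\}$ and the two claw types $\tau\in\{A,B\}$ (which share the common third leaf $\vleafindep_{\boldsymbol{c}}$) factors out the combination $\a{2m-1}\C{2m+2}-\C{2m}\a{2m+3}$, which vanishes by the key relation~\eqref{eq:ac-ca-relation}.

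The main obstacle I anticipate is the combinatorial bookkeeping needed to show that these two cancellation schemes partition the set of anticommuting pairs without overlap or omission. In particular, one must confirm that a pair blocked from every edge flip always completes to exactly one claw quartet (even when the shared leaf $\vleafindep_{\boldsymbol{c}}$ accidentally coincides with a vertex of $S\cap T$), and that the pairs paired by the edge-flip involution are genuinely disjoint from those assembled by the claw-flip one. Handling these boundary configurations, together with the endpoints of the chain where the recursion~\eqref{eq:beta_relation} does not apply, will constitute the technical heart of the proof deferred to Appendix~\ref{app:mutual-commutativity}.
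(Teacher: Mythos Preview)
Your edge-flip mechanism is too weak, and this is already visible in the claw-free situation. Take a pair $(S,T)$ for which $G[S\oplus T]$ is a single path $a_1-a_2-\cdots-a_{2n}$ with $a_1,a_3,\ldots\in S$ and $a_2,a_4,\ldots\in T$. For any cross-edge $\{a_i,a_{i+1}\}$ the swap inserts $a_{i+1}$ into the set that still contains $a_{i+2}$ (or $a_i$ into the set that still contains $a_{i-1}$), so no single edge can be flipped while preserving independence; yet no claw is present. The paper therefore does \emph{not} flip edges but swaps an entire \emph{balanced-odd-edges component} $\mathcal{O}\subset L\oplus R$ satisfying $|\mathcal{O}\cap L|=|\mathcal{O}\cap R|$ and $|E_{\mathcal{O}}|$ odd, which always preserves both independence and the cardinalities $k,l$. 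Your claim that ``failure of every edge flip forces a claw'' is thus incorrect.

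The second gap is that a single-claw quartet does not suffice once the special-odd vertex of a claw also lies in $L\oplus R$. The paper introduces \emph{extended claws} $\cc{n}{t}{c}$ of arbitrary depth $n$ (Figures~\ref{fig:extended-claw-configuration}--\ref{fig:CC-part-one-spodd-plusone}), where each successive centre is the special-odd vertex of the previous one; the corresponding four-term cancellation uses the iterated identity
\[
	\a{2j-1}\,\prod_{l=1}^{n}\C{2j+4l-2}=\prod_{l=1}^{n}\C{2j+4l-4}\,\a{2j-1+4n},
\]
obtained by repeated application of~\eqref{eq:ac-ca-relation}. The exhaustiveness argument then shows that any connected component of $G[L\oplus R]$ with $|\mathcal{O}\cap L|\neq|\mathcal{O}\cap R|$ necessarily contains such a CC-part, while extended claws attached to their frozen-even or special-odd vertices collapse to ``effective vertices'' inside balanced components. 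These structural facts about $G_M$ are the technical heart of the proof, and they go well beyond the two mechanisms you imported from Theorem~\ref{thm:charge-conservation-law}.
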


The proof of Theorem~\ref{thm:mutual-commutativity} is given in Appendix~\ref{app:mutual-commutativity}.
While it requires a more careful analysis of the frustration graph, the essential point remains the same: the relation~\eqref{eq:ac-ca-relation} plays a crucial role.

\subsection{Recursion for the higher-order charges}
We here show the recursion relations that the higher-order charges satisfy.
We can define similar quantities to those in~\eqref{eq:charges} when given a frustration graph $G$ that is a subgraph of $G_M = (V_M, E_M)$ for some $M$.
Let $\indepset{G}{k}$ denote the collection of all independent sets of order $k$ in $G$.
We also define a \emph{pseudo-charge} that reproduces the charges~\eqref{eq:charges} for the appropriate choice of frustration graph $G$:
\begin{align}
	\Q{G}{k}
	\equiv
	\sum_{S \in \indepset{G}{k}} h_S,
	\label{eq:pseudo-charges}
\end{align}
where we have used the abbreviated notation~\eqref{eq:shorthand-notation-for-prod}.
Note that $\{\Q{G}{k}\}_{k=1,2,\ldots}$ are not necessarily mutually commuting charges.
They form a set of mutually commuting charges when the frustration graph is free fermionic, as in Figure~\ref{fig:free-fermionic-frustration-graph} and Figure~\ref{fig:frustrationgraph-for-even-Q-recursion}.

The pseudo-charge~\eqref{eq:pseudo-charges} satisfies the following recursion:
\begin{align}
	\Q{G}{k}
	 & =
	\Q{G \setminus K}{k}
	+
	\sum_{\boldsymbol{j} \in K} h_{\boldsymbol{j}} \Q{G \setminus \Gamma[\boldsymbol{j}]}{k-1},
	\label{eq:general-Q-recursion}
\end{align}
where $K$ is a clique in $G$.
The proof of~\eqref{eq:general-Q-recursion} follows trivially from the definition of the pseudo-charge~\eqref{eq:pseudo-charges}.
Please refer to~\cite{fermions-behind-the-disguise,unified-graph-th} for a more detailed explanation.

Using the recursion for the pseudo-charges~\eqref{eq:general-Q-recursion}, we can derive the following recursion for the charges~\eqref{eq:charges}:
\begin{equation}
	\begin{split}
		\Q{2m}{k}
		 & =
		\Q{2m-1}{k}
		+
		h_{2m} \Q{2m-3}{k-1}
		+
		h_{\overline{2m+1}} \Q{2m-4}{k-1}
		\\
		 & \qquad
		+
		h_{2m-2} h_{\overline{2m+1}} \Q{2m-6}{k-2}
		+
		h_{2m-5} h_{\overline{2m+1}} \Q{2m-8}{k-2},
		\label{eq:m2-charge-recursion-even}
	\end{split}
\end{equation}
and
\begin{equation}
	\Q{2m-1}{k}
	=
	\Q{2m-2}{k}
	+	h_{2m-1} \Q{2m-4}{k-1}.
	\label{eq:m2-charge-recursion-odd}
\end{equation}
The initial conditions are $\Q{0}{0} = \Q{-1}{0} = \Q{-2}{0} = \mathbb{1}$ and $\Q{m}{k} = 0$ for $k<0$ and $\Q{m}{k} = 0$ for $m < 0$ and $k>0$.
The proofs of~\eqref{eq:m2-charge-recursion-even} and~\eqref{eq:m2-charge-recursion-odd} are given in Appendix~\ref{app:charge-recursion}.
From these charge recursions, we can confirm that $S_M = \floor{(M+2)/3}$.
\section{Transfer matrix\label{sec:transfer}}
In this section, we introduce the transfer matrix for the Hamiltonian~\eqref{eq:general-extended-FFD-hamiltonian}, from which we can derive the polynomial for the excitation spectrum.
We define the transfer matrix from the conserved charges~\eqref{eq:charges} as
\begin{align}
	\label{eq:transfer-mat}
	T_M(u) \equiv \sum_{k \ge 0} (-u)^k \Q{M}{k},
\end{align}
where $u$ is the spectral parameter.

The transfer matrix satisfies the following recursion relations for even and odd $M$, respectively:
\begin{equation}
	\begin{split}
		T_{2m}(u) = & T_{2m-1}(u) -
		u h_{2m}T_{2m-3}(u)
		- u h_{\overline{2m+1}} T_{2m-4}(u)                        \\
		            & +u^2 h_{2m-2} h_{\overline{2m+1}}T_{2m-6}(u)
		+u^2 h_{2m-5} h_{\overline{2m+1}} T_{2m-8}(u),
		\label{eq:m2-transfermat-recursion-even}
	\end{split}
\end{equation}
and
\begin{equation}
	T_{2m-1}(u)
	=
	T_{2m-2}(u)
	-
	u h_{2m-1}T_{2m-4}(u).
	\label{eq:m2-transfermat-recursion-odd}
\end{equation}
The initial conditions are $T_{0}(u) = T_{-2}(u) = \mathbb{1}$, with the convention that $h_{m} = h_{\overline{2m-1}} = 0$ for all $m \le 0$.
The recursion relations for the transfer matrix~\eqref{eq:m2-transfermat-recursion-even} and~\eqref{eq:m2-transfermat-recursion-odd} follow directly from the recursions for the charges~\eqref{eq:m2-charge-recursion-even} and~\eqref{eq:m2-charge-recursion-odd}.

In the limit of the FFD model we set $h_{\bar k}\equiv 0$, and we get the recursion relation of the FFD model~\cite{fendley-fermions-in-disguise}:
\begin{equation}
	T_{k}(u) = T_{k-1}(u) - u {h}_{k}T_{k-3}(u).
\end{equation}

Combining the two equations above, we obtain a recursion relation that only contains the even $M$ transfer matrix:
\begin{align}
	T_{2m}(u)
	= &
	T_{2m-2}(u)
	-u \s{2m} T_{2m-4}(u)
	+u^2 \a{2m-1} T_{2m-6}(u)
	+u^2 \C{2m-2} T_{2m-8}(u)
	\,,
	\label{eq:m2-transfermat-recursion-only-even}
\end{align}
where we defined
\begin{align}
	\s{2m} \equiv {h}_{2m-1} + {h}_{2m} + \hbar{2m+1}\,.
\end{align}
One can further simplify this recursion substantially to read
\begin{equation}\label{eq:shortrec}
	T_{2 m}(u) = T_{2 m -2}(u) - \frac{u}{2} \lbrace \s{2m}, T_{2 m -2}(u)\rbrace,
\end{equation}
as shown in Appendix~\ref{subsec:mainfermion}.

The following theorem provides the polynomial whose roots determine the free fermionic spectrum:
\begin{thm}\label{thm:poly-recursion}
	The transfer matrix satisfies the following simple inversion relation
	\begin{equation}\label{eq:poly}
		T_M(u)T_M(-u) = P_M(u^2) \cdot \mathbb{1}
	\end{equation}
	where $P_M(z)$ is a polynomial of degree $S_M$.
	For even $M$ it satisfies the following recursion:
	\begin{align}
		P_{2m}(u^2)
		= &
		P_{2m-2}(u^2)
		-
		u^2 \s{2m}^2 P_{2m-4}(u^2)
		+
		u^4 \a{2m-1}^2 P_{2m-6}(u^2)
		-
		u^4 \C{2m-2}^2 P_{2m-8}(u^2)
		\,.
		\label{eq:m2-poly-recursion-even}
	\end{align}
	For the case of $P_{M}(u^2)$ with odd $M$, we have
	\begin{align}
		P_{2m-1}(u^2)
		= &
		P_{2m-2}(u^2)
		-
		u^2 {b}^2_{2m-1}P_{2m-4}(u^2)
		\,.
		\label{eq:m2-poly-recursion-odd}
	\end{align}
	The starting point is $P_{0}(u^2)= P_{-2}(u^2) = 1$ such that $b_{m}=b_{\overline{2m-1}}=0, ~\forall m\le 0$ formally.
\end{thm}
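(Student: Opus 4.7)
The plan is to establish all three claims of the theorem---the inversion $T_M(u)T_M(-u)=P_M(u^2)\cdot\mathbb{1}$, the recursions~\eqref{eq:m2-poly-recursion-even}--\eqref{eq:m2-poly-recursion-odd}, and the degree $\deg P_M=S_M$---simultaneously by induction on $M$. The base cases $M\le 2$ are a routine verification from the initial conditions. Note first that because the charges $\Q{M}{k}$ mutually commute (Theorem~\ref{thm:mutual-commutativity}), the coefficient of $u^{2j+1}$ in $T_M(u)T_M(-u)$ vanishes under a $k\leftrightarrow l$ swap in the double sum $\sum_{k,l}(-1)^k u^{k+l}\Q{M}{k}\Q{M}{l}$, so $T_M(u)T_M(-u)$ is automatically a polynomial in $u^2$; the content of the theorem is that the operator coefficients are actually scalar and satisfy the stated recursion.

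For the odd step $M=2m-1$, I would expand via~\eqref{eq:m2-transfermat-recursion-odd}, using that $h_{2m-1}$ commutes with $T_{2m-4}(u)$ (no generator of $V_{2m-4}$ is a neighbour of $h_{2m-1}$ in $G_M$). The diagonal pieces contribute $P_{2m-2}(u^2)-u^2 b_{2m-1}^2 P_{2m-4}(u^2)$ by the induction hypothesis. The linear-in-$u$ cross term reduces to $u\,h_{2m-1}\bigl[T_{2m-2}(u)T_{2m-4}(-u)-T_{2m-4}(u)T_{2m-2}(-u)\bigr]$; substituting~\eqref{eq:shortrec} for $T_{2m-2}(\pm u)$ and using $\{h_{2m-1},\s{2m-2}\}=0$ (since $h_{2m-1}$ anticommutes with each of $h_{2m-3}$, $h_{2m-2}$, $\hbar{2m-1}$), this collapses to $\frac{u^2}{2}h_{2m-1}[\s{2m-2},P_{2m-4}(u^2)]=0$, producing exactly~\eqref{eq:m2-poly-recursion-odd}.

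For the even step $M=2m$, I would use the simplified recursion~\eqref{eq:shortrec}. The main term yields $P_{2m-2}(u^2)$ and the linear cross terms collapse as $\frac{u}{2}[P_{2m-2}(u^2),\s{2m}]=0$. The nontrivial piece is $-\frac{u^2}{4}\{\s{2m},T_{2m-2}(u)\}\{\s{2m},T_{2m-2}(-u)\}$. The key structural observation is that $\s{2m}$ commutes with every generator in $V_{2m-6}$ (straightforward from the extended FFD relations), so $T_{2m-6},T_{2m-8},T_{2m-10}$ all commute with $\s{2m}$. Expanding $T_{2m-2}(u)$ via~\eqref{eq:m2-transfermat-recursion-only-even} and computing the elementary anticommutators $\{\s{2m},\s{2m-2}\}=2\a{2m-1}$, $\{\s{2m},\a{2m-3}\}=0$, $\{\s{2m},\C{2m-4}\}=0$, one obtains $\{\s{2m},T_{2m-2}(u)\}=\{\s{2m},T_{2m-4}(u)\}-2u\,\a{2m-1}\,T_{2m-6}(u)$. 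The residual anticommutator is handled by noting that $h_{2m-5}$ is the only generator of $V_{2m-4}$ not commuting with $\s{2m}$, with $[\s{2m},h_{2m-5}]=-2\C{2m-2}$, expressing the remainder as a $\C{2m-2}$-multiple of a sub-transfer matrix.

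The main obstacle is then the algebraic reduction of the product of the two anticommutators, which must be shown to equal $4\s{2m}^2 P_{2m-4}(u^2)-4u^2\a{2m-1}^2 P_{2m-6}(u^2)+4u^2\C{2m-2}^2 P_{2m-8}(u^2)$. Here one combines (i) the induction hypothesis that the lower $T(u)T(-u)$ are the scalars $P_{2m-4},P_{2m-6},P_{2m-8}$, (ii) the scalarity of $\s{2m}^2$, $\a{2m-1}^2$, $\C{2m-2}^2$ established in earlier sections via the central even-hole element $\mu_{2m-1}$, and (iii) the algebraic relation~\eqref{eq:ac-ca-relation}, which is precisely what forces the potentially non-scalar mixed cross products to reorganise into the desired diagonal form. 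This yields~\eqref{eq:m2-poly-recursion-even}. The degree statement $\deg P_M=\lfloor(M+2)/3\rfloor=S_M$ then follows by tracking leading orders through the two recursions, consistent with the independence-number calculation in Appendix~\ref{app:proof-Sk}.
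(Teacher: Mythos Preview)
Your route is genuinely different from the paper's: you attempt a direct induction on $M$ by expanding $T_M(u)T_M(-u)$ through the transfer-matrix recursions, whereas the paper (Appendix~\ref{app:polyproof}) first proves scalarity of $T_G(u)T_G(-u)$ \emph{non-inductively} by rewriting it as a sum over pairs $(S,T)$ of independent sets and showing---via the even-hole structure and the cancellation mechanisms of Appendix~\ref{app:mutual-commutativity}---that only pairs with $S\oplus T$ equal to a union of compatible even holes survive, each contributing a scalar through the central elements $\mu_C$; the recursion then falls out of a general clique-removal identity~\eqref{eq:general-poly-recursion} with an explicit even-hole correction term $2u^4\mu_C P_{G\setminus\Gamma[C]}$.

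Your odd step is fine: after substituting~\eqref{eq:shortrec} and using $\{h_{2m-1},\s{2m-2}\}=0$ the cross term indeed collapses to a commutator with the scalar $P_{2m-4}$. For the even step, the linear cross terms also vanish as you say, but the quadratic piece is where the entire content lies and you have not actually done it. Writing $\tfrac12\{\s{2m},T_{2m-2}(u)\}$ in the two forms $T_{2m-4}(u)\s{2m}+u\C{2m-2}T_{2m-8}(u)-u\a{2m-1}T_{2m-6}(u)$ and $\s{2m}T_{2m-4}(-u)+u\C{2m-2}T_{2m-8}(-u)+u\a{2m-1}T_{2m-6}(-u)$ and multiplying, the three diagonal pieces do give $\s{2m}^2P_{2m-4}-u^2\a{2m-1}^2P_{2m-6}+u^2\C{2m-2}^2P_{2m-8}$; but the six cross terms are not visibly zero. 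They involve objects like $T_{2m-4}(u)\,\C{2m-2}\,T_{2m-8}(-u)$ with $\C{2m-2}=h_{2m-5}\hbar{2m+1}$ anticommuting with the full clique $\{h_{2m-7},h_{2m-6},h_{2m-5},h_{2m-4},\hbar{2m-5},\hbar{2m-3}\}$ inside $T_{2m-4}$, and similarly $\a{2m-1}$ fails to commute with $T_{2m-4}$. Your assertion that~\eqref{eq:ac-ca-relation} is ``precisely what forces'' these to reorganise is unconvincing: that identity reads $\a{2m-1}\C{2m+2}=\C{2m}\a{2m+3}$, and none of $\C{2m}$, $\C{2m+2}$, $\a{2m+3}$ occur in your expansion at level $2m$. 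In the paper's argument,~\eqref{eq:ac-ca-relation} enters only indirectly through Theorem~\ref{thm:mutual-commutativity}; the scalarity is then a consequence of the even-hole combinatorics, not of a direct operator cancellation at this stage. So the inductive strategy may ultimately be completable, but the step you flag as ``the main obstacle'' remains an obstacle: you have named it without supplying the mechanism that resolves it.
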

In Appendix~\ref{app:polyproof}, we provide the proof of Theorem~\ref{thm:poly-recursion}.
We note that the following quantities appearing in~\eqref{eq:m2-poly-recursion-even} in Theorem~\ref{thm:poly-recursion} are all scalars:
\begin{align}
	\s{2m}^2   & = b_{2m-1}^2 + b_{2m}^2 + \bbar{2m+1}^2
	\,,
	\\
	\a{2m-1}^2 & = (b_{2m-3}b_{2m} \pm b_{2m-2} \bbar{2m+1})^2\,,
	\\
	\C{2m-2}^2 & = b_{2m-3}^2 \bbar{2m+3}^2
	\,,
\end{align}
where the sign factor in the second line is determined by which sign we choose for the central element $\mu_{2m-1}$ in~\eqref{eq:mu-scalar}.

In Appendix~\ref{app:proof-Sk}, we give the proof for $S_M = \floor{(M+2)/3}$ using the recursion for the polynomial~\eqref{eq:m2-poly-recursion-even} and~\eqref{eq:m2-poly-recursion-odd}.

\section{Free fermions}

\label{sec:fermion}

In this section, we derive the free fermionic operators for the Hamiltonian~\eqref{eq:new-model}.
First, we introduce the simplicial mode and corresponding simplicial clique.
Then, we construct the free fermion modes using the simplicial mode and simplicial clique.

The simplicial mode and clique are localized around one of the boundaries of the spin chain.
Correspondingly, there are two ways to proceed, by choosing either boundary.
The Hamiltonian is not space reflection symmetric, therefore the two cases will be genuinely different.
This is in contrast with the situation in the original FFD model, where the two boundaries of the chain were equivalent.

Although in the following we only treat the Hamiltonian~\eqref{eq:new-model}, whose frustration graph is given in Figure~\ref{fig:extended-ffd-frustration-graph-for-M2} and which is a special case of the more general Hamiltonian~\eqref{eq:general-extended-FFD-hamiltonian}, the free fermion solution can be constructed in the same way for the general case.

\subsection{Simplicial cliques and simplicial modes}
We now introduce two key structures that are central to our free-fermion solution, referring to the explanation in~\cite{fendley-fermions-in-disguise,unified-graph-th}.
A \emph{simplicial clique} $K_s$ is a clique with the additional property that, for every vertex $v \in K_s$, the neighborhood of $v$ outside $K_s$ induces a clique in $G \setminus K_s$.
More precisely, for each $v \in K_s$, the set $\Gamma(v) \setminus K_s$ forms a clique.
A graph is called \emph{simplicial} if it contains at least one simplicial clique.
For a simplicial clique $K_s$ in $G$, we can define for each $v \in K_s$ the clique $K_v = (\Gamma(v) \setminus K_s) \cup \{v\}$, such that $\Gamma[v] = K_s \cup K_v$.
All frustration graphs treated in this paper are simplicial.

Associated with each simplicial clique, we define a \emph{simplicial mode}.
A simplicial mode with respect to $K_s$ is an additional generator $\chi$ which anticommutes only with the generators corresponding to vertices in $K_s$.
The simplicial mode acts as a seed for constructing the fermionic eigenmodes.
By repeatedly commuting $\chi$ with the Hamiltonian, we generate a Krylov subspace whose structure is intimately connected to the induced path tree rooted at the simplicial clique.
This connection between the algebraic structure (commutation relations) and the graph structure (induced paths) is what enables the exact free-fermion solution.

In the following, we show the simplicial mode for our frustration graph (Figure~\ref{fig:extended-ffd-frustration-graph-for-M2}).

\subsection{\label{subsec:rightedge}Right-end simplicial mode}

First, we define the simplicial mode at the right-end of the chain.
In this case, the definition of simplicial mode differs for even $M$ and odd $M$.
The simplicial mode for even $M$ is
\begin{align}
	\qty{\chi_M, h_M} = \qty{\chi_M, h_{M-1}} = 0\,,\quad \qty[\chi_M, h_l] = 0\quad(l<M-1)\,.
\end{align}
The simplicial mode for odd $M$ is the same as the original FFD case~\cite{fendley-fermions-in-disguise}:
\begin{align}
	\qty{\chi_M, h_M} = 0\,,\quad \qty[\chi_M, h_l] = 0\quad(l<M)\,.
\end{align}
Then, the corresponding simplicial clique is
\begin{align}\label{eq:simplicial}
	K_{M}^{(s)} \equiv
	\begin{cases}
		\qty{M-1, M, \overline{M+1}} & \text{$M$ is even}
		\\
		\qty{M}                      & \text{$M$ is odd}
	\end{cases}
	\,.
\end{align}

\subsection{\label{subsec:leftedge}Left-end simplicial mode}
Next, we explain the simplicial mode at the left end of the chain.
In this case, the definition of the simplicial mode is the same for both even and odd $M$.
The simplicial mode is defined by
\begin{align}
	\qty{\chi_M, h_1} = \qty{\chi_M, h_{2}} = 0\,,\quad \qty[\chi_M, h_l] = 0\quad(2<l)\,.
\end{align}
Then, the corresponding simplicial clique is
\begin{align}
	K_{M}^{(s)} \equiv
	\qty{1, 2}
	\,.
\end{align}

\subsection{Construction of the free fermion mode}

Here we construct the free fermion mode using the simplicial clique introduced above.
We can use either of the left-end and right-end simplicial mode.

We would like to prove that the operators
\begin{equation}\label{eq:fermdef}
	\Psi_k \equiv \frac{T_M(-u_k)\chi_M T_M(u_k)}{\mathcal{N}_k}
\end{equation}
act as fermionic ladder operators that satisfy the eigenvalue problem of the adjoint action with the Hamiltonian as
\begin{equation}\label{eq:eigenpsi}
	[H, \Psi_k ] = 2\epsilon_k \Psi_k,
\end{equation}
where $\epsilon_k = u_k^{-1}$ are the energies of the fermion modes, and the $u_k$-s are the roots of the polynomials $P_M(u_k^2)=0$ in \eqref{eq:poly}.
As the latter come in opposite-sign pairs, we may define $u_{-k} = -u_k$ where $u_k>0$ for $k=1,2,\ldots,S_M$.
The $\Psi^\dagger_k = \Psi_{-k}$ are creating while $\Psi_k$ are annihilating a fermionic mode with positive energy for $k\geq 1$.
They also need to anticommute as
\begin{equation}\label{eq:anticomm}
	\lbrace \Psi_k , \Psi_{-k'}\rbrace = \delta_{k,k'} \cdot \mathbb{1}
\end{equation}
where the $k=k'$ case sets the normalization constant $\mathcal{N}_k$ in \eqref{eq:fermdef}.
Below we will show how the basic properties of the simplicial mode (edge operator) and the recursion for the transfer matrix lead to the above statements.

We start from the trivial identity
\begin{align}\label{eq:eigen}
	\qty[H_{M}, T_{M}(-u) \chi_M T_{M}(u)] = 2 \sum_{v \in K_{M}^{(s)}}T_{M}(-u) h_v \chi_M T_{M}(u)
	\,,
\end{align}
that follows from commuting $H_M$ through the transfer matrices due to Theorem~\ref{thm:charge-conservation-law}, and the properties of the simplicial mode.
Then we apply the next theorem to arrive at \eqref{eq:eigenpsi}.
\begin{lem}\label{lem:mainfermion}
	%	The free fermionic mode can be made from the following identity:
	The following modified inversion relation holds:
	\begin{align} \label{eq:mainfermion}
		T_{M}(-u)\qty(1 - u\sum_{v \in K_{M}^{(s)}}h_v )\chi_M T_{M}(u)
		=
		P_M(u^2)
		\qty(1 + u\sum_{v \in K_{M}^{(s)}}h_v) \chi_M
		\,.
	\end{align}
\end{lem}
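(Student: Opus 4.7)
The plan is to reduce the modified inversion to the standard inversion $T_M(u)T_M(-u)=P_M(u^2)\mathbb{1}$ of Theorem~\ref{thm:poly-recursion}, via two algebraic ingredients: a ``sign-flipping'' conjugation rule for $\chi_M$, and an intertwining identity between $T_M$ and $1\pm u s_M$, where $s_M\equiv\sum_{v\in K_M^{(s)}}h_v$. First I would decompose $T_M(u)=A_M(u)+B_M(u)$ according to whether the underlying independent set meets $K_M^{(s)}$. Because $K_M^{(s)}$ is a clique, each independent set contains at most one of its vertices, so $B_M(u)=\sum_{v\in K_M^{(s)}}h_v\,C_{M,v}(u)$ with each $C_{M,v}(u)$ supported on generators commuting with $h_v$. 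Combined with the defining commutation rules of $\chi_M$ this yields
\[
T_M(u)\chi_M=\chi_M\tilde{T}_M(u),\qquad \chi_M T_M(u)=\tilde{T}_M(u)\chi_M,
\]
where $\tilde{T}_M(u)\equiv A_M(u)-B_M(u)$ is the transfer matrix of the same Hamiltonian after flipping the sign of every coupling $b_v$ for $v\in K_M^{(s)}$. Since this leaves every $b_v^2$ invariant, Theorem~\ref{thm:poly-recursion} applies verbatim to give the companion relation $\tilde{T}_M(u)\tilde{T}_M(-u)=P_M(u^2)\mathbb{1}$.

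\textbf{Step 2 (intertwining identity).} The core algebraic claim is
\[
(1+u s_M)\,T_M(u)=\tilde{T}_M(u)\,(1-u s_M).
\]
For even $M$, where $K_M^{(s)}=\{M-1,M,\overline{M+1}\}$ and $s_M$ is precisely the object in the simplified recursion~\eqref{eq:shortrec}, I would substitute $T_M=T_{M-2}-\tfrac{u}{2}\{s_M,T_{M-2}\}$ and the sign-flipped counterpart $\tilde{T}_M=T_{M-2}+\tfrac{u}{2}\{s_M,T_{M-2}\}$ (the flip touches only the $\{s_M,T_{M-2}\}$ piece, since $T_{M-2}$ has no $K_M^{(s)}$ generator) and expand; the two sides then match term by term, using that $s_M^2$ is a scalar because the three elements of $K_M^{(s)}$ pairwise anticommute. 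For odd $M$ with $K_M^{(s)}=\{M\}$, the original recursion $T_M=T_{M-1}-u h_M T_{M-3}$ can be recast in the same simplified form $T_M=T_{M-1}-\tfrac{u}{2}\{h_M,T_{M-1}\}$: since the only summands of $T_{M-1}$ that commute with $h_M$ are those supported on the subgraph $V_{M-3}$, one has $\{h_M,T_{M-1}\}=2h_M T_{M-3}$, after which the same expansion finishes the identity. The left-end cases follow analogously with a mirror recursion.

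\textbf{Step 3 (assembly).} Using $s_M\chi_M=-\chi_M s_M$, Step~1, the intertwining identity of Step~2, and finally the $\tilde{T}$-inversion,
\begin{align*}
T_M(-u)(1-u s_M)\chi_M T_M(u)
&= T_M(-u)\chi_M(1+u s_M)T_M(u)\\
&= \chi_M\,\tilde{T}_M(-u)(1+u s_M)T_M(u)\\
&= \chi_M\,\tilde{T}_M(-u)\tilde{T}_M(u)(1-u s_M)\\
&= \chi_M\,P_M(u^2)(1-u s_M)\\
&= P_M(u^2)(1+u s_M)\chi_M,
\end{align*}
which is the right-hand side of~\eqref{eq:mainfermion}. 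The hardest step is Step~2: in the even right-end case it is essentially a one-line consequence of~\eqref{eq:shortrec}, but the odd (and left-end) cases first require the appropriate simplified recursion, which in turn rests on pinpointing exactly which summands of the previous transfer matrix commute with the generators adjacent to $K_M^{(s)}$.
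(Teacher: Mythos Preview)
Your proof is correct and follows essentially the same route as the paper: your intertwining identity $(1+us_M)T_M(u)=\tilde T_M(u)(1-us_M)$ is precisely the $\chi_M$-conjugate of the paper's master relation~\eqref{eq:master}, and both are established via the simplified recursion~\eqref{eq:shortrec} (and its odd-$M$ analogue, which the paper also records as~\eqref{eq:Tplus}--\eqref{eq:Tminus}) together with the scalar property $s_M^2\propto\mathbb{1}$. One minor remark: the companion inversion $\tilde T_M(u)\tilde T_M(-u)=P_M(u^2)$ follows more directly from $\tilde T_M=\chi_M T_M\chi_M$ and~\eqref{eq:poly} than from the coupling-sign argument, which would otherwise require checking that the sign flips preserve the constraint~\eqref{eq:ac-ca-relation}.
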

The proof of Lemma~\ref{lem:mainfermion} is given in~Appendix \ref{subsec:mainfermion}.
Then, we have the following theorem:
\begin{thm}
	The fermionic operators defined by \eqref{eq:fermdef} satisfy the anticommutation relations and also the commutation relation \eqref{eq:eigenpsi}.
\end{thm}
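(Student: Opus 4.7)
The plan is to establish the two identities separately, with the commutation relation \eqref{eq:eigenpsi} following immediately from the earlier development and the bulk of the work going into the anticommutation relations \eqref{eq:anticomm}. Throughout, set $A(u)\equiv T_M(-u)\chi_M T_M(u)$ and $S\equiv \sum_{v\in K_M^{(s)}} h_v$, so that $\Psi_k=A(u_k)/\mathcal{N}_k$.

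For \eqref{eq:eigenpsi}, expanding Lemma~\ref{lem:mainfermion} gives
\begin{equation*}
A(u)-u\,T_M(-u)\,S\,\chi_M\,T_M(u) = P_M(u^2)(1+uS)\chi_M.
\end{equation*}
Evaluating at a root $u=u_k$ (where $P_M(u_k^2)=0$) yields $T_M(-u_k) S \chi_M T_M(u_k)=u_k^{-1} A(u_k)$. Substituting this into the right-hand side of \eqref{eq:eigen} gives $[H_M,A(u_k)]=2\epsilon_k A(u_k)$, which is \eqref{eq:eigenpsi} after dividing by $\mathcal{N}_k$.

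For the anticommutation relation \eqref{eq:anticomm}, the nilpotency $\Psi_k^2=\Psi_{-k}^2=0$ is immediate: in
$A(u_k)^2=T_M(-u_k)\chi_M\bigl[T_M(u_k)T_M(-u_k)\bigr]\chi_M T_M(u_k)$
the middle bracket equals $P_M(u_k^2)\cdot\mathbb{1}=0$ by the inversion relation \eqref{eq:poly}. For the remaining cases, I would analyze $A(u)A(-v)+A(-v)A(u)$ at generic $u,v$ by (i) using mutual commutativity of transfer matrices at different spectral parameters, (ii) moving $\chi_M$ across them via the twisting rule $T_M(w)\chi_M=\chi_M \tilde T_M(w)$, where $\tilde T_M(w)$ is the transfer matrix obtained by flipping the sign of each $h_v$ with $v\in K_M^{(s)}$, and (iii) collapsing the two $\chi_M$'s using $\chi_M^2\in\mathbb{C}$. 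Setting $u=u_k$, $v=u_{k'}$ and invoking Lemma~\ref{lem:mainfermion} at each parameter forces the off-diagonal ($k\neq k'$) expression to vanish, while the diagonal case leaves a non-zero scalar proportional to a residue of $P_M$ at the simple root $u_k$, which fixes $\mathcal{N}_k$ so that $\{\Psi_k,\Psi_{-k}\}=\mathbb{1}$.

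The main obstacle is the explicit cancellation in the off-diagonal anticommutator. The rearrangement above produces mixed products $T_M\tilde T_M$ whose vanishing does not match Lemma~\ref{lem:mainfermion} directly. A convenient way to organize the calculation is to multiply two copies of the Lemma identity (at $u_k$ and $u_{k'}$), use $\chi_M^2\in\mathbb{C}$ to eliminate the $\chi_M$ factors on the right, and thereby derive an auxiliary scalar identity that matches one of the two pieces of $\{A(u_k),A(-u_{k'})\}$; antisymmetrizing in the two spectral parameters then closes the argument. The diagonal normalization follows by extracting the non-vanishing contribution through an expansion around the simple root $u_k$, which involves $P_M'(u_k^2)$.
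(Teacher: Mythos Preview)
Your derivation of \eqref{eq:eigenpsi} is correct and matches the paper, and the nilpotency $\Psi_k^2=0$ via the inversion relation is fine.

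The off-diagonal anticommutation argument has a genuine gap. The twisting identity $T_M(w)\chi_M=\chi_M\tilde{T}_M(w)$ is valid, but as you yourself note it produces mixed products $T_M\tilde{T}_M$ that are controlled neither by \eqref{eq:poly} nor by Lemma~\ref{lem:mainfermion}; such products are not scalars in general, and the proposed fix (``multiply two copies of the Lemma identity and antisymmetrize'') is too vague to constitute a proof.

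The paper's route avoids $\tilde{T}_M$ altogether. Lemma~\ref{lem:mainfermion} is equivalent, after multiplying by $T_M(-u)$ and using \eqref{eq:poly}, to the master relation $T_M(v)(\chi_M+vS\chi_M)=(\chi_M-vS\chi_M)T_M(v)$. Combining this with the already-proved \eqref{eq:eigenpsi} in the form $T_M(-u_k)S\chi_M T_M(u_k)=u_k^{-1}A(u_k)$ and with $[T_M(u),T_M(v)]=0$ yields an exchange relation
\[
(1-v/u_k)\,A(u_k)\,T_M(v)=(1+v/u_k)\,T_M(v)\,A(u_k).
\]
With this, $\{A(u_k),A(v)\}$ collapses to $\tfrac{1-v/u_k}{1+v/u_k}\,T_M(-v)\,\{A(u_k),\chi_M\}\,T_M(v)$. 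A short computation using the recursion for $T_M$ shows that $\{A(u),\chi_M\}=2\bigl(2P_{M-r_M}(u^2)-P_M(u^2)\bigr)\mathbb{1}$ is already a scalar, so by \eqref{eq:poly} the full anticommutator is a scalar multiple of $P_M(v^2)$, vanishing at every root $v=u_j$ with $j\neq-k$; the $v\to-u_k$ limit gives a finite residue proportional to $P_{M-r_M}(u_k^2)\,u_k^2P_M'(u_k^2)$, which fixes $\mathcal{N}_k$. This exchange relation is the missing ingredient in your sketch.
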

\begin{proof}
	Replacing the term on the l.h.s. of \eqref{eq:mainfermion} that appears on the r.h.s. of \eqref{eq:eigen} using the latter relation gives
	\begin{align}
		\qty[H_{M}, T_{M}(-u) \chi_M T_{M}(u)] = 2 u^{-1} T_{M}(-u) \chi_M T_{M}(u)
		-  2 P_M(u^2) \qty(u^{-1} + \sum_{v \in K_{M}^{(s)}}h_v) \chi_M,
	\end{align}
	and evaluating it at a root $u_k$ of the polynomial $P_M(u_k^2)=0$ we get \eqref{eq:eigenpsi} (up to the undetermined normalization factor $\mathcal{N}_k$).
	For the proof of the canonical anticommutations in \eqref{eq:anticomm} see Appendix \ref{subsec:CAR}.
\end{proof}

As the number of fermionic modes obtained this way depends on the number of roots of the polynomial $P_M(u^2)$, their number is $S_M$.
The number of the different energy levels is thus $2^{\lfloor (M+2)/3\rfloor}$ and since the dimension of the Hilbert space is $2^M$ it means the energy eigenstates are exponentially degenerate.
An understanding of these degeneracies in the case of the FFD model has been achieved in \cite{eric-lorenzo-ffd-1}.

Following the arguments of Section 3.4 in~\cite{fendley-fermions-in-disguise}, and using \eqref{eq:poly} and \eqref{eq:tmpsiflip}, it is possible to show that the basis of fermionic modes is complete in the sense that one can reconstruct the Hamiltonian and the transfer matrix itself as
\begin{equation}\label{eq:reconstr}
	H = \sum_{k=1}^{S_M} \epsilon_k [\Psi_k,\Psi_{-k}],\qquad T_M(u) = \prod_{k=1}^{S_M} \left(1 - u \epsilon_k [\Psi_k,\Psi_{-k}] \right).
\end{equation}

Note that similarly to \cite{sajat-ffd-corr} we may extend our fermion algebra by a Majorana zero mode $\Psi_0$, that only exists for certain system sizes.
If it exists, it may be defined as the $u\to\infty$ limit
\begin{equation}\label{eq:zerodef}
	\Psi_{0}\equiv\frac{1}{2 C_0}\left(\chi_M+\lim_{u\to\infty}\frac{T_{M}(-u)\chi_M T_{M}(u)}{P_{M}(u^{2})}\right),
\end{equation}
where $C_0$ is a known constant defined in \eqref{eq:CcoeffsZero}.
It satisfies the following properties:
\begin{equation}
	[H,\Psi_0] = 0, \quad \Psi_0^2 = \mathbb{1}, \quad \text{and}\quad \{\Psi_0,\Psi_k\} = 0 ~\text{for}~k=\pm 1,\pm 2, \ldots ,\pm S_M.
\end{equation}
%In summary, the fermion modes anticommute as $\{\Psi_{k},\Psi_{k'}\}=2^{\delta_{k,0}}\delta_{k+k'}\mathbb{1}$ with $k,k'\in\{-S,\ldots,-1,0,1,\ldots,S\}$.
Although the zero mode is not needed for the reconstruction of the Hamiltonian and the transfer matrix in \eqref{eq:reconstr} (due to its zero energy $\epsilon_0 = 0$), as we will see in Section \ref{sec:corr}, it appears in the decomposition of the simplicial mode.

\section{\label{sec:corr}Correlation functions}

So far we obtained the solution of the Hamiltonian in terms of the fermionic operators.
As a next step we can also compute certain correlation functions, namely the correlations of those operators which take a sufficiently simple form in terms of the fermions.
This was achieved in \cite{sajat-ffd-corr} for the FFD model, and now we generalize that formalism to our model.

The first step is solving the inverse problem for the edge operator in Subsection \ref{subsec:rightedge}, i.e.
decomposing it into the fermion modes:
\begin{equation}
	\chi_M=\sum_{k=-S_M}^{S_M}C_{k}\Psi_{k},\label{eq:chidecomp}
\end{equation}
where the coefficients $C_{k}$ read
\begin{align}
	C_{j} & =\sqrt{\frac{P_{M-r_{M}}(u_{j}^{2})}{-u_{j}^{2}P_{M}'(u_{j}^{2})}}\quad\text{for $j\neq0$}\,,
	\label{eq:Ccoeffs}
\end{align}
and
\begin{align}
	C_{0} & =\sqrt{\lim_{u\to\infty}\frac{P_{M-r_{M}}(u^{2})}{P_{M}(u^{2})}}
	\,,
	\label{eq:CcoeffsZero}
\end{align}
with $r_{2m} = 2$ and $r_{2m+1}=1$.
The zero mode is not present if $S_{M-r_M} < S_M$, in this case $C_{0}=0$.
We may calculate infinite temperature correlators
$\langle \cdot \rangle \equiv \mathrm{Tr}(\cdot)/\mathrm{Tr}(\mathbb{1})$
of certain operators defined by the recursion
\begin{equation}\label{eq:krylov}
	o_{j}=\frac{1}{2}[H,o_{j-1}],\quad o_{0}\equiv\chi_M
\end{equation}
that are elements of the Krylov-subspace generated by repeated action of the commutator with the Hamiltonian.
Using \eqref{eq:chidecomp}, their time evolution reads
\begin{equation}\label{eq:ojt}
	o_{j}(t)=e^{iHt}o_{j}e^{-iHt}=\sum_{k=-S_M}^{S_M}C_{k}\epsilon_{k}^{j}e^{2i\epsilon_{k}t}\Psi_{k},
\end{equation}
and it is useful to define their anticommutator that is proportional to the identity
\begin{equation}
	\{o_{j_{1}}(t_{1}),o_{j_{2}}(t_{2})\}\equiv2B_{j_{1},j_{2}}(t_{1}-t_{2})\mathbb{1}.
\end{equation}
It is easy to show that Wick theorem leads to
\begin{align}\label{eq:Wick}
	\langle o_{j_{1}}(t_{1})o_{j_{2}}(t_{2})o_{j_{3}}(t_{3})o_{j_{4}}(t_{4})\rangle & =B_{j_{1},j_{2}}(t_{1,2})B_{j_{3},j_{4}}(t_{3,4})-B_{j_{1},j_{3}}(t_{1,3})B_{j_{2},j_{4}}(t_{2,4}) \nonumber \\
	                                                                                & \hspace{5em} +B_{j_{1},j_{4}}(t_{1,4})B_{j_{2},j_{3}}(t_{2,3})
\end{align}
where $t_{i,j}\equiv t_i-t_j$, and the function
\begin{equation}
	B_{j_{1},j_{2}}(t)=\frac{(-1)^{j_{2}}}{(2i)^{j_{1}+j_{2}}}\partial_{t}^{j_{1}+j_{2}}B(t)
\end{equation}
can be computed from the self-correlator of the edge operator, which reads
\begin{equation}
	B(t)\equiv B_{0,0}(t)=\langle\chi_M(t)\chi_M(0)\rangle=\sum_{j=0}^{S_M}C_{j}^{2}\cos(2\epsilon_{j}t).\label{eq:chicorr}
\end{equation}
One of the simplest operators one may construct from bilinears of the Krylov subspace looks like
\begin{equation}\label{eq:simplicialop}
	\mathbb{S}_M \equiv o_{1}o_{0}=\begin{cases}
		h_{2m-1}                            & M=2m-1 \\
		h_{2m-1}+h_{2m}+h_{\overline{2m+1}} & M=2m
	\end{cases}
\end{equation}
that directly corresponds to the simplicial clique in \eqref{eq:simplicial} of the frustration graph.
The self-correlator of this operator is then
\begin{equation}
	\langle\mathbb{S}_M(t)\mathbb{S}_M(0)\rangle=\frac{1}{4}\left(\dot{B}^{2}(t)-\ddot{B}(t)B(t)\right)
\end{equation}
where the dot means time-derivative and the correlator may be calculated very efficiently using \eqref{eq:chicorr} and \eqref{eq:Ccoeffs} after solving for the roots of the polynomial $P_{M}(u^{2})$.

In case of Floquet time evolution with the unitary circuit $\mathcal{V}(\delta) \equiv T_M(-i\delta)/\sqrt{P_M(-\delta^2)}$ the fermion modes pick up the phase
\begin{equation}
	\mathcal{V}^\dagger(\delta) \Psi_j \mathcal{V}(\delta) = \frac{T_M(-i \delta) \Psi_j T_M(i \delta)}{P_M(-\delta^2)} = \frac{u_j + i \delta}{u_j - i \delta} \Psi_j = e^{2 i \arctan{\left(\epsilon_j \delta\right)}} \Psi_j
\end{equation}
after each application of the circuit.
After the $N^{\text{th}}$ time step (that is, total time $t=N \delta $ in the Trotterization picture of \cite{sajat-ffd-corr}), the phase angle acquired is $\theta_j = 2 N \arctan(\epsilon_j \delta)$.
This is in comparison to real time evolution, where the same angle is simply $\theta_j = 2 \epsilon_j t$.
Analogously to \eqref{eq:ojt} the Krylov-basis elements can be time-evolved as $o_j(t = N \delta) = \left(\mathcal{V}^\dagger_M(\delta)\right)^N o_j \mathcal{V}^N_M(\delta)$ and formula \eqref{eq:Wick} describes their correlators by a formal substitution of the phase angles $2 \epsilon_k t \to 2 N \arctan(\epsilon_k \delta)$.
That means the building blocks for the correlators for both real and Floquet time evolution are
\begin{equation}
	B_{j_1,j_2}(t) = \sum_{k=1}^{S_M} C_k^2 \epsilon_k^{j_1+j_2} \frac{(-1)^{j_2} e^{i \theta_k}+ (-1)^{j_1} e^{-i \theta_k}}{2}, \quad\forall ~j_1+j_2 > 0,
\end{equation}
while $B(t) =\sum_{k=0}^{S_M} C_k^2 \cos(\theta_k)$, with the respective phase angles $\theta_k$ as explained above.

\section{Factorization of the transfer matrix}

\label{sec:fact}

In this section, we show that a special case of the transfer matrix~\eqref{eq:transfer-mat} becomes proportional to a quantum unitary circuit with free fermions in disguise reported in~\cite{sajat-floquet}.
In fact, finding the Hamiltonian behind the quantum circuit of~\cite{sajat-floquet} was one of the motivations for the present work.

First, we consider the factorization of our transfer matrix into the product of local operators as
\begin{align}
	\mathcal{V}_{M}=G_{M}\cdot G_{M}^{\top}
	\,,
	\label{eq:quantum-circuit-M2}
\end{align}
where
\begin{align}
	G_{2k}= & (g_{2}g_{4}\cdots g_{2k})(g_{1}g_{3}g_{5}\cdots g_{2k-1}) \\
	=       & (g_{2}g_{1})(g_{4}g_{3})\cdots(g_{2k}g_{2k-1})
	\,,
	\label{eq:G-factor-M2}
\end{align}
and the local gates $g_j$ are defined as
\begin{align}
	g_j & \equiv \cos(\theta_j / 2) + \sin(\theta_j / 2)b_j^{-1} h_j
	\,,
	\label{eq:local-gate}
\end{align}
where for odd $M$, we define $G_{2k-1} \equiv \eval{G_{2k}}_{\theta_{2k} = 0}$.
The subscript $\top$ denotes the transpose operation, with $g_j^\top = g_j$.
Hereafter, for notational clarity, the explicit dependence on $\theta_j$ will be suppressed unless otherwise specified.

Our notation differs slightly from the original paper~\cite{sajat-floquet}.
The notation used in Ref.~\cite{sajat-floquet} can be recovered by $\eval{\mathcal{V}_{2k}}_{\theta_1 = \theta_{2k} = 0}$ and subsequently applying the index shift $j \rightarrow j-1$.

The formula \eqref{eq:quantum-circuit-M2} is analogous to the factorization of the transfer matrix in the original FFD model~\cite{fendley-fermions-in-disguise}.
The only difference is the ordering of the local operators in \eqref{eq:G-factor-M2}.
The work~\cite{sajat-floquet} asked which operator orderings are compatible with the free fermionic structure, and the special ordering of~\eqref{eq:G-factor-M2} was found to be the simplest possibility different from the original one in~\cite{fendley-fermions-in-disguise}.

The local gates in~\eqref{eq:local-gate} are not unitary.
However, one obtains unitary gates using the transformation of Eq.~(5.10) in~\cite{sajat-floquet}, which amounts to an analytic continuation in the angles.
Using this transformation, we provide a rigorous proof of the conjecture in~\cite{sajat-floquet} that one of the quantum circuits proposed in~\cite{sajat-floquet} is indeed free fermionic.
\red{The same circuit was also proved to be free fermionic by a different method in~\cite{szasz-schagrin-qst-circuits}.}

\begin{thm}
	If the parameters $b_j$ are chosen arbitrarily, the parameters $\beta_1$ and $\beta_3$ are set by
	\begin{align}
		\beta_{1} & = \frac{1}{1-b_2^2} \,,\qquad
		%	\label{eq:beta-1-circuit-case}
		\beta_{3}  = \frac{1}{1- b_2^2 - b_4^2} \,.
		\label{eq:beta-3-circuit-case}
	\end{align}
	and the remaining $\beta_{2j-1}$ are calculated using \eqref{eq:beta_relation}, then the transfer matrix can be factorized at the special point $u=-1$.
	More precisely we obtain
	\begin{align}
		T_{2k}(u=-1)=	\mathcal{V}_{2k}
		\,,
		\label{eq:transfer-matrices-coincidence}
	\end{align}
	where the angles are determined by the recursion relation:
	\begin{align}
		\begin{aligned}
			\sin\theta_{2j}   & = \frac{b_{2j}}{\cos\theta_{2j-2}}\,,
			\\
			\sin\theta_{2j-1} & = \frac{b_{2j-1}}{\cos\theta_{2j-2}\cos\theta_{2j-3}\cos\theta_{2j}}\,,
		\end{aligned}
		\label{eq:angles-recursion}
	\end{align}
	where $\theta_{0} = \theta_{-1} = \theta_{M+1} = 0$.
	For the relation between the parameters $\beta_{2j-1}$ and the angles we obtain the direct relation
	\begin{align}
		\beta_{2j-1} & = \frac{1}{\cos^2\theta_{2j-2} \cos^2\theta_{2j}}
		\,.
		\label{eq:beta-from-algles}
	\end{align}
	which is compatible with \eqref{eq:beta_relation}.
\end{thm}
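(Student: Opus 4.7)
I would prove $T_{2k}(u=-1) = \mathcal{V}_{2k}$ by induction on $k$, using the short transfer-matrix recursion \eqref{eq:shortrec} at $u=-1$,
\begin{equation*}
T_{2k}(-1) = T_{2k-2}(-1) + \tfrac{1}{2}\{\s{2k}, T_{2k-2}(-1)\},
\end{equation*}
together with a parallel recursion for $\mathcal{V}_{2k}$ in terms of $\mathcal{V}_{2k-2}$. The base case $k=1$ is verified by direct expansion: $T_2(-1) = 1 + h_1 + h_2 + \hbar{3}$ from the definitions of $Q_2^{(1)}$ and $Q_2^{(0)}$, while $\mathcal{V}_2 = g_2 g_1^2 g_2$ expands to the same expression upon using the local identities below, the initial condition $\beta_1 = 1/(1-b_2^2)$, and the boundary values $\theta_{-1}=\theta_0=0$.

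Two preparatory pieces feed the inductive step. First, the algebraic compatibility of \eqref{eq:beta-from-algles} with \eqref{eq:beta_relation}: using the angle recursion in the form $b_{2j}^2 = \sin^2\theta_{2j}\cos^2\theta_{2j-2}$ and $\theta_0=\theta_{-1}=0$, the initial values \eqref{eq:beta-3-circuit-case} reduce to $\beta_1 = 1/\cos^2\theta_2$ and $\beta_3 = 1/(\cos^2\theta_2 \cos^2\theta_4)$, matching \eqref{eq:beta-from-algles}, and substitution of \eqref{eq:beta-from-algles} into \eqref{eq:beta_relation} collapses to a trigonometric identity. Second, the factorization recursion for $\mathcal{V}_{2k}$: since $h_{2k}$ anticommutes only with $h_{2k-1}$ and $h_{2k-2}$, the gate $g_{2k}$ commutes with every odd-indexed $g_j$ for $j\le 2k-3$, which yields $G_{2k} = G_{2k-2}\,g_{2k}\,g_{2k-1}$ and, by transposition, $G_{2k}^\top = g_{2k-1}\,g_{2k}\,G_{2k-2}^\top$, so $\mathcal{V}_{2k} = G_{2k-2}\,g_{2k} g_{2k-1}^2 g_{2k}\,G_{2k-2}^\top$. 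Expanding with the gate identities $g_j^2 = 1 + (\sin\theta_j/b_j)\,h_j$ and $g_{2k} h_{2k-1} g_{2k} = \cos\theta_{2k}\,h_{2k-1}$ (the latter following from $\{h_{2k-1},h_{2k}\}=0$ together with $\tilde g_{2k}\,g_{2k}=\cos\theta_{2k}$, where $\tilde g_{2k}$ is $g_{2k}$ with the sign of its $h_{2k}$-coefficient flipped) produces
\begin{equation*}
\mathcal{V}_{2k} = \mathcal{V}_{2k-2} + \frac{\sin\theta_{2k}}{b_{2k}}\,G_{2k-2} h_{2k} G_{2k-2}^\top + \frac{\sin\theta_{2k-1}\cos\theta_{2k}}{b_{2k-1}}\,G_{2k-2} h_{2k-1} G_{2k-2}^\top.
\end{equation*}

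The remaining and most technical task is to identify this with $\mathcal{V}_{2k-2} + \tfrac12\{\s{2k}, \mathcal{V}_{2k-2}\}$, where $\s{2k} = h_{2k-1} + h_{2k} + \hbar{2k+1}$ and $\hbar{2k+1} = \beta_{2k-1} h_{2k-2} h_{2k-3} h_{2k}$. To do so I would push $h_{2k}$ (respectively $h_{2k-1}$) through $G_{2k-2}$: only the factor $g_{2k-2}$ (resp.\ $g_{2k-2}$ and $g_{2k-3}$) fails to commute, each producing a sign-flipped replacement $\tilde g_j$. The resulting expansion splits into a diagonal part proportional to $\mathcal{V}_{2k-2}$ (from the $\cos(\theta_j/2)$ pieces of $\tilde g_j$) and cross-terms proportional to $h_{2k-2} h_{2k-3} h_{2k} = \hbar{2k+1}/\beta_{2k-1}$ (from the $\sin(\theta_j/2)h_j$ pieces). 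The cross-terms are precisely what is needed to produce $\tfrac12\{\hbar{2k+1},\mathcal{V}_{2k-2}\}$, and their coefficients match those of the diagonal part when and only when $\beta_{2k-1} = 1/(\cos^2\theta_{2k-2}\cos^2\theta_{2k})$. The main obstacle is the bookkeeping of these cross-terms and the careful verification that every coefficient aligns; this matching is precisely what forces \eqref{eq:beta-from-algles} and, through it, the angle recursion \eqref{eq:angles-recursion}.
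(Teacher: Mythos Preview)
Your high-level strategy differs from the paper's: you attempt to show that $\mathcal{V}_{2k}$ obeys the \emph{short} recursion \eqref{eq:shortrec} at $u=-1$, while the paper instead derives a \emph{long} four-step recursion for the circuit,
\[
\mathcal{V}_{2k} = \mathcal{V}_{2k-2} + (\cdots)\mathcal{V}_{2k-4} + (\cdots)\mathcal{V}_{2k-6} + (\cdots)\mathcal{V}_{2k-8},
\]
and matches it term-by-term to the transfer-matrix recursion \eqref{eq:m2-transfermat-recursion-only-even}. Your derivation of $\mathcal{V}_{2k}-\mathcal{V}_{2k-2}$ in terms of the sandwiches $G_{2k-2}h_j G_{2k-2}^\top$ is correct, but the step that follows has a genuine gap.

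The description of how those sandwiches match $\tfrac12\{\s{2k},\mathcal{V}_{2k-2}\}$ is too optimistic. You claim that pushing $h_{2k}$ (or $h_{2k-1}$) through $G_{2k-2}$ yields a diagonal piece proportional to $\mathcal{V}_{2k-2}$ plus a single local cross-term $h_{2k-2}h_{2k-3}h_{2k}$. In fact, writing $\tilde g_{2k-2}=g_{2k-2}-2\sin(\theta_{2k-2}/2)b_{2k-2}^{-1}h_{2k-2}$, one gets $G_{2k-2}h_{2k}G_{2k-2}^\top = h_{2k}\mathcal{V}_{2k-2} - 2\sin(\theta_{2k-2}/2)b_{2k-2}^{-1}\,h_{2k}\,G_{2k-4}h_{2k-2}g_{2k-3}G_{2k-2}^\top$, and the remainder still contains $G_{2k-4}$ and $G_{2k-2}^\top$; it is not a bare triple product. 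More seriously, on the other side of the proposed identity the term $\tfrac12\{\hbar{2k+1},\mathcal{V}_{2k-2}\}$ is not local at all: by the extended algebra $\hbar{2k+1}$ anticommutes with $h_{2k-3},h_{2k-5},h_{2k-7}$ as well as $h_{2k-2}$, so this anticommutator reaches four gates deep into $G_{2k-2}$, not just $g_{2k-2}$ and $g_{2k-3}$. Your sketch never accounts for this, and the coefficient matching you describe (``forcing \eqref{eq:beta-from-algles}'') cannot close at the level of a single local comparison.

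If you try to resolve these remainders honestly, each pass through a gate pair $g_{2k-2j}g_{2k-2j-1}$ produces new terms of the same shape at the next level down, and the computation telescopes into exactly the four-term recursion in $\mathcal{V}_{2k-4},\mathcal{V}_{2k-6},\mathcal{V}_{2k-8}$ that the paper derives in Appendix~\ref{app:proof-of-circuit-recursion}. So the short-recursion route is not a shortcut here: to verify it for $\mathcal{V}$ independently of the identity $\mathcal{V}=T$ you would have to unfold into the long recursion anyway. The paper's approach makes this unavoidable expansion explicit and then matches once against \eqref{eq:m2-transfermat-recursion-only-even}.
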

\begin{proof}
	We derive the recursion equation for the quantum circuit~\eqref{eq:quantum-circuit-M2} and show that it is the same as the recursion relations derived above in~\eqref{eq:m2-transfermat-recursion-odd} and~\eqref{eq:m2-transfermat-recursion-even}.
	We can prove the transfer matrix satisfies the following recursion relation:
	\begin{align}
		\mathcal{V}_{2k}
		 & =
		\mathcal{V}_{2k-2}+(c_{2k-2}h_{2k}^{\prime}+c_{2k}c_{2k-2}c_{2k-3}h_{2k-1}^{\prime}+c_{2k-5}h_{2k-2}^{\prime}h_{2k-3}^{\prime}h_{2k}^{\prime})\mathcal{V}_{2k-4}
		\nonumber \\
		 & \qquad
		+c_{2k-4}c_{2k-5}h_{2k}^{\prime}h_{2k-3}^{\prime}\mathcal{V}_{2k-6}+c_{2k-4}c_{2k-5}c_{2k-6}c_{2k-7}h_{2k-5}^{\prime}h_{2k-2}^{\prime}h_{2k-3}^{\prime}h_{2k}^{\prime}\mathcal{V}_{2k-8}
		\,,
		\label{eq:m2-transfermat-recursion-circuit-even}
		\\
		\mathcal{V}_{2k-1}
		 & =
		\mathcal{V}_{2k-2} + c_{2k-2}c_{2k-3}h_{2k-1}^{\prime}\mathcal{V}_{2k-4}
		\,.
		\label{eq:m2-transfermat-recursion-circuit-odd}
	\end{align}
	where we denote $c_j \equiv \cos\theta_{j}$ and $h_{j}^{\prime} \equiv \sin\theta_j  b_j^{-1} h_j$.
	The equation in the second line~\eqref{eq:m2-transfermat-recursion-circuit-odd} is identical to that of the original FFD paper~\cite{fendley-fermions-in-disguise} and can be derived in the same manner.
	The proof of the first line~\eqref{eq:m2-transfermat-recursion-circuit-even} is provided in the Appendix~\ref{app:proof-of-circuit-recursion}.
	By eliminating the angles $\theta_j$ from the above equation using~\eqref{eq:angles-recursion} and~\eqref{eq:beta-from-algles}, we can identify the above recursion~\eqref{eq:m2-transfermat-recursion-circuit-even} with the recursion relation for the transfer matrix~\eqref{eq:m2-transfermat-recursion-even} with $u=-1$.

	Since the recursion equation is identical and the initial condition is the same, we can conclude that the two transfer matrices coincide due to the uniqueness of the solution to the recursion equation, thus proving~\eqref{eq:transfer-matrices-coincidence}.
\end{proof}

We also attempted to find a factorization of the transfer matrix for generic values of the spectral parameter, but we have not yet found a general solution.
This is left for further work.

\section{Conclusion}
\label{sec:concl}

We introduced a new spin chain model that can be solved with free fermions in disguise, despite its frustration graph containing both claws and even holes—structures that violate the sufficient conditions established in previous works~\cite{fermions-behind-the-disguise,unified-graph-th}.
The free fermionic solvability is achieved through carefully constructed algebraic relations among the coupling constants, which we derived by introducing an extension of the FFD algebra.

Our Hamiltonian can be formulated solely within the original FFD algebra for a special case, distinguishing it from other extensions such as \cite{sajat-FP-model}.
When specific coupling constants are set to zero, our model reduces to the original FFD model.
Moreover, for a particular choice of parameters, the transfer matrix can be factorized into a product of three-site operators, yielding a quantum circuit first introduced in \cite{sajat-floquet}.
That circuit was conjectured to be free fermionic, and our construction now proves that conjecture.
\red{Relatedly, Sz{\'a}sz-Schagrin et al.~gave another proof of the free-fermionic nature of this circuit~\cite{szasz-schagrin-qst-circuits}.}

We also attempted to factorize the transfer matrix for the general case but did not succeed.
This remains an open problem.

In future work, it would be interesting to prove the free fermionic nature of other circuits presented in \cite{sajat-floquet}, and eventually to develop a more general theory for such circuits.

An even more challenging task is to derive sufficient conditions for free fermionic solvability when the standard conditions of \cite{fermions-behind-the-disguise,unified-graph-th} are not applicable.
The present model and \cite{sajat-FP-model} are examples where special algebraic relations guarantee integrability.
\red{In our model, once the central even-hole elements are gauged out, the Hamiltonian can be expressed within Fendley's original FFD algebra~\cite{fendley-fermions-in-disguise}.
In this sense, the problem reduces to a ``minimal'' frustration graph in that algebra, now realized through multibody interactions and through algebraic relations among the coupling constants that resolve the claw obstruction.
It would be desirable to develop a general theory in which further even-hole-free and claw-free graphs arise as minimal frustration graphs under such reductions.}

\section*{Acknowledgements}
The authors thank H. Katsura for helpful discussions.
K.F. was supported by MEXT KAKENHI Grant-in-Aid for Transformative Research Areas A “Extreme Universe” (KAKENHI Grant No. JP21H05191).
K.F. was also supported by JSPS KAKENHI Grant-in-Aid for Research Activity Start-up (Grant No. JP25K23354), JSPS KAKENHI Grant-in-Aid for Early-Career Scientists (Grant No. JP26K17045), JSPS KAKENHI Grant-in-Aid for JSPS Fellows (Grant No. JP26KJ0404), and the JSPS Bilateral Program with MTA (Project No. 120263802).
This research was supported by the Hungarian National Research, Development and Innovation Office: I.V. and B.P. were supported by the NKFIH Grant No. K-145904, also B.P. was supported by the NKFIH excellence grant TKP2021-NKTA-64.

\appendix

\section{Proof of mutual commutativity of charges}\label{app:mutual-commutativity}
In this appendix, we prove the mutual commutativity of the charges in Eq.~\eqref{eq:charges}.
In the following, we often omit the subscript of the number of the generators $M$.
Following Eq.~\eqref{eq:shorthand-notation-for-prod}, for any independent set ${S} \subset {V}_{M}$, we define ${h}_{{S}} \equiv \prod_{\boldsymbol{j} \in {S}} {h}_{\boldsymbol{j}}$.

For any independent sets ${S}, {S}^\prime \subset {V}_M$, we have
\begin{align}
	\qty[{h}_{{S}}, {h}_{{S}^\prime}]
	=
	\begin{cases}
		2h_{{S}}h_{{S}^\prime}
		  &
		\text{if }\abs{{E}_{{S} \oplus {S}^\prime}} \text{ is odd}
		\\[15pt]
		0 & \text{if }\abs{{E}_{{S} \oplus {S}^\prime}} \text{ is even}
	\end{cases}
	\,,
\end{align}
where ${S} \oplus {S}^\prime \equiv \qty({S} \cup {S}^\prime) \setminus \qty({S} \cap {S}^\prime)$ denotes the symmetric difference.
Note that the product can be rewritten as
$
	h_{{S}}h_{{S}^\prime}
	=
	\qty(\prod_{\boldsymbol{j} \in {S} \cap {S}^\prime} b_{\boldsymbol{j}}^2) h_{{S} \setminus {S}^\prime}
	h_{{S}^\prime \setminus {S}}
$
where the operators on common vertices become scalar factors.

The commutator of the charges then reads
\begin{align}
	\qty[\Q{M}{k}, \Q{M}{l}]
	 & =
	\qty[
		\sum_{L\in \indepset{M}{k}} h_{L}
		,
		\sum_{R \in \indepset{M}{l}} h_{R}
	]
	=
	2
	\sum_{\substack{
			(L, R)  \in \indepset{M}{k} \times \indepset{M}{l}
	\\
			\abs{{E}_{L\oplus R}} \text{ is odd}
		}}
	h_{L} h_{R}
	\,.
	\label{eq:commutator-cancellation}
\end{align}
We show that all terms in the RHS cancel.
For this, we consider the following subsets of ${E}_{L\oplus R}$: (i) \emph{balanced-odd-edges components} (BOE components) and (ii) \emph{claw-cancellation parts} (CC parts).
We prove that the cancellation is ensured if ${E}_{L\oplus R}$ includes either at least one BOE component or at least one CC part.
Since $\abs{{E}_{L\oplus R}}$ is odd, at least one of these structures must exist.

Below, we explain the two cases of cancellation.
All terms in the RHS of~\eqref{eq:commutator-cancellation} cancel via either BOE components or CC parts.
In Appendix~\ref{app:BOE-cancellation}, we define BOE component and we show that when $\LopR$ has the BOE component, there is another term that cancels $h_L h_R$.
In Appendix~\ref{app:CC-cancellation}, we define CC-part and show that when $\LopR$ has the CC-part, there are three other terms that cancel with $h_L h_R$.
In Appendix~\ref{app:exhausiveness}, we show all terms are canceled at least by either BOE components or CC parts.

\subsection{Cancellation via balanced-odd-edges components}
\label{app:BOE-cancellation}
We define a subset $\mathcal{O} \subset L\oplus R$ to be a BOE component if it satisfies the following conditions: (i) $\abs{{E}_{\mathcal{O}}}$ is odd, (ii) $\abs{\mathcal{O} \cap L} = \abs{\mathcal{O} \cap R}$, and (iii) ${G}[\mathcal{O}]$ is an isolated connected subgraph in ${G}[\LopR]$.

When $L\oplus R$ contains a BOE component $\mathcal{O}$, there exists another pair $(L^\prime, R^\prime) \in \indepset{M}{k} \times \indepset{M}{l}$ defined by
\begin{align}
	L^\prime \equiv (L \setminus \mathcal{O}_{L}) \cup \mathcal{O}_{R},
	\quad
	R^\prime \equiv (R \setminus \mathcal{O}_{R}) \cup \mathcal{O}_{L},
	\label{eq:BOE-component-cancellation}
\end{align}
where $\mathcal{O}_{L} \equiv \mathcal{O} \cap L$ and $\mathcal{O}_{R} \equiv \mathcal{O} \cap R$.
Note that the following anticommutation relation holds:
\begin{align}
	h_{\mathcal{O}_{L}}h_{\mathcal{O}_{R}} + h_{\mathcal{O}_{R}}h_{\mathcal{O}_{L}} = 0,
\end{align}
because there are an odd number of edges between $\mathcal{O}_{L}$ and $\mathcal{O}_{R}$, and both $\mathcal{O}_{L}$ and $\mathcal{O}_{R}$ are independent sets.
Then we have the following cancellation:
\begin{align}
	h_{L} h_{R} + h_{L^\prime} h_{R^\prime}
	 & =
	h_{L \setminus \mathcal{O}_{L}} (h_{\mathcal{O}_{L}}h_{\mathcal{O}_{R}} + h_{\mathcal{O}_{R}}h_{\mathcal{O}_{L}}) h_{R \setminus \mathcal{O}_{R}}
	=
	0.
\end{align}

Thus, we have proved that the term $h_L h_R$ in the RHS of~\eqref{eq:commutator-cancellation} cancels with the term $h_{L^\prime} h_{R^\prime}$ if $L\oplus R$ contains a BOE component.

Note that the cancellation via BOE components follows the same argument as in the claw-free case~\cite{fermions-behind-the-disguise}.
Next, we consider the cancellation arising from CC parts in $L\oplus R$.

\subsection{Cancellation via claw-cancellation part}
\label{app:CC-cancellation}

Here we explain the cancellation of the term $h_{L}h_{R}$ in the RHS of~\eqref{eq:commutator-cancellation} for the case when $L\oplus R$ has a claw-cancellation part (CC-part).
We first explain the structure of the claw.
Then, we define the claw-cancellation part.
And finally, we show the cancellation arising from the claw-cancellation part.

\subsubsection{Structure of claw}
\label{sec:structure-claw}
First, we explain the claw structure.
Please refer to Section~\ref{sec:claw-notations} for the definitions of the notation for claws.
Since the leaves form an independent set, they can be simultaneously included in $L$ (or $R$).

We show the structure of claws at $h_{2j-1}$ ($h_{\overline{2j+3}}$) in Figure~\ref{fig:claw-configurations} (Figure~\ref{fig:claw-configurations-bar}).
To facilitate understanding of the neighbors of claws at center $h_{2j-1}$ ($h_{\overline{2j+3}}$), we illustrate the neighborhood structure in Figure~\ref{fig:claw-neigbors-AB} (Figure~\ref{fig:claw-neigbors-ABbar}), which is closely explained in the next section~\ref{sec:potential-neighbors-claw}.
In these figures, bold green lines indicate edges connecting the claw center to its leaves.
The claw centers are represented by blue-filled circles, while the leaves are shown as green-filled circles.
We assume the claw center belongs to $L \in \stilde{k}$ and the leaves belong to $R \in \stilde{l}$; the converse case follows by a similar argument.

\begin{figure}[tbp]
	\centering
	\begin{tikzpicture}[triangular lattice small]
		% Define common y-coordinates
		\pgfmathsetmacro{\yrowone}{0}
		\pgfmathsetmacro{\yrowtwo}{-1.732}  % -sqrt(3)
		\pgfmathsetmacro{\yrowthree}{-3.464} % -2*sqrt(3)
		\pgfmathsetmacro{\offset}{-2}

		\def\captionx{3}
		\def\captiony{-4.5}
		\def\shifty{-6.5}

		% First subfigure (a)
		\begin{scope}[shift={(0,0)}]
			% rearrangeable clique - left
			\placevertex{-3}{\offset}{\yrowone}{rearrangeablevertex}
			\placevertex{-2}{\offset}{\yrowtwo}{rearrangeablevertex}
			\placevertexbar{-1}{\offset}{\yrowthree}{rearrangeablevertex}
			% right
			\placevertex{7}{\offset}{\yrowone}{rearrangeablevertex}
			\placevertex{8}{\offset}{\yrowtwo}{rearrangeablevertex}
			\placevertexbar{9}{\offset}{\yrowthree}{rearrangeablevertex}

			% frozen evens
			\placevertex{4}{\offset}{\yrowtwo}{frozeneven}
			\placevertex{6}{\offset}{\yrowtwo}{frozeneven}

			% special odd
			\placevertex{5}{\offset}{\yrowone}{specialodd}

			% Place vertices in row 1 (top)
			\placevertex{-1}{\offset}{\yrowone}{vertex}
			\placevertex{1}{\offset}{\yrowone}{clawcenter}
			\placevertex{3}{\offset}{\yrowone}{vertex}

			% Place vertices in row 2 (middle)
			\placevertex{0}{\offset}{\yrowtwo}{clawleaves}
			\placevertex{2}{\offset}{\yrowtwo}{vertex}

			% Place vertices in row 3 (bottom)
			\placevertexbar{1}{\offset}{\yrowthree}{vertex}
			\placevertexbar{3}{\offset}{\yrowthree}{clawleaves}
			\placevertexbar{5}{\offset}{\yrowthree}{vertex}
			\placevertexbar{7}{\offset}{\yrowthree}{clawleaves}

			% Draw edges
			% Row 1 to Row 2 edges
			\draw[edge] (v-3) -- (v-2);
			\draw[edge] (v-1) -- (v0);
			\draw[edge] (v1) -- (v2);
			\draw[edge] (v3) -- (v4);
			\draw[edge] (v5) -- (v6);
			\draw[edge] (v7) -- (v8);

			\draw[edge] (v-1) -- (v-2);
			\draw[edge] (v3) -- (v2);
			\draw[edge] (v5) -- (v4);
			\draw[edge] (v7) -- (v6);

			% Row 2 to Row 3 edges
			\draw[edge] (v-2) -- (bar-1);
			\draw[edge] (v0) -- (bar1);
			\draw[edge] (v2) -- (bar3);
			\draw[edge] (v4) -- (bar5);
			\draw[edge] (v6) -- (bar7);
			\draw[edge] (v8) -- (bar9);

			\draw[edge] (v0) -- (bar-1);
			\draw[edge] (v2) -- (bar1);
			\draw[edge] (v4) -- (bar3);
			\draw[edge] (v6) -- (bar5);
			\draw[edge] (v8) -- (bar7);

			% Row 1 to Row 3 edges (curved)
			\draw[edgefar, curved] (v-3) to (bar-1);
			\draw[edgefar, curved] (v-1) to (bar1);
			\draw[edgefar, curved] (v3) to (bar5);
			\draw[edgefar, curved] (v5) to (bar7);
			\draw[edgefar, curved] (v7) to (bar9);

			% Diagonal connections (one step right)
			\draw[edgefar, curved_right] (v-1) to (bar-1);
			\draw[edgefar, curved_right] (v1) to (bar1);
			\draw[edgefar, curved_right] (v3) to (bar3);
			\draw[edgefar, curved_right] (v5) to (bar5);
			\draw[edgefar, curved_right] (v7) to (bar7);

			% Diagonal connections (one step left)
			\draw[edgefar, curved] (v-3) to (bar1);
			\draw[edgefar, curved] (v-1) to (bar3);
			\draw[edgefar, curved] (v1) to (bar5);
			\draw[edgefar, curved] (v3) to (bar7);
			\draw[edgefar, curved] (v5) to (bar9);

			% Far diagonal connections (two steps left)
			\draw[edgefar, curved] (v-3) to (bar3);
			\draw[edgefar, curved] (v-1) to (bar5);
			\draw[edgefar, curved] (v3) to (bar9);

			% Horizontal edges
			\draw[edge] (v-3) -- (v-1);
			\draw[edge] (v-1) -- (v1);
			\draw[edge] (v1) -- (v3);
			\draw[edge] (v3) -- (v5);
			\draw[edge] (v5) -- (v7);

			\draw[edge] (v-2) -- (v0);
			\draw[edge] (v0) -- (v2);
			\draw[edge] (v2) -- (v4);
			\draw[edge] (v4) -- (v6);
			\draw[edge] (v6) -- (v8);

			\draw[edge] (bar-1) -- (bar1);
			\draw[edge] (bar1) -- (bar3);
			\draw[edge] (bar3) -- (bar5);
			\draw[edge] (bar5) -- (bar7);
			\draw[edge] (bar7) -- (bar9);

			\draw[clawedge, curved] (v1) to (bar3);
			\draw[clawedge, curved] (v1) to (bar7);
			\draw[clawedge] (v1) -- (v0);

			% Caption
			\node at (\captionx,\captiony) {type A claw at $h_{2j-1}$};
		\end{scope}

		% Second subfigure (b)
		\begin{scope}[shift={(0,{\shifty})}]
			% Similar structure with different claw configuration
			% rearrangeable clique
			\placevertex{-3}{\offset}{\yrowone}{rearrangeablevertex}
			\placevertex{-2}{\offset}{\yrowtwo}{rearrangeablevertex}
			\placevertexbar{-1}{\offset}{\yrowthree}{rearrangeablevertex}
			\placevertex{7}{\offset}{\yrowone}{rearrangeablevertex}
			\placevertex{8}{\offset}{\yrowtwo}{rearrangeablevertex}
			\placevertexbar{9}{\offset}{\yrowthree}{rearrangeablevertex}

			% frozen evens
			\placevertex{4}{\offset}{\yrowtwo}{frozeneven}
			\placevertex{6}{\offset}{\yrowtwo}{frozeneven}

			% special odd
			\placevertex{5}{\offset}{\yrowone}{specialodd}

			% Main vertices with different claw configuration
			\placevertex{-1}{\offset}{\yrowone}{clawleaves}
			\placevertex{1}{\offset}{\yrowone}{clawcenter}
			\placevertex{3}{\offset}{\yrowone}{vertex}

			\placevertex{0}{\offset}{\yrowtwo}{vertex}
			\placevertex{2}{\offset}{\yrowtwo}{clawleaves}

			\placevertexbar{1}{\offset}{\yrowthree}{vertex}
			\placevertexbar{3}{\offset}{\yrowthree}{vertex}
			\placevertexbar{5}{\offset}{\yrowthree}{vertex}
			\placevertexbar{7}{\offset}{\yrowthree}{clawleaves}

			% Edge structure (abbreviated for clarity - same pattern as first figure)
			% Row 1 to Row 2
			\draw[edge] (v-3) -- (v-2);
			\draw[edge] (v-1) -- (v0);
			\draw[edge] (v3) -- (v4);
			\draw[edge] (v5) -- (v6);
			\draw[edge] (v7) -- (v8);

			\draw[edge] (v-1) -- (v-2);
			\draw[edge] (v1) -- (v0);
			\draw[edge] (v3) -- (v2);
			\draw[edge] (v5) -- (v4);
			\draw[edge] (v7) -- (v6);

			% Row 2 to Row 3
			\draw[edge] (v-2) -- (bar-1);
			\draw[edge] (v0) -- (bar1);
			\draw[edge] (v2) -- (bar3);
			\draw[edge] (v4) -- (bar5);
			\draw[edge] (v6) -- (bar7);
			\draw[edge] (v8) -- (bar9);

			\draw[edge] (v0) -- (bar-1);
			\draw[edge] (v2) -- (bar1);
			\draw[edge] (v4) -- (bar3);
			\draw[edge] (v6) -- (bar5);
			\draw[edge] (v8) -- (bar7);

			% Row 1 to Row 3 (curved)
			\draw[edgefar] (v-3) to (bar-1);
			\draw[edgefar] (v-1) to (bar1);
			\draw[edgefar] (v1) to (bar3);
			\draw[edgefar] (v3) to (bar5);
			\draw[edgefar] (v5) to (bar7);
			\draw[edgefar] (v7) to (bar9);

			% Diagonal connections
			\draw[edgefar_right] (v-1) to (bar-1);
			\draw[edgefar_right] (v1) to (bar1);
			\draw[edgefar_right] (v3) to (bar3);
			\draw[edgefar_right] (v5) to (bar5);
			\draw[edgefar_right] (v7) to (bar7);

			\draw[edgefar] (v-3) to (bar1);
			\draw[edgefar] (v-1) to (bar3);
			\draw[edgefar] (v1) to (bar5);
			\draw[edgefar] (v3) to (bar7);
			\draw[edgefar] (v5) to (bar9);

			\draw[edgefar] (v-3) to (bar3);
			\draw[edgefar] (v-1) to (bar5);
			\draw[edgefar] (v3) to (bar9);

			% Horizontal edges
			\draw[edge] (v-3) -- (v-1);
			\draw[edge] (v1) -- (v3);
			\draw[edge] (v3) -- (v5);
			\draw[edge] (v5) -- (v7);

			\draw[edge] (v-2) -- (v0);
			\draw[edge] (v0) -- (v2);
			\draw[edge] (v2) -- (v4);
			\draw[edge] (v4) -- (v6);
			\draw[edge] (v6) -- (v8);

			\draw[edge] (bar-1) -- (bar1);
			\draw[edge] (bar1) -- (bar3);
			\draw[edge] (bar3) -- (bar5);
			\draw[edge] (bar5) -- (bar7);
			\draw[edge] (bar7) -- (bar9);

			\draw[clawedge] (v1) -- (v2);
			\draw[clawedge] (v-1) -- (v1);
			\draw[clawedge, curved] (v1) to (bar7);

			% Caption
			\node at (\captionx,\captiony) {type B claw at $h_{2j-1}$};
		\end{scope}

	\end{tikzpicture}
	\caption{
		Structures of claws at $h_{2j-1}$.
		The bold green edges connect the claw center to the claw leaves, while gray edges indicate all other edges on the frustration graph.
		Each vertex labeled with number $i$ represents $h_{2j+i}$ and labeled with $\overline{i}$ represents $\hbar{2j+i}$.
		The blue-filled circle indicates the claw center $h_{2j-1}$.
		Green-filled circles indicate the claw leaves: $h_{2j-2}$, $h_{\overline{2j+1}}$, $h_{\overline{2j+5}}$ for type A claws, and $h_{2j-3}$, $h_{2j}$, $h_{\overline{2j+5}}$ for type B claws.
		Rectangles represent the frozen even vertices $h_{2j+2}$ and $h_{2j+4}$.
		The blue-bordered circle indicates the special odd vertex $h_{2j+3}$.
		Orange-bordered circles indicate the rearrangeable clique: $\rearrangeable{2j-1}{-} = \{h_{2j-5}$, $h_{2j-4}$, $h_{\overline{2j-3}}\}$ on the left, and $\rearrangeable{2j-1}{+} = \{h_{2j+5}$, $h_{2j+6}$, $h_{\overline{2j+7}}\}$ on the right.
		The white circles indicate forbidden vertices, that cannot be connected to any of the claw leaves in $\LopR$.
	}
	\label{fig:claw-configurations}
\end{figure}

\begin{figure}[tbp]
	\centering
	\begin{tikzpicture}[triangular lattice small]
		% Define common y-coordinates
		\pgfmathsetmacro{\yrowone}{0}
		\pgfmathsetmacro{\yrowtwo}{-1.732}  % -sqrt(3)
		\pgfmathsetmacro{\yrowthree}{-3.464} % -2*sqrt(3)
		\pgfmathsetmacro{\offset}{-2}

		\def\captionx{3}
		\def\captiony{-4.5}
		\def\shifty{-6.5}

		% Third subfigure (c)
		\begin{scope}[shift={(0,{2*\shifty})}]
			% rearrangeable clique
			\placevertex{-3}{\offset}{\yrowone}{rearrangeablevertex}
			\placevertex{-2}{\offset}{\yrowtwo}{rearrangeablevertex}
			\placevertexbar{-1}{\offset}{\yrowthree}{rearrangeablevertex}
			\placevertex{7}{\offset}{\yrowone}{rearrangeablevertex}
			\placevertex{8}{\offset}{\yrowtwo}{rearrangeablevertex}
			\placevertexbar{9}{\offset}{\yrowthree}{rearrangeablevertex}

			% frozen evens
			\placevertex{0}{\offset}{\yrowtwo}{frozeneven}
			\placevertex{2}{\offset}{\yrowtwo}{frozeneven}

			% special odd
			\placevertexbar{1}{\offset}{\yrowthree}{specialodd}

			% Place vertices
			\placevertex{-1}{\offset}{\yrowone}{clawleaves}
			\placevertex{1}{\offset}{\yrowone}{vertex}
			\placevertex{3}{\offset}{\yrowone}{clawleaves}
			\placevertex{5}{\offset}{\yrowone}{vertex}

			\placevertex{4}{\offset}{\yrowtwo}{vertex}
			\placevertex{6}{\offset}{\yrowtwo}{clawleaves}

			\placevertexbar{3}{\offset}{\yrowthree}{vertex}
			\placevertexbar{5}{\offset}{\yrowthree}{clawcenter}
			\placevertexbar{7}{\offset}{\yrowthree}{vertex}

			% Edges (abbreviated)
			% Row 1 to Row 2
			\draw[edge] (v-3) -- (v-2);
			\draw[edge] (v-1) -- (v0);
			\draw[edge] (v1) -- (v2);
			\draw[edge] (v3) -- (v4);
			\draw[edge] (v5) -- (v6);
			\draw[edge] (v7) -- (v8);

			\draw[edge] (v-1) -- (v-2);
			\draw[edge] (v1) -- (v0);
			\draw[edge] (v3) -- (v2);
			\draw[edge] (v5) -- (v4);
			\draw[edge] (v7) -- (v6);

			% Row 2 to Row 3
			\draw[edge] (v-2) -- (bar-1);
			\draw[edge] (v0) -- (bar1);
			\draw[edge] (v2) -- (bar3);
			\draw[edge] (v4) -- (bar5);
			\draw[edge] (v6) -- (bar7);
			\draw[edge] (v8) -- (bar9);

			\draw[edge] (v0) -- (bar-1);
			\draw[edge] (v2) -- (bar1);
			\draw[edge] (v4) -- (bar3);
			\draw[edge] (v8) -- (bar7);

			% Row 1 to Row 3 (curved)
			\draw[edgefar, curved] (v-3) to (bar-1);
			\draw[edgefar, curved] (v-1) to (bar1);
			\draw[edgefar, curved] (v1) to (bar3);
			\draw[edgefar, curved] (v5) to (bar7);
			\draw[edgefar, curved] (v7) to (bar9);

			% Diagonal connections
			\draw[edgefar, curved_right] (v-1) to (bar-1);
			\draw[edgefar, curved_right] (v1) to (bar1);
			\draw[edgefar, curved_right] (v3) to (bar3);
			\draw[edgefar, curved_right] (v5) to (bar5);
			\draw[edgefar, curved_right] (v7) to (bar7);

			\draw[edgefar, curved] (v-3) to (bar1);
			\draw[edgefar, curved] (v-1) to (bar3);
			\draw[edgefar, curved] (v1) to (bar5);
			\draw[edgefar, curved] (v3) to (bar7);
			\draw[edgefar, curved] (v5) to (bar9);

			\draw[edgefar, curved] (v-3) to (bar3);
			\draw[edgefar, curved] (v1) to (bar7);
			\draw[edgefar, curved] (v3) to (bar9);

			% Horizontal edges
			\draw[edge] (v-3) -- (v-1);
			\draw[edge] (v-1) -- (v1);
			\draw[edge] (v1) -- (v3);
			\draw[edge] (v3) -- (v5);
			\draw[edge] (v5) -- (v7);

			\draw[edge] (v-2) -- (v0);
			\draw[edge] (v0) -- (v2);
			\draw[edge] (v2) -- (v4);
			\draw[edge] (v4) -- (v6);
			\draw[edge] (v6) -- (v8);

			\draw[edge] (bar-1) -- (bar1);
			\draw[edge] (bar1) -- (bar3);
			\draw[edge] (bar3) -- (bar5);
			\draw[edge] (bar5) -- (bar7);
			\draw[edge] (bar7) -- (bar9);

			\draw[clawedge, curved] (v-1) to (bar5);
			\draw[clawedge, curved] (v3) to (bar5);
			\draw[clawedge] (v6) -- (bar5);

			% Caption
			\node at (\captionx,\captiony) {{type A claw at $h_{\overline{2j+3}}$}};
		\end{scope}

		% Fourth subfigure (d)
		\begin{scope}[shift={(0,{3*\shifty})}]
			% rearrangeable clique
			\placevertex{-3}{\offset}{\yrowone}{rearrangeablevertex}
			\placevertex{-2}{\offset}{\yrowtwo}{rearrangeablevertex}
			\placevertexbar{-1}{\offset}{\yrowthree}{rearrangeablevertex}
			\placevertex{7}{\offset}{\yrowone}{rearrangeablevertex}
			\placevertex{8}{\offset}{\yrowtwo}{rearrangeablevertex}
			\placevertexbar{9}{\offset}{\yrowthree}{rearrangeablevertex}

			% frozen evens
			\placevertex{0}{\offset}{\yrowtwo}{frozeneven}
			\placevertex{2}{\offset}{\yrowtwo}{frozeneven}

			% special odd
			\placevertexbar{1}{\offset}{\yrowthree}{specialodd}

			% Place vertices
			\placevertex{-1}{\offset}{\yrowone}{clawleaves}
			\placevertex{1}{\offset}{\yrowone}{vertex}
			\placevertex{3}{\offset}{\yrowone}{vertex}
			\placevertex{5}{\offset}{\yrowone}{vertex}

			\placevertex{4}{\offset}{\yrowtwo}{clawleaves}
			\placevertex{6}{\offset}{\yrowtwo}{vertex}

			\placevertexbar{3}{\offset}{\yrowthree}{vertex}
			\placevertexbar{5}{\offset}{\yrowthree}{clawcenter}
			\placevertexbar{7}{\offset}{\yrowthree}{clawleaves}

			% Edges (abbreviated)
			% Row 1 to Row 2
			\draw[edge] (v-3) -- (v-2);
			\draw[edge] (v-1) -- (v0);
			\draw[edge] (v1) -- (v2);
			\draw[edge] (v3) -- (v4);
			\draw[edge] (v5) -- (v6);
			\draw[edge] (v7) -- (v8);

			\draw[edge] (v-1) -- (v-2);
			\draw[edge] (v1) -- (v0);
			\draw[edge] (v3) -- (v2);
			\draw[edge] (v5) -- (v4);
			\draw[edge] (v7) -- (v6);

			% Row 2 to Row 3
			\draw[edge] (v-2) -- (bar-1);
			\draw[edge] (v0) -- (bar1);
			\draw[edge] (v2) -- (bar3);
			\draw[edge] (v6) -- (bar7);
			\draw[edge] (v8) -- (bar9);

			\draw[edge] (v0) -- (bar-1);
			\draw[edge] (v2) -- (bar1);
			\draw[edge] (v4) -- (bar3);
			\draw[edge] (v6) -- (bar5);
			\draw[edge] (v8) -- (bar7);

			% Row 1 to Row 3 (curved)
			\draw[edgefar, curved] (v-3) to (bar-1);
			\draw[edgefar, curved] (v-1) to (bar1);
			\draw[edgefar, curved] (v1) to (bar3);
			\draw[edgefar, curved] (v3) to (bar5);
			\draw[edgefar, curved] (v5) to (bar7);
			\draw[edgefar, curved] (v7) to (bar9);

			% Diagonal connections
			\draw[edgefar, curved_right] (v-1) to (bar-1);
			\draw[edgefar, curved_right] (v1) to (bar1);
			\draw[edgefar, curved_right] (v3) to (bar3);
			\draw[edgefar, curved_right] (v5) to (bar5);
			\draw[edgefar, curved_right] (v7) to (bar7);

			\draw[edgefar, curved] (v-3) to (bar1);
			\draw[edgefar, curved] (v-1) to (bar3);
			\draw[edgefar, curved] (v1) to (bar5);
			\draw[edgefar, curved] (v3) to (bar7);
			\draw[edgefar, curved] (v5) to (bar9);

			\draw[edgefar, curved] (v-3) to (bar3);
			\draw[edgefar, curved] (v1) to (bar7);
			\draw[edgefar, curved] (v3) to (bar9);

			% Horizontal edges
			\draw[edge] (v-3) -- (v-1);
			\draw[edge] (v-1) -- (v1);
			\draw[edge] (v1) -- (v3);
			\draw[edge] (v3) -- (v5);
			\draw[edge] (v5) -- (v7);

			\draw[edge] (v-2) -- (v0);
			\draw[edge] (v0) -- (v2);
			\draw[edge] (v2) -- (v4);
			\draw[edge] (v4) -- (v6);
			\draw[edge] (v6) -- (v8);

			\draw[edge] (bar-1) -- (bar1);
			\draw[edge] (bar1) -- (bar3);
			\draw[edge] (bar3) -- (bar5);
			\draw[edge] (bar7) -- (bar9);

			\draw[clawedge] (v4) -- (bar5);
			\draw[clawedge] (bar5) -- (bar7);
			\draw[clawedge, curved] (v-1) to (bar5);

			% Caption
			\node at (\captionx,\captiony) {type B claw at $h_{\overline{2j+3}}$};
		\end{scope}

	\end{tikzpicture}
	\caption{
		Structures of claws at $\hbar{2j+3}$.
		The bold green edges connect the claw center to the claw leaves, while gray edges indicate all other edges.
		Each vertex labeled with number $i$ represents $h_{2j+i}$ and labeled with $\overline{i}$ represents $\hbar{2j+i}$.
		The blue-filled circle indicates the claw center $\hbar{2j+3}$.
		Green-filled circles indicate the claw leaves: $h_{2j-3}$, $h_{2j+1}$, $h_{2j+4}$ for type A claws, and $h_{2j-3}$, $h_{2j+2}$, $\hbar{2j+5}$ for type B claws.
		Solid rectangles represent the frozen even vertices $h_{2j-2}$ and $h_{2j}$.
		The blue-bordered circle indicates the special odd vertex $\hbar{2j-1}$.
		Orange-bordered circles indicate the rearrangeable clique: $\rearrangeable{\overline{2j+3}}{+} = \{h_{2j-5}$, $h_{2j-4}$, $h_{\overline{2j-3}}\}$ on the left, and $\rearrangeable{\overline{2j+3}}{-} = \{h_{2j+5}$, $h_{2j+6}$, $h_{\overline{2j+7}}\}$ on the right, which is the same as those in Figure~\ref{fig:claw-configurations}.
		The white circles indicate forbidden vertices, that cannot be connected to any of the claw leaves in $\LopR$.
	}
	\label{fig:claw-configurations-bar}
\end{figure}

\subsubsection{Potential neighbors of claw}
\label{sec:potential-neighbors-claw}
Here we explain the \emph{potential neighbors} of claw, which is the neighbors of claw in ${G}[\LopR]$.
Note that ${G}[\LopR]$ is a bipartite graph, and $L$ and $R$ are independent sets; vertices in $L$ (or $R$) are not connected to each other.

The potential neighbors of claws are classified into three categories: (i) \emph{rearrangeable clique}, (ii) \emph{special odd vertices}, and (iii) \emph{frozen even vertices}, which will be explained in the following.

Let the rearrangeable clique of a claw at $c$ be denoted by $\rearrangeable{c}{\sigma} = (\vrear{\sigma, 1}_{c}, \vrear{\sigma, 2}_{c}, \vrear{\sigma, 3}_{c})$, where $\sigma \in \{+, -\}$.
The three vertices in each rearrangeable clique are mutually connected, forming a clique.
For claws at $h_{2j-1}$ and $h_{\overline{2j+3}}$, the rearrangeable cliques are defined as
\begin{align}
	\rearrangeable{2j-1}{-} = \rearrangeable{\overline{2j+3}}{+} \equiv (h_{2j-5}, h_{2j-4}, h_{\overline{2j-3}})
	\,,\quad
	\rearrangeable{2j-1}{+} = \rearrangeable{\overline{2j+3}}{-} \equiv (h_{2j+5}, h_{2j+6}, h_{\overline{2j+7}})
	\,.
\end{align}
The rearrangeable clique $\rearrangeable{c}{-}$ is connected to the special odd and frozen vertices of the claw at $c$ in ${G}$, while $\rearrangeable{c}{+}$ is not.
These rearrangeable cliques are represented by orange-bordered circles in Figures~\ref{fig:claw-configurations}--\ref{fig:claw-neigbors-ABbar}.
Note that claws at $h_{2j-1}$ and $h_{\overline{2j+3}}$ share the same set of rearrangeable cliques, independent of the claw type.
Since each rearrangeable clique forms a clique, a claw can connect to at most one vertex in each rearrangeable clique.
Importantly, the possible neighbors of rearrangeable cliques connected to claws at $h_{2j-1}$ and $h_{\overline{2j+3}}$ are also the same, independent of the claw type.

The special odd vertex for the claw at $h_{2j-1}$ is defined as $\vspo_{2j-1} \equiv h_{2j+3}$, and those for the claw at $h_{\overline{2j+3}}$ is defined as $\vspo_{\overline{2j+3}} \equiv h_{\overline{2j-1}}$.
These are represented by blue-bordered circles in Figures~\ref{fig:claw-configurations}--\ref{fig:claw-neigbors-ABbar}.
When the leaves are connected to the special odd vertex, the claw can potentially cancel with an extended claw structure, as is explained in Figure~\ref{fig:CC-part-one-spodd-plusone-neigbor}.

The frozen even vertices for the claw at $h_{2j-1}$ are defined as $\vfrz{1}_{2j-1} \equiv h_{2j+2}$ and $\vfrz{2}_{2j-1} \equiv h_{2j+4}$, and those for the claw with center $\hbar{2j+3}$ are defined as $\vfrz{1}_{\overline{{2j+3}}} \equiv h_{2j}$ and $\vfrz{2}_{\overline{{2j+3}}} \equiv h_{2j-2}$.
$\vfrz{2}_{c}$ is connected to $\rearrangeable{c}{+}$ and $\vfrz{1}_{c}$ is not.
These are represented by rectangles in Figures~\ref{fig:claw-configurations}--\ref{fig:claw-neigbors-ABbar}.
When the leaves are connected to the frozen even vertices, these frozen vertices cannot be connected to any other vertices in $R$.
This property prevents claw cancellation involving this claw, and the cancellation must then be achieved through other CC-parts or reduced to BOE-cancellation, as we will see later.
Note that claws with the same center share the same set of frozen even vertices independent of the claw type.

The other neighbors are represented by white circles and are called forbidden vertices in Figures~\ref{fig:claw-configurations}, \ref{fig:claw-configurations-bar}.
These are neighbors of the claw center that cannot be connected to the leaves, since $R$ is an independent set of vertices.
Also note that the claw center cannot be connected to any vertices other than the leaves; otherwise $L$ would become a non-independent set, contradicting its definition.
Thus, for example, when $\LopR$ contains a type A claw at  $h_{2j-1}$, neither $L$ nor $R$ can include the vertices $h_{2j-2}$ and $h_{2j+2}$, which are leaves of the type B claw with the same center.
This leads to the rearrangeability property of claws: if $\LopR$ contains a claw, there exists another configuration $(L^\prime, R^\prime) \in \indepset{M}{k} \times \indepset{M}{l}$ where $L^\prime \oplus R^\prime$ equals $\LopR$ with the claw replaced by a different type of claw with the same center.

%Clearly, $\LopR$ cannot contain nearby claws simultaneously: if claws whose centers are $h_{2j-1}$ and $h_{2j-1+\delta}$ (where $-10 < \delta < 10$) are both included in $L$, they cannot coexist, nor can claws whose centers are $\hbar{2j-1}$ and $\hbar{2j-1+\delta}$; otherwise, at least one pair of leaves from the two claws would be adjacent, contradicting the assumption that $R$ is an independent set.

We can confirm that there are no other claws in the frustration graph ${G}_M$.
The claws included in ${G}_M$ are exhausted by those given above.
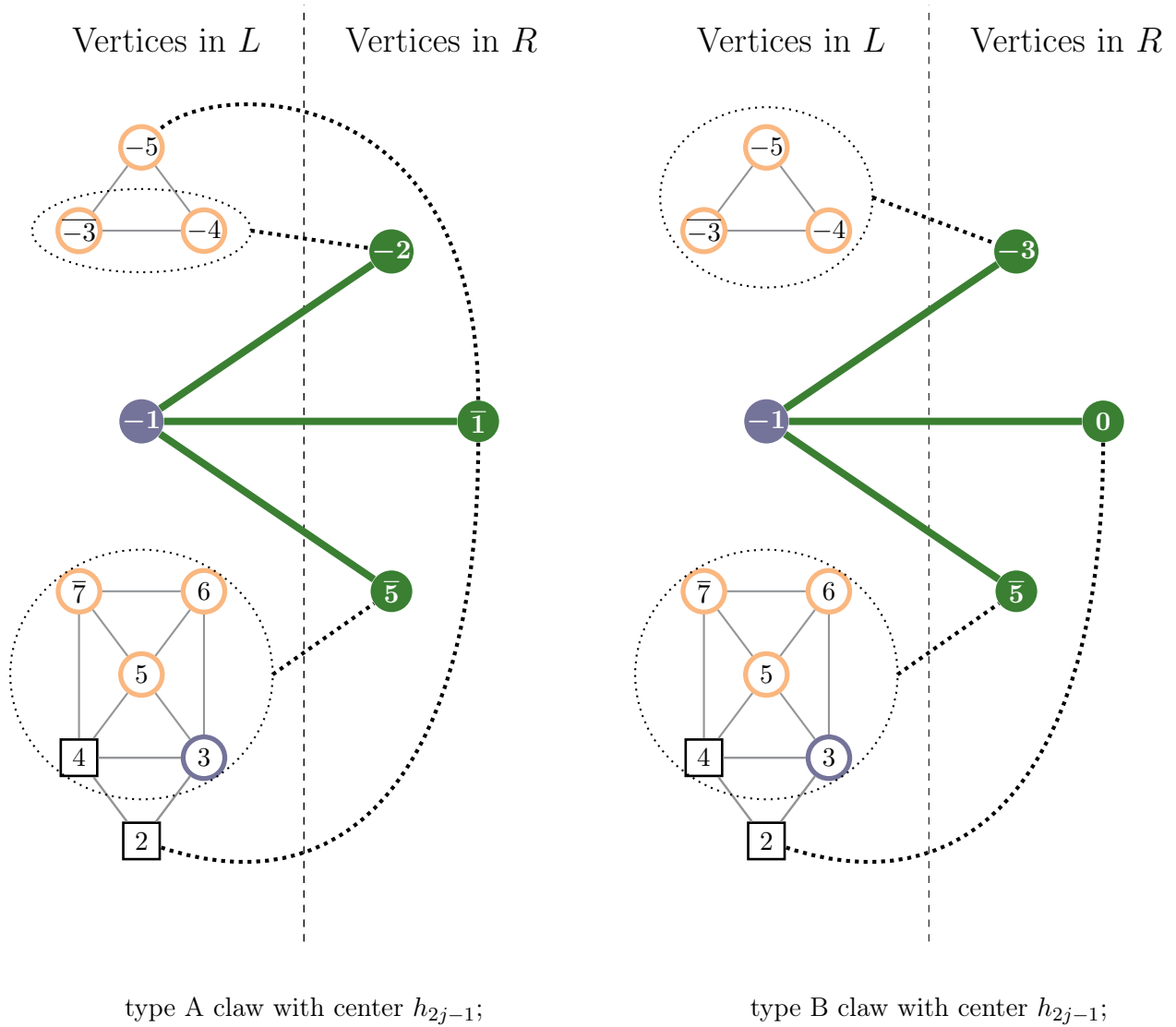
\begin{figure}[tbp]
	\centering
	\begin{tikzpicture}[
			triangular lattice small,
		]
		\pgfmathsetmacro{\offset}{0}

		\def\rightx{3.6}
		\def\righty{2.45}
		\def\delta{0.3}
		\def\deltay{1.2}
		\def\deltax{0.9}
		\def\belowoffsety{0.3}

		\begin{scope}[shift={(0,0)}]

			\node at (0.65*\rightx,-8.5) {{type A claw with center $h_{2j-1}$};};
			\draw[dashed] (0.65*\rightx,-7.5)--(0.65*\rightx,6);

			\node at (0.1*\rightx,5.5) {{\Large Vertices in $L$}};
			\node at (1.2*\rightx,5.5) {{\Large Vertices in $R$}};
			\draw[dashed] (0.65*\rightx,-7.5)--(0.65*\rightx,6);

			% claw center
			\placevertexLR{0}{0}{center}{clawcenter}{-1}{0}

			% claw leaves
			\placevertexLR{\rightx}{\righty}{leaf1}{clawleaves}{-2}{0}
			\placevertexLR{{\rightx+1.25}}{0}{leaf2}{clawleaves}{1}{1}
			\placevertexLR{\rightx}{{-\righty}}{leaf3}{clawleaves}{5}{1}

			% rearrengeble
			\placevertexLR{\deltax}{\righty+\delta}{reup1}{rearrangeablevertex}{-4}{0}
			\placevertexLR{-\deltax}{\righty+\delta}{reup2}{rearrangeablevertex}{-3}{1}
			\placevertexLR{0}{\righty+\delta+\deltay}{reup3}{rearrangeablevertex}{-5}{0}

			% rearrengeble
			\placevertexLR{\deltax}{-\righty-\delta+\belowoffsety}{redown1}{rearrangeablevertex}{6}{0}
			\placevertexLR{-\deltax}{-\righty-\delta+\belowoffsety}{redown2}{rearrangeablevertex}{7}{1}
			\placevertexLR{0}{-\righty-\delta-\deltay+\belowoffsety}{redown3}{rearrangeablevertex}{5}{0}

			% specials
			\placevertexLR{\deltax}{-\righty-\delta-\deltay*2+\belowoffsety}{spodd}{specialodd}{3}{0}
			\placevertexLR{-\deltax}{-\righty-\delta-\deltay*2+\belowoffsety}{speven1}{frozeneven}{4}{0}
			\placevertexLR{0}{-\righty-\delta-\deltay*3+\belowoffsety}{speven2}{frozeneven}{2}{0}

			\draw[clawedge] (center) -- (leaf1);
			\draw[clawedge] (center) -- (leaf2);
			\draw[clawedge] (center) -- (leaf3);

			\draw[edge] (reup1) -- (reup2) -- (reup3) -- (reup1);

			\draw[edge] (redown1) -- (redown2) -- (redown3) -- (redown1);

			\draw[edge] (spodd) -- (speven1) -- (speven2) -- (spodd);

			\draw[edge] (spodd) -- (speven1) -- (speven2) -- (spodd);

			\draw[edge] (spodd) -- (redown3) -- (speven1);

			\draw[edge] (redown2) -- (speven1);
			\draw[edge] (redown1) -- (spodd);

			% bar-3とv-4を点線の楕円で囲む
			% 楕円の中心と半径を計算
			\pgfmathsetmacro{\ellipsecenterx}{(\deltax + (-\deltax))/2}
			\pgfmathsetmacro{\ellipsecentery}{\righty+\delta}
			\pgfmathsetmacro{\ellipsewidth}{\deltax*1.75}
			\pgfmathsetmacro{\ellipseheight}{0.6}

			% 点線の楕円を描画
			\draw[groupellipse] (\ellipsecenterx,\ellipsecentery) ellipse (\ellipsewidth cm and \ellipseheight cm);

			% 楕円からv-2への接続線
			% 楕円の上端からv-2へ
			\pgfmathsetmacro{\connectionx}{\ellipsecenterx+ \ellipsewidth}
			\pgfmathsetmacro{\connectiony}{\ellipsecentery}

			% 接続線を描画（少し曲げて自然に見せる）
			\draw[potensialedge] (\connectionx,\connectiony) -- (leaf1);

			% bar-3とv-4を点線の楕円で囲む
			% 楕円の中心と半径を計算
			\pgfmathsetmacro{\ellipsecenterx}{0}
			\pgfmathsetmacro{\ellipsecentery}{-\righty-\delta-\deltay+\belowoffsety}
			\pgfmathsetmacro{\ellipsewidth}{\deltax*2.1}
			\pgfmathsetmacro{\ellipseheight}{\deltay*1.5}

			% 点線の楕円を描画
			\draw[groupellipse] (\ellipsecenterx,\ellipsecentery) ellipse (\ellipsewidth cm and \ellipseheight cm);

			% 楕円からv-2への接続線
			% 楕円の上端からv-2へ
			\pgfmathsetmacro{\connectionx}{\ellipsecenterx+ \ellipsewidth}
			\pgfmathsetmacro{\connectiony}{\ellipsecentery}

			% 接続線を描画（少し曲げて自然に見せる）
			\draw[potensialedge] (\connectionx,\connectiony) -- (leaf3);

			\draw[potensialedge] (leaf2)  .. controls +(0,5) and +(1,1) .. (reup3);
			\draw[potensialedge] (leaf2) .. controls +(0,-5) and +(3,-1) .. (speven2);

		\end{scope}

		\begin{scope}[shift={(9,0)}]

			\node at (0.65*\rightx,-8.5) {{type B claw with center $h_{2j-1}$};};

			% claw center
			\placevertexLR{0}{0}{center}{clawcenter}{-1}{0}

			\node at (0.1*\rightx,5.5) {{\Large Vertices in $L$}};
			\node at (1.2*\rightx,5.5) {{\Large Vertices in $R$}};

			% claw leaves
			\placevertexLR{\rightx}{\righty}{leaf1}{clawleaves}{-3}{0}
			\placevertexLR{{\rightx+1.25}}{0}{leaf2}{clawleaves}{0}{0}
			\placevertexLR{\rightx}{{-\righty}}{leaf3}{clawleaves}{5}{1}

			% rearrengeble
			\placevertexLR{\deltax}{\righty+\delta}{reup1}{rearrangeablevertex}{-4}{0}
			\placevertexLR{-\deltax}{\righty+\delta}{reup2}{rearrangeablevertex}{-3}{1}
			\placevertexLR{0}{\righty+\delta+\deltay}{reup3}{rearrangeablevertex}{-5}{0}

			% rearrengeble
			\placevertexLR{\deltax}{-\righty-\delta+\belowoffsety}{redown1}{rearrangeablevertex}{6}{0}
			\placevertexLR{-\deltax}{-\righty-\delta+\belowoffsety}{redown2}{rearrangeablevertex}{7}{1}
			\placevertexLR{0}{-\righty-\delta-\deltay+\belowoffsety}{redown3}{rearrangeablevertex}{5}{0}

			% specials
			\placevertexLR{\deltax}{-\righty-\delta-\deltay*2+\belowoffsety}{spodd}{specialodd}{3}{0}
			\placevertexLR{-\deltax}{-\righty-\delta-\deltay*2+\belowoffsety}{speven1}{frozeneven}{4}{0}
			\placevertexLR{0}{-\righty-\delta-\deltay*3+\belowoffsety}{speven2}{frozeneven}{2}{0}

			\draw[dashed] (0.65*\rightx,-7.5)--(0.65*\rightx,6);

			\draw[clawedge] (center) -- (leaf1);
			\draw[clawedge] (center) -- (leaf2);
			\draw[clawedge] (center) -- (leaf3);

			\draw[edge] (reup1) -- (reup2) -- (reup3) -- (reup1);

			\draw[edge] (redown1) -- (redown2) -- (redown3) -- (redown1);

			\draw[edge] (spodd) -- (speven1) -- (speven2) -- (spodd);

			\draw[edge] (spodd) -- (speven1) -- (speven2) -- (spodd);

			\draw[edge] (spodd) -- (redown3) -- (speven1);

			\draw[edge] (redown2) -- (speven1);
			\draw[edge] (redown1) -- (spodd);

			% bar-3とv-4を点線の楕円で囲む
			% 楕円の中心と半径を計算
			\pgfmathsetmacro{\ellipsecenterx}{0}
			\pgfmathsetmacro{\ellipsecentery}{\righty+\delta+0.4*\deltay}
			\pgfmathsetmacro{\ellipsewidth}{\deltax*1.7}
			\pgfmathsetmacro{\ellipseheight}{1.3}

			% 点線の楕円を描画
			\draw[groupellipse] (\ellipsecenterx,\ellipsecentery) ellipse (\ellipsewidth cm and \ellipseheight cm);

			% 楕円からv-2への接続線
			% 楕円の上端からv-2へ
			\pgfmathsetmacro{\connectionx}{\ellipsecenterx+ \ellipsewidth}
			\pgfmathsetmacro{\connectiony}{\ellipsecentery}

			% 接続線を描画（少し曲げて自然に見せる）
			\draw[potensialedge] (\connectionx,\connectiony) -- (leaf1);

			% bar-3とv-4を点線の楕円で囲む
			% 楕円の中心と半径を計算
			\pgfmathsetmacro{\ellipsecenterx}{0}
			\pgfmathsetmacro{\ellipsecentery}{-\righty-\delta-\deltay+\belowoffsety}
			\pgfmathsetmacro{\ellipsewidth}{\deltax*2.1}
			\pgfmathsetmacro{\ellipseheight}{\deltay*1.5}

			% 点線の楕円を描画
			\draw[groupellipse] (\ellipsecenterx,\ellipsecentery) ellipse (\ellipsewidth cm and \ellipseheight cm);

			% 楕円からv-2への接続線
			% 楕円の上端からv-2へ
			\pgfmathsetmacro{\connectionx}{\ellipsecenterx+ \ellipsewidth}
			\pgfmathsetmacro{\connectiony}{\ellipsecentery}

			% 接続線を描画（少し曲げて自然に見せる）
			\draw[potensialedge] (\connectionx,\connectiony) -- (leaf3);

			%\draw[edge] (leaf2)  .. controls +(0,5) and +(1,1) .. (reup3);
			\draw[potensialedge] (leaf2) .. controls +(0,-5) and +(3,-1) .. (speven2);

		\end{scope}

	\end{tikzpicture}
	\caption{
		Potential neighbors of type A and type B claws in $\LopR$ for $(L, R) \in {S}_k \times {S}_l$.
		Here we assume the claw center belongs to the $L$ side and the leaves belong to the $R$ side; the converse case follows by a similar argument.
		The claw center cannot be connected to any vertices other than its three leaves.
		Dotted lines indicate potential edges from the leaves: a dotted line to a dotted circle means the leaf can be connected to any vertex within that dotted circle.
		The descriptions of the vertices are the same as those in Figure~\ref{fig:claw-configurations}.
	}
	\label{fig:claw-neigbors-AB}
\end{figure}
\begin{figure}[tbp]
	\centering
	\begin{tikzpicture}[
			triangular lattice small
		]
		\pgfmathsetmacro{\offset}{0}

		\def\rightx{3.6}
		\def\righty{2.45}
		\def\delta{0.3}
		\def\deltay{1.2}
		\def\deltax{0.9}
		\def\belowoffsety{0.3}

		\begin{scope}[shift={(0,0)}]

			\node at (0.65*\rightx,-6.5) {{type A claw with center $\hbar{2j+3}$};};

			\node at (0.1*\rightx,7.5) {{\Large Vertices in $L$}};
			\node at (1.2*\rightx,7.5) {{\Large Vertices in $R$}};
			\draw[dashed] (0.65*\rightx,8)--(0.65*\rightx,-5.5);

			% claw center
			\placevertexLR{0}{0}{center}{clawcenter}{3}{1}

			% claw leaves
			\placevertexLR{\rightx}{\righty}{leaf1}{clawleaves}{-3}{0}
			\placevertexLR{{\rightx+1.25}}{0}{leaf2}{clawleaves}{1}{0}
			\placevertexLR{\rightx}{{-\righty}}{leaf3}{clawleaves}{4}{0}

			% rearrengeble
			\placevertexLR{\deltax}{-\righty-\delta}{reup1}{rearrangeablevertex}{6}{0}
			\placevertexLR{-\deltax}{-\righty-\delta}{reup2}{rearrangeablevertex}{5}{0}
			\placevertexLR{0}{-\righty-\delta-\deltay}{reup3}{rearrangeablevertex}{7}{1}

			% rearrengeble
			\placevertexLR{\deltax}{\righty+\delta-\belowoffsety}{redown1}{rearrangeablevertex}{-5}{0}
			\placevertexLR{-\deltax}{\righty+\delta-\belowoffsety}{redown2}{rearrangeablevertex}{-4}{0}
			\placevertexLR{0}{\righty+\delta+\deltay-\belowoffsety}{redown3}{rearrangeablevertex}{-3}{1}

			% specials
			\placevertexLR{\deltax}{\righty+\delta+\deltay*2-\belowoffsety}{spodd}{specialodd}{-1}{1}
			\placevertexLR{-\deltax}{\righty+\delta+\deltay*2-\belowoffsety}{speven1}{frozeneven}{-2}{0}
			\placevertexLR{0}{\righty+\delta+\deltay*3-\belowoffsety}{speven2}{frozeneven}{0}{0}

			\draw[clawedge] (center) -- (leaf1);
			\draw[clawedge] (center) -- (leaf2);
			\draw[clawedge] (center) -- (leaf3);

			\draw[edge] (reup1) -- (reup2) -- (reup3) -- (reup1);

			\draw[edge] (redown1) -- (redown2) -- (redown3) -- (redown1);

			\draw[edge] (spodd) -- (speven1) -- (speven2) -- (spodd);

			\draw[edge] (spodd) -- (speven1) -- (speven2) -- (spodd);

			\draw[edge] (spodd) -- (redown3) -- (speven1);

			\draw[edge] (redown2) -- (speven1);
			\draw[edge] (redown1) -- (spodd);

			% bar-3とv-4を点線の楕円で囲む
			% 楕円の中心と半径を計算
			\pgfmathsetmacro{\ellipsecenterx}{0}
			\pgfmathsetmacro{\ellipsecentery}{-\righty-\delta}
			\pgfmathsetmacro{\ellipsewidth}{\deltax*1.75}
			\pgfmathsetmacro{\ellipseheight}{0.6}

			% 点線の楕円を描画
			\draw[groupellipse] (\ellipsecenterx,\ellipsecentery) ellipse (\ellipsewidth cm and \ellipseheight cm);

			% 楕円からv-2への接続線
			% 楕円の上端からv-2へ
			\pgfmathsetmacro{\connectionx}{\ellipsecenterx+ \ellipsewidth}
			\pgfmathsetmacro{\connectiony}{\ellipsecentery}

			% 接続線を描画（少し曲げて自然に見せる）
			\draw[potensialedge] (\connectionx,\connectiony) -- (leaf3);

			% bar-3とv-4を点線の楕円で囲む
			% 楕円の中心と半径を計算
			\pgfmathsetmacro{\ellipsecenterx}{0}
			\pgfmathsetmacro{\ellipsecentery}{\righty+\delta+\deltay-\belowoffsety}
			\pgfmathsetmacro{\ellipsewidth}{\deltax*2.1}
			\pgfmathsetmacro{\ellipseheight}{\deltay*1.5}

			% 点線の楕円を描画
			\draw[groupellipse] (\ellipsecenterx,\ellipsecentery) ellipse (\ellipsewidth cm and \ellipseheight cm);

			% 楕円からv-2への接続線
			% 楕円の上端からv-2へ
			\pgfmathsetmacro{\connectionx}{\ellipsecenterx+ \ellipsewidth}
			\pgfmathsetmacro{\connectiony}{\ellipsecentery}

			% 接続線を描画（少し曲げて自然に見せる）
			\draw[potensialedge] (\connectionx,\connectiony) -- (leaf1);

			\draw[potensialedge] (leaf2)  .. controls +(0,-5) and +(1,-1) .. (reup3);
			\draw[potensialedge] (leaf2) .. controls +(0,5) and +(3,1) .. (speven2);

		\end{scope}

		\begin{scope}[shift={(9,0)}]

			\node at (0.65*\rightx,-6.5) {{type B claw with center $\hbar{2j+3}$};};

			\node at (0.1*\rightx,7.5) {{\Large Vertices in $L$}};
			\node at (1.2*\rightx,7.5) {{\Large Vertices in $R$}};
			\draw[dashed] (0.65*\rightx,8)--(0.65*\rightx,-5.5);

			% claw center
			\placevertexLR{0}{0}{center}{clawcenter}{3}{1}

			% claw leaves
			\placevertexLR{\rightx}{\righty}{leaf1}{clawleaves}{-3}{0}
			\placevertexLR{{\rightx+1.25}}{0}{leaf2}{clawleaves}{2}{0}
			\placevertexLR{\rightx}{{-\righty}}{leaf3}{clawleaves}{5}{1}

			% rearrengeble
			\placevertexLR{\deltax}{-\righty-\delta}{reup1}{rearrangeablevertex}{6}{0}
			\placevertexLR{-\deltax}{-\righty-\delta}{reup2}{rearrangeablevertex}{5}{0}
			\placevertexLR{0}{-\righty-\delta-\deltay}{reup3}{rearrangeablevertex}{7}{1}

			% rearrengeble
			\placevertexLR{\deltax}{\righty+\delta-\belowoffsety}{redown1}{rearrangeablevertex}{-5}{0}
			\placevertexLR{-\deltax}{\righty+\delta-\belowoffsety}{redown2}{rearrangeablevertex}{-4}{0}
			\placevertexLR{0}{\righty+\delta+\deltay-\belowoffsety}{redown3}{rearrangeablevertex}{-3}{1}

			% specials
			\placevertexLR{\deltax}{\righty+\delta+\deltay*2-\belowoffsety}{spodd}{specialodd}{-1}{1}
			\placevertexLR{-\deltax}{\righty+\delta+\deltay*2-\belowoffsety}{speven1}{frozeneven}{-2}{0}
			\placevertexLR{0}{\righty+\delta+\deltay*3-\belowoffsety}{speven2}{frozeneven}{0}{0}

			\draw[clawedge] (center) -- (leaf1);
			\draw[clawedge] (center) -- (leaf2);
			\draw[clawedge] (center) -- (leaf3);

			\draw[edge] (reup1) -- (reup2) -- (reup3) -- (reup1);

			\draw[edge] (redown1) -- (redown2) -- (redown3) -- (redown1);

			\draw[edge] (spodd) -- (speven1) -- (speven2) -- (spodd);

			\draw[edge] (spodd) -- (speven1) -- (speven2) -- (spodd);

			\draw[edge] (spodd) -- (redown3) -- (speven1);

			\draw[edge] (redown2) -- (speven1);
			\draw[edge] (redown1) -- (spodd);

			% bar-3とv-4を点線の楕円で囲む
			% 楕円の中心と半径を計算
			\pgfmathsetmacro{\ellipsecenterx}{0}
			\pgfmathsetmacro{\ellipsecentery}{-\righty-\delta-0.4*\deltay}
			\pgfmathsetmacro{\ellipsewidth}{\deltax*1.7}
			\pgfmathsetmacro{\ellipseheight}{1.3}

			% 点線の楕円を描画
			\draw[groupellipse] (\ellipsecenterx,\ellipsecentery) ellipse (\ellipsewidth cm and \ellipseheight cm);

			% 楕円からv-2への接続線
			% 楕円の上端からv-2へ
			\pgfmathsetmacro{\connectionx}{\ellipsecenterx+ \ellipsewidth}
			\pgfmathsetmacro{\connectiony}{\ellipsecentery}

			% 接続線を描画（少し曲げて自然に見せる）
			\draw[potensialedge] (\connectionx,\connectiony) -- (leaf3);

			% bar-3とv-4を点線の楕円で囲む
			% 楕円の中心と半径を計算
			\pgfmathsetmacro{\ellipsecenterx}{0}
			\pgfmathsetmacro{\ellipsecentery}{\righty+\delta+\deltay-\belowoffsety}
			\pgfmathsetmacro{\ellipsewidth}{\deltax*2.1}
			\pgfmathsetmacro{\ellipseheight}{\deltay*1.5}

			% 点線の楕円を描画
			\draw[groupellipse] (\ellipsecenterx,\ellipsecentery) ellipse (\ellipsewidth cm and \ellipseheight cm);

			% 楕円からv-2への接続線
			% 楕円の上端からv-2へ
			\pgfmathsetmacro{\connectionx}{\ellipsecenterx+ \ellipsewidth}
			\pgfmathsetmacro{\connectiony}{\ellipsecentery}

			% 接続線を描画（少し曲げて自然に見せる）
			\draw[potensialedge] (\connectionx,\connectiony) -- (leaf1);

			%\draw[edge] (leaf2)  .. controls +(0,-5) and +(1,-1) .. (reup3);
			\draw[potensialedge] (leaf2) .. controls +(0,5) and +(3,1) .. (speven2);

		\end{scope}

	\end{tikzpicture}
	\caption{
		Potential neighbors of type A and type B claws in $\LopR$ for $(L, R) \in {S}_k \times {S}_l$.
		Here we assume the claw center belongs to the $L$ side and the leaves belong to the $R$ side; the converse case follows by a similar argument.
		The claw center cannot be connected to any vertices other than its three leaves.
		Dotted lines indicate potential edges from the leaves: a dotted line to a dotted circle means the leaf can be connected to any vertex within that dotted circle.
		The descriptions of the vertices are the same as those in Figure~\ref{fig:claw-configurations-bar}.
	}
	\label{fig:claw-neigbors-ABbar}
\end{figure}

\subsubsection{Structure of extended claws}
We next introduce \emph{extended claws}, which are subgraphs in ${G}[\LopR]$ and natural extensions of claws.
The vertices in an extended claw are denoted by $\cc{n}{t}{c} = \ccc{n}{c} \sqcup \ccl{n}{t}{c} \subset \LopR$ ($t \in \{A, B\}$) and are defined as follows:
\begin{align}
	\ccc{n}{c}    & = \bigsqcup_{i=1}^{n} h_{c^{(i)}},
	\\
	\ccl{n}{t}{c} & = \{\vleafdep{t, 1}_{c}, \vleafdep{t, 2}_{c}\} \sqcup \cclwtotype{n}{c},
	\quad
	\cclwtotype{n}{c} = \bigsqcup_{i=1}^{n} \vleafindep_{c^{(i)}},
\end{align}
where $h_{c^{(1)}} \equiv h_c$ and $h_{c^{(i)}}$ is recursively defined by $h_{c^{(i)}} = \vspo_{c^{(i-1)}}$ for $i=2,\ldots,n$.
Note that the leaves of the claw at $h_c$ are included in $\ccl{n}{t}{c}$: the first component of $\cclwtotype{n}{c}$ is the third leaf $\vleafindep_{c^{(1)}} = \vleafindep_{c}$, which together with the first two vertices comprises the leaves of the claw at $h_c$: $\leaves{c}{t} = \{\vleafdep{t, 1}_{c}, \vleafdep{t, 2}_{c}, \vleafindep_{c}\}$.
More explicitly,
\begin{align}
	\ccc{n}{2j-1}
	 & =
	\{h_{2j-1+4i}\}_{i=0}^{n-1},
	\\
	\ccc{n}{\overline{2j-1}}
	 & =
	\{h_{\overline{2j-1-4i}}\}_{i=0}^{n-1},
	\\
	\cclwtotype{n}{2j-1}
	 & =
	\{h_{\overline{2j+5+4i}}\}_{i=0}^{n-1},
	\\
	\cclwtotype{n}{\overline{2j-1}}
	 & =
	\{h_{2j-7-4i}\}_{i=0}^{n-1}.
\end{align}
Hereafter, we refer to an extended claw $\cc{n}{t}{c}$ as an ``extended claw at $c$''.

For $n\geq 2$, the extended claw can also be defined recursively: $\ccc{n}{c} = \ccc{n-1}{c} \sqcup h_{c^{(n)}}$ and $\cclwtotype{n}{c} = \cclwtotype{n-1}{c} \sqcup \vleafindep_{c^{(n)}}$, thus $\cc{n}{t}{c} = \cc{n-1}{t}{c} \sqcup \{h_{c^{(n)}}, \vleafindep_{c^{(n)}}\}$.
The extended claw $\cc{n}{t}{c}$ for $n=1$ is the usual claw introduced in Section~\ref{sec:structure-claw}.
Note that $\vleafindep_{c^{(n)}}$ is the only possible neighbor of $h_{c^{(n)}}$ from $\cc{n-1}{t}{c} \sqcup \{h_{c^{(n)}}\}$ within the subgraph $\LopR$.

In Figure~\ref{fig:CC-part-one-spodd-plusone-neigbor}, we show the structure of extended claws.
For extended claws at $h_{2j-1}$ and $h_{\overline{2j+7}}$, see Figure~\ref{fig:extended-claw-configuration} and Figure~\ref{fig:extended-claw-configurations-bar}, respectively.

\subsubsection{Potential neighbors of an extended claw}
We next explain the potential neighbors of an extended claw.
The potential neighbors of $\cc{n}{t}{c}$ are (i) the rearrangeable clique $\rearrangeable{c}{-}$, (ii) the rearrangeable clique $\rearrangeable{c^{(n)}}{+}$, (iii) the frozen evens $\{\vfrz{1}_{c^{(n)}}, \vfrz{2}_{c^{(n)}}\}$, and (iv) the special odd $\vspo_{c^{(n)}}$.
In Figure~\ref{fig:CC-part-one-spodd-plusone-neigbor}, we show the possible neighbors of $\cc{n}{t}{c} \subset L\oplus R$ for $n=2$.
The structure of the potential neighbors of the extended claw is similar to that of the original claw.

In the following, we call the neighbors $\{\vfrz{1}_{c^{(n)}}, \vfrz{2}_{c^{(n)}}\}$ the frozen even vertices of the extended claw, and $\vspo_{c^{(n)}}$ the special odd vertex of the extended claw.

The other neighbors of $\cc{n}{t}{c}$ in ${G}$ are forbidden neighbors in $\LopR$, which are denoted by white circles in Figure~\ref{fig:extended-claw-configuration} and Figure~\ref{fig:extended-claw-configurations-bar}.

\subsubsection{Definition of CC-part}
We give the rigorous definition of the CC-part.
$\cc{n}{t}{c} \subset L\oplus R$ is called a CC-part if the neighbors of $\cc{n}{t}{c}$ in ${G}[L\oplus R]$ are contained in $\rearrangeable{c}{-}$ or $\rearrangeable{c^{(n)}}{+}$, if any.

We give a schematic picture of the CC-part with a claw $\cc{n=1}{t}{c}$ in Figure~\ref{fig:CC-part-simple}, and the CC-part with an extended claw $\cc{n=2}{t}{c}$ in Figure~\ref{fig:CC-part-one-spodd-plusone}.

As we will see in Appendix~\ref{app:exhausiveness}, when the extended claw is connected to its frozen even vertices or special odd vertex, the extended claw preserves the balance between the number of vertices from $L$ and $R$ in the connected component where it is included.
In such cases, we can attribute the cancellation of $h_L h_R$ to other BOE components or CC-parts in $L \oplus R$ that can break this balance.
The condition that $\abs{{E}_{L \oplus R}}$ is odd assures the existence of such components.

\begin{figure}[tbp]
	\centering

	\begin{tikzpicture}[triangular lattice small]

		\def\captionx{5}
		\def\captiony{-4.5}

		\begin{scope}[shift={(0,0)}]

			\node at (\captionx,\captiony) {{type A extended claw $\cc{2}{A}{2j-1}$}};

			\pgfmathsetmacro{\offset}{-2}

			\frustrationgraph{-3}{11}{\offset}

			% rearrangeable clique
			% left
			\placevertex{-3}{\offset}{\yrowone}{rearrangeablevertex}
			\placevertex{-2}{\offset}{\yrowtwo}{rearrangeablevertex}
			\placevertexbar{-1}{\offset}{\yrowthree}{rearrangeablevertex}
			% right
			\placevertex{11}{\offset}{\yrowone}{rearrangeablevertex}
			\placevertex{12}{\offset}{\yrowtwo}{rearrangeablevertex}
			\placevertexbar{13}{\offset}{\yrowthree}{rearrangeablevertex}

			% frozen evens
			\placevertex{8}{\offset}{\yrowtwo}{frozeneven}
			\placevertex{10}{\offset}{\yrowtwo}{frozeneven}

			% other evens
			\placevertex{2}{\offset}{\yrowtwo}{vertex}
			\placevertex{4}{\offset}{\yrowtwo}{vertex}
			\placevertex{6}{\offset}{\yrowtwo}{vertex}

			% special odd
			\placevertex{9}{\offset}{\yrowone}{specialodd}

			\placevertex{5}{\offset}{\yrowone}{clawcenter}
			\placevertexbar{11}{\offset}{\yrowthree}{clawleaves}

			% Place vertices in row 1 (top)
			\placevertex{1}{\offset}{\yrowone}{clawcenter, clawcenter}

			% Place vertices in row 2 (middle)
			\placevertex{0}{\offset}{\yrowtwo}{clawleaves}

			% Place vertices in row 3 (bottom)
			\placevertexbar{3}{\offset}{\yrowthree}{clawleaves}
			\placevertexbar{7}{\offset}{\yrowthree}{clawleaves}

			\draw[clawedge] (v1) to (v0);
			\draw[clawedge, curved] (v1) to (bar3);
			\draw[clawedge, curved] (v1) to (bar7);

			\draw[highlightededge, curved] (v5) to (bar7);
			\draw[highlightededge, curved] (v5) to (bar11);

		\end{scope}

		\begin{scope}[shift={(0,-7)}]

			\node at (\captionx,\captiony) {{type B extended claw $\cc{2}{B}{2j-1}$}};

			\pgfmathsetmacro{\offset}{-2}

			\frustrationgraph{-3}{11}{\offset}

			% rearrangeable clique
			% left
			\placevertex{-3}{\offset}{\yrowone}{rearrangeablevertex}
			\placevertex{-2}{\offset}{\yrowtwo}{rearrangeablevertex}
			\placevertexbar{-1}{\offset}{\yrowthree}{rearrangeablevertex}
			% right
			\placevertex{11}{\offset}{\yrowone}{rearrangeablevertex}
			\placevertex{12}{\offset}{\yrowtwo}{rearrangeablevertex}
			\placevertexbar{13}{\offset}{\yrowthree}{rearrangeablevertex}

			% frozen evens
			\placevertex{8}{\offset}{\yrowtwo}{frozeneven}
			\placevertex{10}{\offset}{\yrowtwo}{frozeneven}

			% other evens
			\placevertex{0}{\offset}{\yrowtwo}{vertex}
			\placevertex{4}{\offset}{\yrowtwo}{vertex}
			\placevertex{6}{\offset}{\yrowtwo}{vertex}

			% special odd
			\placevertex{9}{\offset}{\yrowone}{specialodd}

			\placevertex{5}{\offset}{\yrowone}{clawcenter}
			\placevertexbar{11}{\offset}{\yrowthree}{clawleaves}

			% Place vertices in row 1 (top)
			\placevertex{1}{\offset}{\yrowone}{clawcenter, clawcenter}

			% Place vertices in row 2 (middle)
			\placevertex{-1}{\offset}{\yrowone}{clawleaves}

			% Place vertices in row 3 (bottom)
			\placevertex{2}{\offset}{\yrowtwo}{clawleaves}
			\placevertexbar{7}{\offset}{\yrowthree}{clawleaves}

			\draw[clawedge] (v1) to (v-1);
			\draw[clawedge] (v1) to (v2);
			\draw[clawedge, curved] (v1) to (bar7);

			\draw[highlightededge, curved] (v5) to (bar7);
			\draw[highlightededge, curved] (v5) to (bar11);

		\end{scope}

	\end{tikzpicture}

	\caption{Structures of extended claws at the vertex $h_{2j-1}$: $\cc{n=2}{t}{2j-1}$.
		The bold green edges connect the claw center to the claw leaves, and the bold black lines represent the other edges in the extended claws, while gray edges indicate all other edges not present in the extended claws.
		Each vertex labeled with number $i$ represents $h_{2j+i}$ and labeled with $\overline{i}$ represents $\hbar{2j+i}$.
		The blue-filled circle indicates the part $\ccc{2}{2j-1}$, and the green-filled circles indicate the part $\ccl{2}{t}{2j-1}$ in $\cc{n=2}{t}{2j-1}$.
		Solid rectangles represent the frozen even vertices $\vfrz{1}_{2j+3} = h_{2j+6}$ and $\vfrz{2}_{2j+3} = h_{2j+8}$.
		The blue-bordered circle indicates the special odd vertex $\vspo_{2j+3} = h_{2j+7}$.
		The left orange-bordered circles constitute the rearrangeable clique $\rearrangeable{2j-1}{-}$, and the right ones constitute $\rearrangeable{2j+3}{+}$.}

	\label{fig:extended-claw-configuration}
\end{figure}

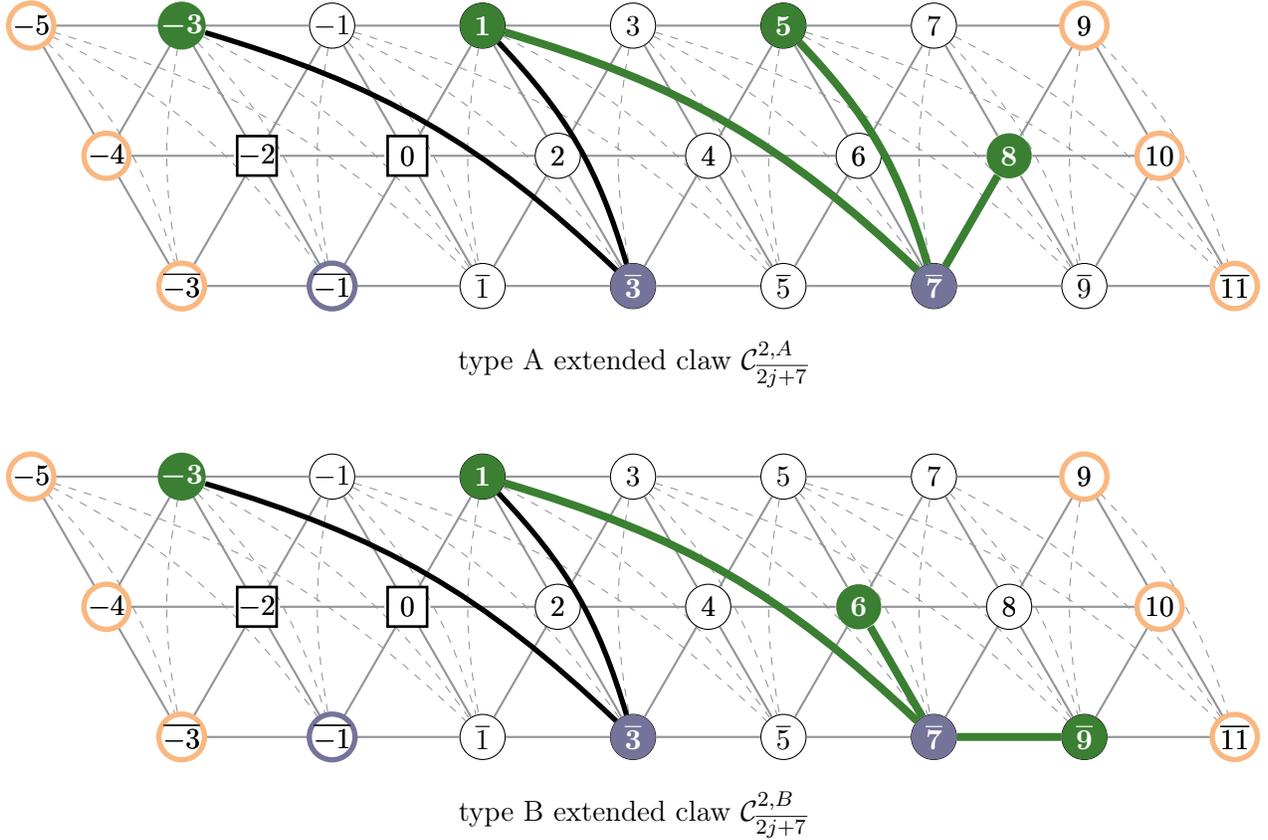
\begin{figure}[tbp]
	\centering
	\begin{tikzpicture}[triangular lattice small]

		\def\captionx{5}
		\def\captiony{-4.5}

		\begin{scope}[shift={(0,0)}]
			\pgfmathsetmacro{\offset}{-2}

			\node at (\captionx,\captiony) {{type A extended claw $\cc{2}{A}{\overline{2j+7}}$}};

			\frustrationgraph{-3}{11}{\offset}

			% rearrangeable clique
			% left
			\placevertex{-3}{\offset}{\yrowone}{rearrangeablevertex}
			\placevertex{-2}{\offset}{\yrowtwo}{rearrangeablevertex}
			\placevertexbar{-1}{\offset}{\yrowthree}{rearrangeablevertex}
			% right
			\placevertex{11}{\offset}{\yrowone}{rearrangeablevertex}
			\placevertex{12}{\offset}{\yrowtwo}{rearrangeablevertex}
			\placevertexbar{13}{\offset}{\yrowthree}{rearrangeablevertex}

			% frozen evens
			\placevertex{0}{\offset}{\yrowtwo}{frozeneven}
			\placevertex{2}{\offset}{\yrowtwo}{frozeneven}

			% other evens
			\placevertex{8}{\offset}{\yrowtwo}{vertex}
			\placevertex{4}{\offset}{\yrowtwo}{vertex}
			\placevertex{6}{\offset}{\yrowtwo}{vertex}

			% special odd
			\placevertexbar{1}{\offset}{\yrowthree}{specialodd}

			% claw
			\placevertexbar{9}{\offset}{\yrowthree}{clawcenter}
			\placevertexbar{5}{\offset}{\yrowthree}{clawcenter}

			\placevertex{3}{\offset}{\yrowone}{clawleaves}
			\placevertex{10}{\offset}{\yrowtwo}{clawleaves}
			\placevertex{7}{\offset}{\yrowone}{clawleaves}
			\placevertex{-1}{\offset}{\yrowone}{clawleaves}

			\draw[clawedge, curved] (v3) to (bar9);
			\draw[clawedge, curved] (v7) to (bar9);
			\draw[clawedge] (v10) to (bar9);
			\draw[highlightededge, curved] (v3) to (bar5);
			\draw[highlightededge, curved] (v-1) to (bar5);

		\end{scope}

		\begin{scope}[shift={(0,-6)}]
			\pgfmathsetmacro{\offset}{-2}

			\node at (\captionx,\captiony) {{type B extended claw $\cc{2}{B}{\overline{2j+7}}$}};

			\frustrationgraph{-3}{11}{\offset}

			% rearrangeable clique
			% left
			\placevertex{-3}{\offset}{\yrowone}{rearrangeablevertex}
			\placevertex{-2}{\offset}{\yrowtwo}{rearrangeablevertex}
			\placevertexbar{-1}{\offset}{\yrowthree}{rearrangeablevertex}
			% right
			\placevertex{11}{\offset}{\yrowone}{rearrangeablevertex}
			\placevertex{12}{\offset}{\yrowtwo}{rearrangeablevertex}
			\placevertexbar{13}{\offset}{\yrowthree}{rearrangeablevertex}

			% frozen evens
			\placevertex{0}{\offset}{\yrowtwo}{frozeneven}
			\placevertex{2}{\offset}{\yrowtwo}{frozeneven}

			% other evens
			\placevertex{10}{\offset}{\yrowtwo}{vertex}
			\placevertex{4}{\offset}{\yrowtwo}{vertex}
			\placevertex{6}{\offset}{\yrowtwo}{vertex}

			% special odd
			\placevertexbar{1}{\offset}{\yrowthree}{specialodd}

			% claw
			\placevertexbar{9}{\offset}{\yrowthree}{clawcenter}
			\placevertexbar{5}{\offset}{\yrowthree}{clawcenter}

			\placevertex{3}{\offset}{\yrowone}{clawleaves}
			\placevertexbar{11}{\offset}{\yrowthree}{clawleaves}
			\placevertex{8}{\offset}{\yrowtwo}{clawleaves}
			\placevertex{-1}{\offset}{\yrowone}{clawleaves}

			\draw[clawedge, curved] (v3) to (bar9);
			\draw[clawedge] (v8) to (bar9);
			\draw[clawedge] (bar11) to (bar9);
			\draw[highlightededge, curved] (v3) to (bar5);
			\draw[highlightededge, curved] (v-1) to (bar5);
		\end{scope}

	\end{tikzpicture}

	\caption{Structures of extended claws at the vertex $\hbar{2j+7}$: $\cc{n=2}{t}{\overline{2j+7}}$.
		The bold green edges connect the claw center to the claw leaves, and the bold black lines represent the other edges in the extended claws, while gray edges indicate all other edges not present in the extended claws.
		Each vertex labeled with number $i$ represents $h_{2j+i}$ and labeled with $\overline{i}$ represents $\hbar{2j+i}$.
		The blue-filled circle indicates the part $\ccc{2}{\overline{2j+7}}$, and the green-filled circles indicate the part $\ccl{2}{t}{\overline{2j+7}}$ in $\cc{n=2}{t}{\overline{2j+7}}$.
		Solid rectangles represent the frozen even vertices $\vfrz{1}_{\overline{2j+3}} = h_{2j}$ and $\vfrz{2}_{\overline{2j+3}} = h_{2j-2}$.
		The blue-bordered circle indicates the special odd vertex $\vspo_{\overline{2j+3}} = \hbar{2j-1}$.
		The left orange-bordered circles constitute the rearrangeable clique $\rearrangeable{\overline{2j+3}}{+}$, and the right ones constitute $\rearrangeable{\overline{2j+7}}{-}$.}
	\label{fig:extended-claw-configurations-bar}

\end{figure}

\begin{figure}[tbp]
	\centering
	\begin{tikzpicture}[
			triangular lattice small,
		]
		\pgfmathsetmacro{\offset}{0}

		\def\rightx{3.6}
		\def\righty{1.2}
		\def\rightyforleaves{0.9}
		\def\delta{1}
		\def\deltay{0.675}
		\def\deltax{0.4}
		\def\belowoffsety{0.3}

		\begin{scope}[shift={(0,0)}]
			\draw[dashed] (0.65*\rightx,-7)--(0.65*\rightx,4.75);

			\node at (0.1*\rightx,4.35) {{\Large Vertices in $L$}};
			\node at (1.2*\rightx,4.35) {{\Large Vertices in $R$}};
			%\draw[dashed] (0.65*\rightx,5)--(0.65*\rightx,6);

			% claw center
			\placevertexwithoutlabel{0}{0}{center}{clawcenter}

			% rearrengeble
			\placevertexwithoutlabel{\deltax}{\righty+\delta}{reup1}{rearrangeablevertex}
			\placevertexwithoutlabel{-\deltax}{\righty+\delta}{reup2}{rearrangeablevertex}
			\placevertexwithoutlabel{0}{\righty+\delta+\deltay}{reup3}{rearrangeablevertex}

			\node[left=3mm of reup3, font=\large] (upperlabel) {$\rearrangeable{c}{-}$};

			% claw leaves
			\placevertexwithoutlabel{\rightx}{\rightyforleaves}{leaf1}{clawleaves}
			\placevertexwithoutlabel{{\rightx}}{0}{leaf2}{clawleaves}
			\placevertexwithoutlabel{\rightx}{{-\rightyforleaves}}{leaf3}{clawleaves}

			\node[right=5.5mm of leaf3, font=\large] (leaf3label) {$\vleafindep_{c}$};
			\draw[->, thick] (leaf3label.west) -- (leaf3);

			% rearrengeble
			%\placevertexwithoutlabel{0}{2*\righty}{reup}{rearrangeablevertex}
			\placevertexwithoutlabel{\rightx}{-3*\righty}{vright}{vertexgreen}

			\placevertexwithoutlabel{0}{-2*\righty}{spodd}{specialodd}
			\node[left=5.5mm of spodd, font=\large] (spoddlabel) {$\vspo_{c} = h_{c^\prime}$};
			\draw[->, thick] (spoddlabel.east) -- (spodd);

			\node[left=5.5mm of center, font=\large] (centerlabel) {Claw center $c$};
			\draw[->, thick] (centerlabel.east) -- (center);

			% rearrengeble
			\placevertexwithoutlabel{\deltax}{-3.*\righty-\delta+\belowoffsety}{redown1}{rearrangeablevertex}
			\placevertexwithoutlabel{-\deltax}{-3.*\righty-\delta+\belowoffsety}{redown2}{rearrangeablevertex}
			\placevertexwithoutlabel{0}{-3*\righty-\delta-\deltay+\belowoffsety}{redown3}{rearrangeablevertex}

			\node[above left=2mm and -6mm of redown2, font=\large] (upperlabel) {$\rearrangeable{c^\prime}{+}$};

			% specials
			\placevertexwithoutlabel{-\deltax}{-3*\righty-\delta-2*\deltay+\belowoffsety}{spodd2}{specialodd}
			\placevertexwithoutlabel{+\deltax}{-3*\righty-\delta-2*\deltay+\belowoffsety}{speven1}{frozeneven}
			\placevertexwithoutlabel{0}{-5.25*\righty-\deltay+\belowoffsety}{speven2}{frozeneven}

			\node[left=10mm of spodd2, font=\large] (spoddlabel2) {$\vspo_{c^\prime}$};
			\draw[->, thick] (spoddlabel2.east) -- (spodd2);

			%\draw[->, thick] (spoddlabel2.west) -- (spodd2);

			\node[right=4mm of speven1, font=\large] (evenlavel) {$\vfrz{2}_{c^\prime}$};
			\draw[->, thick] (evenlavel.west) -- (speven1);

			\node[left=5.5mm of speven2, font=\large] (evenlavel2) {$\vfrz{1}_{c^\prime}$};
			\draw[->, thick] (evenlavel2.east) -- (speven2);

			%\node[above right=1.5mm and -17mm of vright, font=\Large, text width=6cm, align=center] {The only candidate};

			\draw[clawedge] (center) -- (leaf1);
			\draw[clawedge] (center) -- (leaf2);
			\draw[clawedge] (center) -- (leaf3);

			%\draw[potensialedge] (leaf1) -- (reup);
			\draw[edged] (spodd) -- (vright);

			% Define corner points for the leaves dotted rectangle
			\coordinate (leaftri1) at ($(leaf1)+(-0.5,0.5)$);
			\coordinate (leaftri2) at ($(leaf3)+(-0.5,-0.5)$);
			\coordinate (leaftri3) at ($(leaf3)+(0.5,-0.5)$);
			\coordinate (leaftri4) at ($(leaf1)+(0.5,0.5)$);

			% Dotted rectangle for claw leaves
			\draw[dotted, rounded corners=3pt, thick, draw=OliveGreen]
			(leaftri1) -- (leaftri2) -- (leaftri3) -- (leaftri4) -- cycle;

			% Define corner points for upper triangle
			\coordinate (uptri1) at ($(reup2)+(-0.8,-0.5)$);
			\coordinate (uptri2) at ($(reup1)+(0.8,-0.5)$);
			\coordinate (uptri3) at ($(reup3)+(0,1)$);

			% Dotted triangle for reup1, reup2, reup3
			\draw[dotted, rounded corners=20pt, thick]
			(uptri1) -- (uptri2) -- (uptri3) -- cycle;

			% Define corner points for lower region
			\coordinate (lowreg1) at ($(redown2)+(-0.5,0.5)$);
			\coordinate (lowreg2) at ($(redown1)+(0.5,0.5)$);
			\coordinate (lowreg3) at ($(speven1)+(0.5,-0.5)$);
			\coordinate (lowreg4) at ($(spodd2)+(-0.5,-0.5)$);

			% Dotted quadrilateral for lower region
			\draw[dotted, rounded corners=20pt, thick]
			(lowreg1) -- (lowreg2) -- (lowreg3) -- (lowreg4) -- cycle;

			\draw[edge] (reup1) -- (reup2) -- (reup3) -- (reup1);

			\draw[edge] (redown1) -- (redown2) -- (redown3) -- (redown1);

			\draw[edge] (spodd2) -- (speven1) -- (speven2) -- (spodd2);

			\draw[edge] (spodd2) -- (speven1) -- (speven2) -- (spodd2);

			\draw[edge] (spodd2) -- (redown3) -- (speven1);

			\draw[edge] (redown2) -- (spodd2);
			\draw[edge] (redown1) -- (speven1);

			% Potential edges from leaves to points on dotted boundaries
			% From leaf1 to upper triangle boundary (on the right edge)

			\coordinate (leafpoint1) at ($(leaftri1)!0.15!(leaftri4)$);
			\coordinate (uppertriangle) at ($(uptri2)!0.2!(uptri3)$);
			\draw[potensialedge] (leafpoint1) -- (uppertriangle);

			% From leaf3 to lower region boundary (on the top edge)
			\coordinate (lowerregion) at ($(lowreg2)!0.3!(lowreg3)$);
			\draw[potensialedge] (vright) -- (lowerregion);

			% \coordinate (leafpoint3) at ($(leaftri2)!0.8!(leaftri3)$);
			\draw[potensialedge] (leaf3) .. controls +(4,-3.5) and +(3,0) .. (speven2);

			\node[right=3mm of leaf2, font=\large] (leaflabel) {Claw leaves $\leaves{c}{t}$};

			\node[above right =4mm and 1mm of vright,font=\large] (cprimeleaf) {$\vleafindep_{c^\prime}$};
			\draw[->, thick] (cprimeleaf.south west) -- (vright);

			% \coordinate (leafpoint2) at ($(leaftri2)!0.1!(leaftri3)$);
			% \draw[edged] (leafpoint2) -- (spodd);
			\coordinate (leafpoint2) at ($(leaftri2)!0.1!(leaftri3)$);
			\draw[edged] (leaf3) -- (spodd);

		\end{scope}

	\end{tikzpicture}
	\caption{
		Potential neighbors of $\cc{n=2}{t}{c}$.
		The blue-filled vertices belonging to $L$ form $\ccc{2}{c} = \{h_c, \vspo_{c}\}$, where $\vspo_{c} = h_{c^\prime}$, while the green-filled vertices belonging to $R$ form $\ccl{2}{t}{c} = \leaves{c}{t} \sqcup \{\vleafindep_{c^\prime}\}$.
		For the definitions of the symbols (circles and dotted edges etc), see the captions of Figures~\ref{fig:claw-neigbors-AB} and \ref{fig:claw-neigbors-ABbar}.
		When an edge connects to a cluster enclosed by a dotted line, it connects to one specific vertex within that cluster; the exact connection is determined by specifying the claw center and claw type $t \in \{A, B\}$ for $\leaves{c}{t}$, and determined by specifying only the claw center for other dotted circles.
	}
	\label{fig:CC-part-one-spodd-plusone-neigbor}
\end{figure}

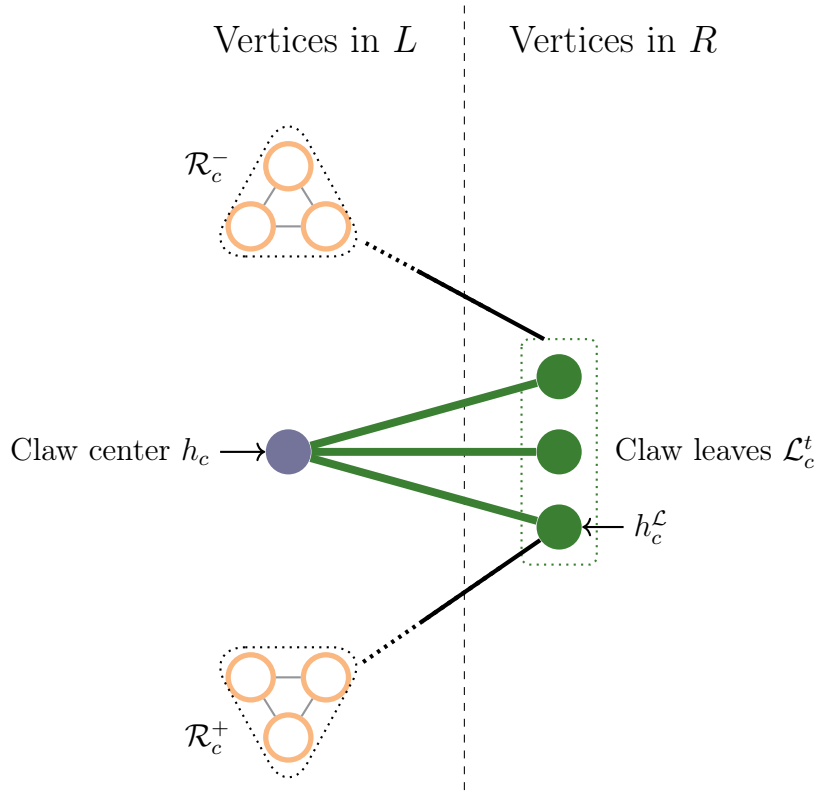
\begin{figure}[tbp]
	\centering
	\begin{tikzpicture}[
			triangular lattice small,
		]
		\pgfmathsetmacro{\offset}{0}

		\def\rightx{3.6}
		\def\righty{2.7}
		\def\rightyleaf{1}
		\def\delta{0.3}
		\def\deltay{0.8}
		\def\deltax{0.5}
		\def\belowoffsety{0.3}

		\begin{scope}[shift={(0,0)}]
			\draw[dashed] (0.65*\rightx,-4.5)--(0.65*\rightx,6);

			\node at (0.1*\rightx,5.5) {{\Large Vertices in $L$}};
			\node at (1.2*\rightx,5.5) {{\Large Vertices in $R$}};

			% claw center
			\placevertexwithoutlabel{0}{0}{center}{clawcenter}
			\node[left=6mm of center, font=\large] (centerlabel) {Claw center $h_c$};
			\draw[->, thick] (centerlabel.east) -- (center);

			% claw leaves
			\placevertexwithoutlabel{\rightx}{\rightyleaf}{leaf1}{clawleaves}
			\placevertexwithoutlabel{{\rightx}}{0}{leaf2}{clawleaves}
			\placevertexwithoutlabel{\rightx}{{-\rightyleaf}}{leaf3}{clawleaves}

			\node[right=5.5mm of leaf3, font=\large] (leaf3label) {$\vleafindep_{c}$};
			\draw[->, thick] (leaf3label.west) -- (leaf3);

			% rearrangeable clique - upper
			\placevertexwithoutlabel{\deltax}{\righty+\delta}{reup1}{rearrangeablevertex}
			\placevertexwithoutlabel{-\deltax}{\righty+\delta}{reup2}{rearrangeablevertex}
			\placevertexwithoutlabel{0}{\righty+\delta+\deltay}{reup3}{rearrangeablevertex}

			% rearrangeable clique - lower
			\placevertexwithoutlabel{\deltax}{-\righty-\delta}{redown1}{rearrangeablevertex}
			\placevertexwithoutlabel{-\deltax}{-\righty-\delta}{redown2}{rearrangeablevertex}
			\placevertexwithoutlabel{0}{-\righty-\delta-\deltay}{redown3}{rearrangeablevertex}

			\draw[clawedge] (center) -- (leaf1);
			\draw[clawedge] (center) -- (leaf2);
			\draw[clawedge] (center) -- (leaf3);

			% Draw triangular edges for rearrangeable clique
			\draw[edge] (reup1) -- (reup2) -- (reup3) -- (reup1);
			\draw[edge] (redown1) -- (redown2) -- (redown3) -- (redown1);

			% Define corner points for the leaves dotted rectangle
			\coordinate (leaftri1) at ($(leaf1)+(-0.5,0.5)$);
			\coordinate (leaftri2) at ($(leaf3)+(-0.5,-0.5)$);
			\coordinate (leaftri3) at ($(leaf3)+(0.5,-0.5)$);
			\coordinate (leaftri4) at ($(leaf1)+(0.5,0.5)$);

			% Dotted rectangle for claw leaves
			\draw[dotted, rounded corners=3pt, thick, draw=OliveGreen]
			(leaftri1) -- (leaftri2) -- (leaftri3) -- (leaftri4) -- cycle;

			% Label for claw leaves with arrow
			\node[right=3mm of leaf2, font=\large] (leaflabel) {Claw leaves $\leaves{c}{t}$};
			%\draw[->, thick] (leaflabel.west) -- ($(leaftri2)+(0.2,0)$);

			% Define corner points for upper rearrangeable triangle
			\coordinate (uptri1) at ($(reup2)+(-0.6,-0.4)$);
			\coordinate (uptri2) at ($(reup1)+(0.6,-0.4)$);
			\coordinate (uptri3) at ($(reup3)+(0,0.8)$);

			% Dotted triangle for upper rearrangeable clique
			\draw[dotted, rounded corners=15pt, thick]
			(uptri1) -- (uptri2) -- (uptri3) -- cycle;

			% Label for upper rearrangeable clique with arrow
			\node[left=3mm of reup3, font=\large] (upperlabel) {$\rearrangeable{c}{-}$};
			%\draw[->, thick] (upperlabel.east) -- ($(reup3)+(-0.3,0)$);

			% Define corner points for lower rearrangeable triangle
			\coordinate (downtri1) at ($(redown2)+(-0.6,0.4)$);
			\coordinate (downtri2) at ($(redown1)+(0.6,0.4)$);
			\coordinate (downtri3) at ($(redown3)+(0,-0.8)$);

			% Dotted triangle for lower rearrangeable clique
			\draw[dotted, rounded corners=15pt, thick]
			(downtri1) -- (downtri2) -- (downtri3) -- cycle;

			% Label for lower rearrangeable clique with arrow
			\node[left=3mm of redown3, font=\large] (lowerlabel) {$\rearrangeable{c}{+}$};
			%\draw[->, thick] (lowerlabel.east) -- ($(redown3)+(-0.3,0)$);

			% Potential edges from leaves to rearrangeable groups
			\coordinate (leafpoint1) at ($(leaftri1)!0.3!(leaftri4)$);
			\coordinate (uppertriangle) at ($(uptri2)!0.1!(uptri3)$);
			\drawpartdotted{leafpoint1}{uppertriangle}

			\coordinate (lowertriangle) at ($(downtri2)!0.1!(downtri3)$);
			% \coordinate (leafpoint3) at ($(leaftri2)!0.3!(leaftri3)$);
			% \drawpartdotted{leafpoint3}{lowertriangle}
			\drawpartdotted{leaf3}{lowertriangle}

		\end{scope}

	\end{tikzpicture}
	\caption{The simplest claw-cancellation (CC) part.
		The blue-filled circle in $L$ indicates the claw center $h_c$.
		The three green-filled circles represent the claw leaves from $h_c$, enclosed by a green dotted rectangle.
		The orange-bordered circles indicate the rearrangeable cliques.
		The solid-dotted lines from the leaves indicate two possibilities: either the leaf connects to one of the rearrangeable clique in the indicated clique, or the leaf has no connections except to its claw center.
		When an edge connects to a cluster enclosed by a dotted line, it connects to one specific vertex within that cluster; the exact connection is determined by specifying the claw center for $\rearrangeable{c}{\pm}$, and determined by specifying the claw center and claw type $t \in \{A, B\}$ for $\leaves{c}{t}$.
	}
	\label{fig:CC-part-simple}
\end{figure}

\begin{figure}[tbp]
	\centering
	\begin{tikzpicture}[
			triangular lattice small,
		]
		\pgfmathsetmacro{\offset}{0}

		\def\rightx{3.6}
		\def\righty{1.2}
		\def\rightyforleaves{0.9}
		\def\delta{1}
		\def\deltay{0.675}
		\def\deltax{0.4}
		\def\belowoffsety{0.3}

		\begin{scope}[shift={(0,0)}]
			\draw[dashed] (0.65*\rightx,-5.5)--(0.65*\rightx,4);

			% \node at (0.1*\rightx,3.5) {{\Large Vertices in $L$}};
			% \node at (1.2*\rightx,3.5) {{\Large Vertices in $R$}};

			% claw center
			\placevertexwithoutlabel{0}{0}{center}{clawcenter}
			\node[left=5.5mm of center, font=\large] (centerlabel) {Claw center $h_{c}$};
			\draw[->, thick] (centerlabel.east) -- (center);

			% rearrangeable clique - upper
			\placevertexwithoutlabel{\deltax}{\righty+\delta}{reup1}{rearrangeablevertex}
			\placevertexwithoutlabel{-\deltax}{\righty+\delta}{reup2}{rearrangeablevertex}
			\placevertexwithoutlabel{0}{\righty+\delta+\deltay}{reup3}{rearrangeablevertex}

			\node[left=3mm of reup3, font=\large] (upperlabel) {$\rearrangeable{c}{-}$};

			% claw leaves
			\placevertexwithoutlabel{\rightx}{\rightyforleaves}{leaf1}{clawleaves}
			\placevertexwithoutlabel{{\rightx}}{0}{leaf2}{clawleaves}
			\placevertexwithoutlabel{\rightx}{{-\rightyforleaves}}{leaf3}{clawleaves}

			\node[right=5.5mm of leaf3, font=\large] (leaf3label) {$\vleafindep_{c}$};
			\draw[->, thick] (leaf3label.west) -- (leaf3);

			% special odd
			\placevertexwithoutlabel{0}{-2*\righty}{spodd}{specialodd}
			\node[left=5.5mm of spodd, font=\large] (spoddlabel) {Special odd $\vspo_c = h_{c^\prime}$};
			\draw[->, thick] (spoddlabel.east) -- (spodd);

			% vertex at right
			\placevertexwithoutlabel{\rightx}{-3*\righty}{vright}{vertexgreen}
			\node[right=5.5mm of vright, font=\large] (cprimeleaf) {$\vleafindep_{c^\prime}$};
			\draw[->, thick] (cprimeleaf.west) -- (vright);

			% rearrangeable clique - lower
			\placevertexwithoutlabel{\deltax}{-3.*\righty-\delta+\belowoffsety}{redown1}{rearrangeablevertex}
			\placevertexwithoutlabel{-\deltax}{-3.*\righty-\delta+\belowoffsety}{redown2}{rearrangeablevertex}
			\placevertexwithoutlabel{0}{-3*\righty-\delta-\deltay+\belowoffsety}{redown3}{rearrangeablevertex}

			\node[left=3mm of redown3, font=\large] (lowerlabel) {$\rearrangeable{c^\prime}{+}$};

			% Draw edges
			\draw[clawedge] (center) -- (leaf1);
			\draw[clawedge] (center) -- (leaf2);
			\draw[clawedge] (center) -- (leaf3);

			\draw[edged] (spodd) -- (vright);

			% Define corner points for the leaves dotted rectangle
			\coordinate (leaftri1) at ($(leaf1)+(-0.5,0.5)$);
			\coordinate (leaftri2) at ($(leaf3)+(-0.5,-0.5)$);
			\coordinate (leaftri3) at ($(leaf3)+(0.5,-0.5)$);
			\coordinate (leaftri4) at ($(leaf1)+(0.5,0.5)$);

			% Dotted rectangle for claw leaves
			\draw[dotted, rounded corners=3pt, thick, draw=OliveGreen]
			(leaftri1) -- (leaftri2) -- (leaftri3) -- (leaftri4) -- cycle;

			% Label for claw leaves
			\node[right=3mm of leaf2, font=\large] (leaflabel) {Claw leaves $\leaves{c}{t}$};

			% Define corner points for upper triangle
			\coordinate (uptri1) at ($(reup2)+(-0.8,-0.5)$);
			\coordinate (uptri2) at ($(reup1)+(0.8,-0.5)$);
			\coordinate (uptri3) at ($(reup3)+(0,1)$);

			% Dotted triangle for upper rearrangeable clique
			\draw[dotted, rounded corners=20pt, thick]
			(uptri1) -- (uptri2) -- (uptri3) -- cycle;

			% Define corner points for lower triangle
			\coordinate (downtri1) at ($(redown2)+(-0.8, 0.5)$);
			\coordinate (downtri2) at ($(redown1)+(0.8,0.5)$);
			\coordinate (downtri3) at ($(redown3)+(0,-1)$);

			% Dotted triangle for lower rearrangeable clique
			\draw[dotted, rounded corners=20pt, thick]
			(downtri1) -- (downtri2) -- (downtri3) -- cycle;

			% Draw triangular edges
			\draw[edge] (reup1) -- (reup2) -- (reup3) -- (reup1);
			\draw[edge] (redown1) -- (redown2) -- (redown3) -- (redown1);

			% Potential edges from leaves to upper triangle
			\coordinate (leafpoint1) at ($(leaftri1)!0.3!(leaftri4)$);
			\coordinate (uppertriangle) at ($(uptri2)!0.2!(uptri3)$);
			\drawpartdotted{leafpoint1}{uppertriangle}

			% Potential edge from vright to lower triangle
			\coordinate (lowertriangle) at ($(downtri2)!0.3!(downtri3)$);
			\drawpartdotted{vright}{lowertriangle}

			% Edge from leaf to special odd
			% \coordinate (leafpoint2) at ($(leaftri2)!0.3!(leaftri3)$);
			%\draw[edged] (leafpoint2) -- (spodd);
			\draw[edged] (leaf3) -- (spodd);

		\end{scope}

	\end{tikzpicture}
	\caption{CC-part of extended claw $\cc{n=2}{t}{c}$.
		For the definitions of the symbols, please refer to Figures~\ref{fig:CC-part-one-spodd-plusone-neigbor} and~\ref{fig:CC-part-simple}.
		The CC-part $\cc{n}{t}{c}$ for general $n$ is obtained by starting from the usual claw $\cc{1}{t}{c}$ and successively adding the pairs $\{h_{c^{(l)}}, \vleafindep_{c^{(l)}}\}$ with $h_{c^{(l)}} = \vspo_{c^{(l-1)}}$ for $l=2,\ldots,n$.}
	\label{fig:CC-part-one-spodd-plusone}
\end{figure}

\subsubsection{Cancellation via claw-cancellation part}
\label{app:cc-part-cancellation}
We prove that if $\LopR$ includes a CC-part, then the term $h_{L} h_{R}$ on the right-hand side of~\eqref{eq:commutator-cancellation} is cancelled by other terms.
We prove the cancellation below for the case where the extended claw in the CC-part is $\cc{n}{t}{2j-1}$.
The case $\cc{n}{t}{\overline{2j-1}}$ follows similarly due to the symmetry of the frustration graph, so we leave this case to the reader.

Consider the case where $\cc{n}{t}{2j-1} \subset \LopR$ is a CC-part and $2j-1$ is an odd integer.
Then there exist four pairs $(L^{t, r}, R^{t, r}) \in \indepset{M}{k} \times \indepset{M}{l}$ for $t \in \{A, B\}$ and $r \in \{1, 2\}$, with $\LopR \in \{L^{t, r} \oplus R^{t, r}\}_{t \in \{A, B\}, r \in \{1, 2\}}$, defined by
\begin{align}
	L^{t,1} & \equiv (L \setminus \ccc{n}{2j-1}) \sqcup \ccc{n}{2j - 1},
	\\
	R^{t,1} & \equiv (R \setminus \ccl{n}{t}{2j-1}) \sqcup \ccl{n}{t}{2j - 1},
	\\
	L^{t,2} & \equiv (L \setminus \ccc{n}{2j-1}) \sqcup \ccl{n}{t}{\overline{2j - 1 + 4n}},
	\\
	R^{t,2} & \equiv (R \setminus \ccl{n}{t}{2j-1}) \sqcup \ccc{n}{\overline{2j - 1 + 4n}}.
\end{align}
The existence of these four extended claws is assured by the fact that extended claws are connected to rearrangeable cliques if any, and are not adjacent to their special odd and frozen even vertices.
For the case where the extended claw is connected to frozen even vertices or special odd vertices, see Appendix~\ref{app:exhausiveness}.

We next show the following identity:
\begin{align}
	\a{2j-1}
	\left(\prod_{l=1}^{n} \C{2j+4l-2}\right)
	=
	\left(\prod_{l=1}^{n} \C{2j+4l-4} \right)
	\a{2j-1+4n}
	\,,
	\label{eq:acc-cca}
\end{align}
which can be proved recursively using the relation $\a{2j-1}  \C{2j+2} =  \C{2j} \a{2j+3}$~\eqref{eq:ac-ca-relation}.

The contribution can be calculated as
\begin{align}
	  &
	\sum_{t \in \{A,B\}} \sum_{r = 1, 2} h_{L^{t, r}} h_{R^{t, r}}
	\nonumber \\
	= &
	\sum_{t \in \{A,B\}}
	h_{L \setminus \ccc{n}{2j-1}}
	\left(
	h_{\ccc{n}{2j-1}} h_{\ccl{n}{t}{2j-1}}
	+
	h_{\ccl{n}{t}{\overline{2j-1+4n}}} h_{\ccc{n}{\overline{2j-1+4n}}}
	\right)
	h_{R \setminus \ccl{n}{t}{2j-1}}
	\nonumber \\
	= &
	h_{L \setminus \ccc{n}{2j-1}}
	\left[
		\a{2j-1}
		h_{\ccc{n}{2j-1}}
		h_{\cclwtotype{n}{2j-1}}
		+
		h_{\cclwtotype{n}{\overline{2j-1+4n}}}
		h_{\ccc{n}{\overline{2j-1+4n}}}
		\a{2j-1+4n}
		% h_{\ccc{n}{2j-1}} h_{\cclwtotype{n}{2j-1}}
		% -
		% \a{2j-1+4n} h_{\ccc{n}{\overline{2j-1+4n}}} h_{\cclwtotype{n}{\overline{2j-1+4n}}}
		\right]
	h_{R \setminus \ccl{n}{t}{2j-1}}
	\nonumber \\
	= &
	(-1)^{n-1}
	h_{L \setminus \ccc{n}{2j-1}}
	\left[
		\a{2j-1}
		\left(\prod_{l=1}^{n} \C{2j+4l-2}\right)
		-
		\left(\prod_{l=1}^{n} \C{2j+4l-4} \right)
		\a{2j-1+4n}
		\right]
	h_{R \setminus \ccl{n}{t}{2j-1}}
	\nonumber \\
	= &
	0,
\end{align}
where in the second equality we have used the fact that $h_{\ccl{n}{t}{\overline{2j-1+4n}}}$ and $h_{\ccc{n}{\overline{2j-1+4n}}}$ anticommute because the number of edges between them is odd ($2n+1$), in the fourth equality, we used the relation $h_{\ccc{n}{2j-1}} h_{\cclwtotype{n}{2j-1}} = (-1)^{n-1}\prod_{l=1}^{n} \C{2j+4l-2}$ and $h_{\cclwtotype{n}{\overline{2j-1+4n}}} h_{\ccc{n}{\overline{2j-1+4n}}} = (-1)^{n} \prod_{l=1}^{n} \C{2j+4l-4}$,  and in the last equality we have used~\eqref{eq:acc-cca}.

\subsection{Exhaustiveness of cancellation mechanisms}
\label{app:exhausiveness}
We next prove that all cancellations in~\eqref{eq:commutator-cancellation} can be explained by BOE components or CC-parts.

What we should consider here is the case where an extended claw in an isolated connected component $\mathcal{O} \subset \LopR$ is connected to its frozen even vertices or special odd vertices.
Below we use the notation $\mathcal{O}_L \equiv \mathcal{O} \cap L$ and $\mathcal{O}_R \equiv \mathcal{O} \cap R$.
After all, such an extended claw does not have the ability to break the balance between $|\mathcal{O}_L|$ and $|\mathcal{O}_R|$.
Thus, we can conclude that if an isolated connected component in $\LopR$ has an imbalance between $|\mathcal{O}_L|$ and $|\mathcal{O}_R|$, then $\mathcal{O}$ must contain a CC-part, which leads to cancellation.
For the case where $\mathcal{O}$ has balanced $|\mathcal{O}_L|$ and $|\mathcal{O}_R|$: if $|{E}_{\mathcal{O}}|$ is odd, the cancellation occurs through BOE components; if $|{E}_{\mathcal{O}}|$ is even, we can attribute the cancellation to other isolated connected components.
Note that there exists at least one isolated connected component with an odd number of edges from the fact that $|{E}_{\LopR}|$ is odd.
This concludes the proof.

We next prove that an isolated connected component $\mathcal{O} \subset \LopR$ containing an extended claw connected to frozen even vertices or special odd vertices cannot break the balance between the number of vertices in $\mathcal{O}_L$ and $\mathcal{O}_R$.

We define an \emph{effective vertex} $\mathcal{E}$ as a subgraph in $\LopR$ satisfying: (i) $|\mathcal{E} \cap L| = |\mathcal{E} \cap R| - 1$, and (ii) the possible neighbors are at most two cliques in $L$.
The same definition applies with the roles of $L$ and $R$ interchanged.
An effective vertex behaves as if it were a single vertex in a path.
If an isolated connected component in $\LopR$ forms an odd path after treating all effective vertices as single vertices, the component is a BOE component and cancels.
An extended claw connected to its frozen even vertices or special odd vertex becomes such an effective vertex when it is not connected to any vertex other than a leaf of the extended claw.
Similarly, an extended claw connected to its frozen even vertices or special odd vertex becomes such an effective vertex when the special odd vertex is not connected to any vertices other than the vertex in the extended claw.
This concludes the proof of mutual commutativity.

%%%%%%%%%%%%%%%%%%
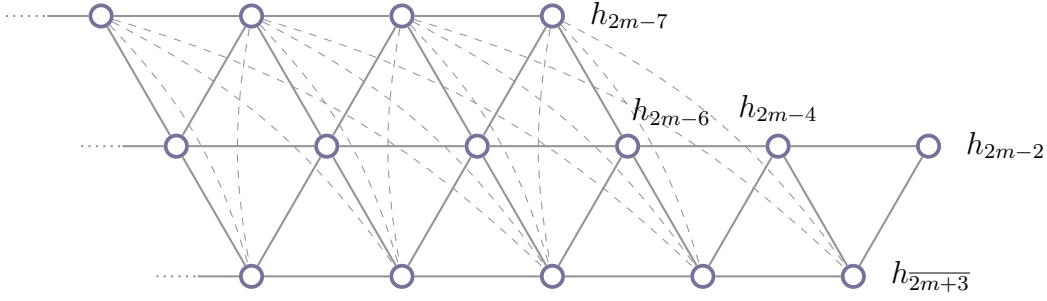
\begin{figure}[tbp]
	\centering
	\begin{tikzpicture}[triangular lattice small]
		% Define common y-coordinates
		\pgfmathsetmacro{\yrowone}{0}
		\pgfmathsetmacro{\yrowtwo}{-1.732}  % -sqrt(3)
		\pgfmathsetmacro{\yrowthree}{-3.464} % -2*sqrt(3)
		\pgfmathsetmacro{\offset}{-2}

		\def\captionx{6.5}
		\def\captiony{-4.425}
		\def\shifty{-6.5}
		\begin{scope}[shift={(0,-12)}]
			\foreach \i in {1,...,4} {
					\pgfmathsetmacro{\v}{int(2*\i-1)}
					\node[vs] (v\v) at (\v, \yrowone) {};
				}

			\foreach \i in {2,...,6} {
					\pgfmathsetmacro{\v}{int(2*\i-1)}
					\node[vs] (bar\v) at (\v, \yrowthree) {};
				}

			% Second row vertices (even numbers)
			\foreach \i in {1,...,6} {
					\pgfmathsetmacro{\v}{int(2*\i)}
					\node[vs] (v\v) at (\v, \yrowtwo) {};
				}

			% Horizontal edges
			\foreach \i in {1,...,6} {
					\pgfmathsetmacro{\inext}{int(\i+2)}
					\draw[edge] (v\i) -- (v\inext);
				}
			\draw[edge] (v8) -- (v10) -- (v12);

			\foreach \i in {1,...,7} {
					\pgfmathsetmacro{\inext}{int(\i+1)}
					\draw[edge] (v\i) -- (v\inext);
				}

			% Diagonal edges
			\foreach \i in {1,...,5} {
					\pgfmathsetmacro{\v}{int(2*\i-1)}
					\pgfmathsetmacro{\vnext}{int(2*\i+1)}
					\pgfmathsetmacro{\even}{int(2*\i)}
					\draw[edge] (v\even) -- (bar\vnext);
				}

			\foreach \i in {2,...,6} {
					\pgfmathsetmacro{\v}{int(2*\i-1)}
					\pgfmathsetmacro{\vnext}{int(2*\i+1)}
					\pgfmathsetmacro{\even}{int(2*\i)}
					\draw[edge] (v\even) -- (bar\v);
				}

			\foreach \i in {2,...,5} {
					\pgfmathsetmacro{\v}{int(2*\i-1)}
					\pgfmathsetmacro{\vnext}{int(2*\i+1)}
					\pgfmathsetmacro{\even}{int(2*\i)}
					\draw[edge] (bar\v) -- (bar\vnext);
				}

			% Curved edges
			\foreach \i in {2,...,4} {
					\pgfmathsetmacro{\v}{int(2*\i-1)}
					\draw[edgefar, curved_right] (v\v) to (bar\v);
				}

			\foreach \i in {1,...,4} {
					\pgfmathsetmacro{\v}{int(2*\i-1)}
					\pgfmathsetmacro{\vnext}{int(2*\i+1)}
					\draw[edgefar] (v\v) to (bar\vnext);
				}

			\foreach \i in {1,...,4} {
					\pgfmathsetmacro{\v}{int(2*\i-1)}
					\pgfmathsetmacro{\vnext}{int(2*\i+3)}
					\draw[edgefar] (v\v) to (bar\vnext);
				}

			\foreach \i in {1,...,3} {
					\pgfmathsetmacro{\v}{int(2*\i-1)}
					\pgfmathsetmacro{\vnext}{int(2*\i+5)}
					\draw[edgefar] (v\v) to (bar\vnext);
				}

			% Left extensions
			\foreach \v in {v1, v2, bar3}{
					\coordinate[left=5.5mm of \v] (l\v);
					\draw[edge] (l\v) -- (\v);
					\coordinate[left=5.5mm of l\v] (ll\v);
					\draw[edge, dotted] (ll\v) -- (l\v);
				}

			% Labels
			\node[right=2mm of v7, font=\large] {$h_{2m-7}$};
			\node[above right=0mm and -2mm of v8, font=\large] {$h_{2m-6}$};
			\node[above = 0mm of v10, font=\large] {$h_{2m-4}$};
			\node[right=2mm of bar11, font=\large] {$\hbar{2m+3}$};
			\node[right=2mm of v12, font=\large] {$h_{2m-2}$};

		\end{scope}
	\end{tikzpicture}
	\caption{
		Free fermionic frustration graph $G_{2m-4}''$.
	}
	\label{fig:frustrationgraph-for-even-Q-recursion}
\end{figure}
%%%%%%%%%%%%%%%%%%
\section{Proof of recursion for charges\label{app:charge-recursion}}
In this appendix, we prove the recursions for the charges~\eqref{eq:m2-charge-recursion-even} and~\eqref{eq:m2-charge-recursion-odd}.

We first prove~\eqref{eq:m2-charge-recursion-even}.
We consider $G = G_{2m}$ and the clique $K = \{h_{2m}, \hbar{2m+1}\}$.
Using~\eqref{eq:general-Q-recursion}, we have
\begin{align}
	\Q{G_{2m}}{k}
	 & =
	\Q{2m-1}{k}
	+
	h_{2m} \Q{2m-3}{k-1}
	+
	\hbar{2m+1} \Q{G_{2m-4}''}{k-1},
	\label{eq:even-Q-rec-app}
\end{align}
where in the third term on the RHS, we define $G_{2m-4}'' \equiv G_{2m} \setminus \Gamma[\hbar{2m+1}]$ and use the fact that
\begin{align}
	G_{2m} \setminus K              & = G_{2m-1},
	\\
	G_{2m} \setminus \Gamma[h_{2m}] & = G_{2m-3}.
\end{align}
Note that $G_{2m-4}''$ is a free fermionic frustration graph, where we can confirm that there is no contradiction with the relation~\eqref{eq:ac-ca-relation}.

The last term on the RHS of~\eqref{eq:even-Q-rec-app} requires more careful analysis.
We analyze the structure of $G_{2m-4}''$ in Figure~\ref{fig:frustrationgraph-for-even-Q-recursion}.
Applying~\eqref{eq:general-Q-recursion} with $G = G_{2m-4}''$ and $K = \{h_{2m-2}\}$, we have
\begin{align}
	\Q{G_{2m-4}''}{k-1}
	=
	\Q{G_{2m-4}'}{k-1}
	+
	h_{2m-2}
	\Q{2m-6}{k-2},
	\label{eq:recursion-from-q2mprime-4}
\end{align}
where we use the facts that $G_{2m-4}'' \setminus \{h_{2m-2}\} = G_{2m-4}'$ and $G_{2m-4}'' \setminus \Gamma[h_{2m-2}] = G_{2m-6}$.

Next, we consider the recursion~\eqref{eq:general-Q-recursion} for $G = G_{2m-4}$ and $K = \{h_{2m-5}\}$:
\begin{align}
	\Q{2m-4}{k-1}
	 & =
	\Q{G_{2m-4}'}{k-1}
	+
	h_{2m-5} \Q{2m-8}{k-2},
	\label{eq:recursion-from-q2m-4}
\end{align}
where we use the facts that $G_{2m-4} \setminus \Gamma[h_{2m-5}] = G_{2m-8}$ and $G_{2m-4} \setminus \{h_{2m-5}\} = G_{2m-4}'$.

Combining~\eqref{eq:recursion-from-q2m-4} and~\eqref{eq:recursion-from-q2mprime-4}, we obtain
\begin{align}
	\Q{G_{2m-4}''}{k-1}
	=
	\Q{2m-4}{k-1}
	+
	h_{2m-2} \Q{2m-6}{k-2}
	-
	h_{2m-5} \Q{2m-8}{k-2}.
	\label{eq:recursion-from-q2mprime-4-2}
\end{align}
Substituting~\eqref{eq:recursion-from-q2mprime-4-2} into~\eqref{eq:even-Q-rec-app}, we obtain the desired recursion~\eqref{eq:m2-charge-recursion-even}.

The proof of the charge recursion~\eqref{eq:m2-charge-recursion-odd} follows similarly but more simply.

\section{Proof of recursion for polynomial\label{app:polyproof}}
In this appendix, we prove Theorem~\ref{thm:poly-recursion}: $T_G(u) T_G(-u)$ is proportional to the identity operator multiplied by the polynomial $P_G(u^2)$~\eqref{eq:poly}, and derive the recursion relations for the polynomial~\eqref{eq:m2-poly-recursion-even} and~\eqref{eq:m2-poly-recursion-odd}.

We first introduce the notation of the transfer matrix for the pseudo-charges~\eqref{eq:pseudo-charges} of a general frustration graph $G$:
\begin{align}
	\label{eq:pseudo-transfer-mat}
	T_{G}(u) \equiv \sum_{k=0}^{\alpha_{G}} (-u)^k \Q{G}{k},
\end{align}
where $u$ is the spectral parameter and $\alpha_{G}$ is the \emph{independence number}, which is the order of the largest independent set in $G$.
Here, an independent set is a subset of vertices with no edges between them.
For $G = G_{M}$, the independence number is $\alpha_{G_{M}} = S_M$.
We then introduce the following quantity for free fermionic frustration graphs $G$ (Figure~\ref{fig:free-fermionic-frustration-graph}):
\begin{align}
	P_{G} (u^2) = T_{G}(u)T_{G}(-u)
	\,.
\end{align}

We first define some terminology regarding even holes in the frustration graph.
These definitions are referred to in~\cite{unified-graph-th}.
We denote by $\mathcal{C}^{(\text{even})}_{G}$ the set of all even holes in $G$.
Two even holes $C$ and $C'$ are said to be \emph{compatible} if they share no vertices and have no edges between them, i.e., if the induced subgraph $G[C \cup C']$ is disconnected with components $G[C]$ and $G[C']$.
Let $\mathscr{C}^{(\text{even})}_{G}$ denote the collection of all subsets of $\mathcal{C}^{(\text{even})}_{G}$ whose elements are pairwise compatible.
Each element $\mathcal{X} \in \mathscr{C}^{(\text{even})}_{G}$ represents a set of mutually compatible even holes.
For such a subset $\mathcal{X}$, we define $\partial \mathcal{X} = \bigcup_{C \in \mathcal{X}} C$ as the union of all vertices contained in the even holes of $\mathcal{X}$.
The cardinality $|\mathcal{X}|$ counts the number of even holes in $\mathcal{X}$, while $|\partial \mathcal{X}|$ denotes the total number of vertices across all holes in $\mathcal{X}$.

Consider $L, R \in \indepsetwhole{G}$, and even holes $C, C' \in L \oplus R$.
Here, $C$ and $C'$ must be compatible, because otherwise it would contradict the condition that $L$ and $R$ are independent sets, as is clear from the structure of the even hole in Figure~\ref{fig:even-hole}.

From the mutual commutativity of the charges in Theorem~\ref{thm:mutual-commutativity}, we have
\begin{align}
	P_{G}(u^2)
	 & =
	\sum_{\substack{s,t = 0                    \\ s+t = 0 (\bmod 2)}}^{\alpha(G)} (-1)^{s}u^{s+t} \Q{G}{s} \Q{G}{t}
	\nonumber                                  \\
	 & =
	\sum_{\substack{S, T \in \indepsetwhole{G} \\  \abs{S} + \abs{T} = 0 (\bmod 2)}} (-1)^{\abs{S}} u^{\abs{S} + \abs{T}} \qty(h_{S \cap T})^2 h_{S \setminus T} h_{T \setminus S}
	\,.
\end{align}
Note the following relation:
\begin{align}
	h_{S \setminus T} h_{T \setminus S}
	=
	(-1)^{\abs{E_{S \oplus T}}}
	h_{T \setminus S} h_{S \setminus T}
	\,.
\end{align}
Then we have
\begin{align}
	P_{G}(u^2)
	=
	\sum_{\substack{S, T \in \indepsetwhole{G} \\  \abs{S} + \abs{T} = 0 (\bmod 2) \\ E_{S\oplus T} = 0 (\bmod 2)}} (-1)^{\abs{S}} u^{\abs{S} + \abs{T}} \qty(h_{S \cap T})^2 h_{S \setminus T} h_{T \setminus S}
	\,.
\end{align}
Using the same argument as in Lemma 11 of~\cite{unified-graph-th} and the proof in Appendix~\ref{app:mutual-commutativity}, we can prove
\begin{align}
	P_{G}(u^2)
	 & = \sum_{\substack{S, T \in \indepsetwhole{G} \\ S \oplus T = \partial \mathcal{X} \\  \mathcal{X} \in \mathscr{C}_G^{\mathrm{even}}}} (-u^2)^{\abs{S}} h_{S} h_{T}
	\nonumber                                       \\
	 & =
	\sum_{\substack{S, T \in \indepsetwhole{G}      \\ S \oplus T = \partial \mathcal{X} \\  \mathcal{X} \in \mathscr{C}_G^{\mathrm{even}}}} (-u^2)^{\abs{S}} \qty(\prod_{\boldsymbol{j} \in S \cap T} b_{\boldsymbol{j}}^2)  \qty(\prod_{C \in \mathcal{X}} \mu_{C})
	\,,
\end{align}
where $\mu_{C} \equiv \mu_{2m-1}$ when the even hole is $C = C_{2m-1} \equiv \{h_{2m-3}, h_{2m}, h_{2m-2}, h_{\overline{2m+1}}\}$.
Note that $\mu_{\mathcal{X}}$, which is the product of the generators in the even hole, is a scalar as explained in~\eqref{eq:mu-scalar}.
From this, we see that $P_G(u^2)$ is a scalar polynomial in $u^2$.
This concludes the proof of~\eqref{eq:poly}.

Then, we obtain the following recursion relation:
\begin{align}
	P_G(u^2) & = P_{G \setminus K}(u^2) + \sum_{\boldsymbol{j} \in K } (-u^2) b_{\boldsymbol{j}}^2  P_{G \setminus \Gamma[\boldsymbol{j}]}(u^2)
	+
	2
	\sum_{\substack{C \in \mathcal{C}^{(\mathrm{even})}_{G}                                                                                     \\ C \cap K \neq \emptyset}} u^{4} \mu_{C} P_{G \setminus \Gamma[C]}(u^2)
	\,,
	\label{eq:general-poly-recursion}
\end{align}
where $G$ is a free fermionic frustration graph and $K$ is a clique in $G$.
The clique $K$ must be chosen such that $G \setminus \Gamma[\boldsymbol{j}]$ for all $\boldsymbol{j} \in K$ and $G \setminus \Gamma[C]$ for all $C \in \mathcal{C}^{(\mathrm{even})}_{G}$ with $C \cap K \neq \emptyset$ are free fermionic frustration graphs.
We define $\Gamma[C] \equiv \bigcup_{\boldsymbol{j} \in C} \Gamma[\boldsymbol{j}]$.
The factor of 2 in the third term on the r.h.s.\ arises from the combination of pairs $(S, T)$ that produce the even hole $C$: if $S, T \in \indepsetwhole{G}$ and $C \subset S \oplus T$ for some even hole $C$, then there exists another configuration $(S', T')$ with $S', T' \in \indepsetwhole{G}$, where $S' \equiv S \oplus C$ and $T' \equiv T \oplus C$, such that $S' \oplus T' = S \oplus T$.
The factor of 2 accounts for this pair.

For $G = G_{2m-1}$ and $K = \{h_{2m-1}\}$, we can apply~\eqref{eq:general-poly-recursion} and have
\begin{align}
	P_{2m-1}(u^2)
	 & =
	P_{2m-2}(u^2)
	-
	u^2 b_{2m-1}^2 P_{2m-4}(u^2)
	\,,
	\label{eq:app-poly-recursion-odd}
\end{align}
where we used the fact that $G_{2m-1} \setminus \Gamma(h_{2m-1}) = G_{2m-4}$.
There are no even hole related terms for this case.
This concludes the proof of~\eqref{eq:m2-poly-recursion-odd}.

For $G = G_{2m}$ and $K = \{h_{2m}, \hbar{2m+1}\}$, we can apply~\eqref{eq:general-poly-recursion} and have
\begin{align}
	P_{2m}(u^2)
	 & =
	P_{2m-1}(u^2)
	-
	u^2 b_{2m}^2 P_{2m-3}(u^2)
	-
	u^2 \bbar{2m+1}^2 P_{G_{2m-4}''}(u^2)
	+
	2 u^4 \mu_{2m-1} P_{2m-6}(u^2)
	\,,
	\label{eq:poly-even-expand}
\end{align}
where the first to third terms can be shown in the same way as in~\eqref{eq:even-Q-rec-app}, and for the last term we used $G_{2m} \setminus \Gamma[\mathcal{X}_{2m-1}] = G_{2m-6}$.
In the same way as~\eqref{eq:recursion-from-q2mprime-4-2}, we can prove the following:
\begin{align}
	P_{G_{2m-4}''}(u^2)
	=
	P_{2m-4}(u^2)
	-
	u^2	b_{2m-2}^2 P_{2m-6}(u^2)
	+
	u^2	b_{2m-5}^2 P_{2m-8}(u^2)
	\,.
	\label{eq:poly-recursion-from-q2mprime-4-2}
\end{align}
Here again, there are no even hole related terms.
Also note that the signs of the second and third terms are different (flipped) from those in~\eqref{eq:recursion-from-q2mprime-4-2}.

Applying~\eqref{eq:app-poly-recursion-odd} to the first and second terms of~\eqref{eq:poly-even-expand} and substituting~\eqref{eq:poly-recursion-from-q2mprime-4-2} into the third term of~\eqref{eq:poly-even-expand}, we have
\begin{align}
	P_{2m}(u^2)
	 & =
	\qty(P_{2m-2}(u^2)	- u^2 b_{2m-1}^2 P_{2m-4}(u^2))
	-
	u^2 b_{2m}^2 \qty(P_{2m-4}(u^2)	- u^2 b_{2m-3}^2 P_{2m-6}(u^2))
	\nonumber       \\
	 & \hspace{3em}
	-
	u^2 \bbar{2m+1}^2 \qty[
		P_{2m-4}(u^2)
		-
		u^2	b_{2m-2}^2 P_{2m-6}(u^2)
		+
		u^2	b_{2m-5}^2 P_{2m-8}(u^2)
	]
	\nonumber       \\
	 & \hspace{4em}
	\pm
	2 u^4 b_{2m-3} b_{2m} b_{2m-2} \bbar{2m+1} P_{2m-6}(u^2)
	\nonumber       \\
	 & =
	P_{2m-2}(u^2)
	-
	u^2 \qty(b_{2m-1}^2 + b_{2m}^2 + \bbar{2m+1}^2) P_{2m-4}(u^2)
	\nonumber       \\
	 & \hspace{4em}
	+
	u^4 \qty(b_{2m-3} b_{2m} \pm b_{2m-2} \bbar{2m+1})^2 P_{2m-6}(u^2)
	-
	u^4	\bbar{2m+1}^2 b_{2m-5}^2 P_{2m-8}(u^2)
	\nonumber       \\
	 & =
	P_{2m-2}(u^2)
	-
	u^2 \s{2m}^2 P_{2m-4}(u^2)
	+
	u^4 \a{2m-1}^2 P_{2m-6}(u^2)
	-
	u^4 \C{2m-2}^2 P_{2m-8}(u^2)
	\,,
	\nonumber
\end{align}
which concludes the proof of~\eqref{eq:m2-poly-recursion-even}.

\section{Proof of the independence number \label{app:proof-Sk}}
In this appendix, we prove that the independence number of the frustration graph is $S_M = \floor{(M+2)/3}$.
Note that $S_M$ is also the degree of the polynomial $P_M(x)$~\eqref{eq:poly}.
For this, it is sufficient to prove that the degree of the polynomial~\eqref{eq:poly} is $S_M = \floor{(M+2)/3}$.

From the recursions~\eqref{eq:m2-poly-recursion-even} and~\eqref{eq:m2-poly-recursion-odd}, we have the recursion for the degrees:
\begin{align}
	S_{2m}   & = \max(S_{2m-2}, S_{2m-4}+1, S_{2m-6}+2, S_{2m-8}+2)
	\label{eq:even-deg}
	\,,
	\\
	S_{2m-1} & = \max(S_{2m-2}, S_{2m-4}+1)
	\label{eq:odd-deg}
	\,.
\end{align}
For~\eqref{eq:even-deg}, we can further simplify as
\begin{align}
	S_{2m}
	 & = \max( \max(S_{2m-2}, S_{2m-4}+1), 1+\max(S_{2m-4}, S_{2m-6}+1))
	\nonumber                                                            \\
	 & =
	\max( S_{2m-1}, S_{2m-3} + 1)
	\,,
\end{align}
where we have used the trivial identity $S_{2m-6} \ge S_{2m-8}$ in the first equality.
Thus, we have the recursion for the degree which does not depend on whether $M$ is even or odd:
\begin{align}
	S_{M} = \max(S_{M-1}, S_{M-3}+1)
	\,,
\end{align}
where the initial conditions are $S_{1} = S_{2} = S_{3} = 1$.

This is the same recursion as in the original FFD case~\cite{fendley-fermions-in-disguise}.
Thus, we can conclude that the independence number here is the same as that of the original FFD case: $S_M = \floor{(M+2)/3}$.
This is the desired result.

\section{Proof of the free fermionic relations}
Here we provide the proofs that are essential for the construction of the fermionic modes in Section \ref{sec:fermion}.
We only show these statements for the right-end simplicial mode that was defined in Subsection \ref{subsec:rightedge}.
We omit the treatment of the left-end simplicial mode in Subsection \ref{subsec:leftedge}, as it is possible to arrive at the same relations after applying similar ideas akin to those presented here to the initial steps of the recursion for the transfer matrix instead.
In Subsection \ref{subsec:mainfermion} we prove the crucial formula \eqref{eq:mainfermion} and in Subsection \ref{subsec:CAR} the canonical anticommutation of the fermions.

\subsection{The commutation relation \label{subsec:mainfermion} }

We would like to prove the master relation
\begin{equation}\label{eq:master}
	T_{M}(u)\left(o_{0}+uo_{1}\right)=\left(o_{0}-uo_{1}\right)T_{M}(u)
\end{equation}
where $o_{0}=\chi_{M},\ o_{1}=\s{M}\chi_{M}$ according to \eqref{eq:krylov}.
This is equivalent to the modified inversion relation \eqref{eq:mainfermion} after multiplying it with $T_M(-u)$ and using the inversion relation \eqref{eq:poly}.
As explained in Section \ref{sec:fermion} this leads to \eqref{eq:eigen} in a straightforward way.
After reordering, we set out to prove
\begin{equation}
	[o_{0},T_{M}(u)]=u\{o_{1},T_{M}(u)\}.\label{eq:commacomm}
\end{equation}

We first consider the odd system size $M=2m-1$.
In this case $o_{0}=\chi_{2m-1},\ o_{1}=h_{2m-1}\chi_{2m-1}$, and using the recursion \eqref{eq:m2-transfermat-recursion-odd} and that $[\chi_{2m-1},T_{2m-k}(u)]=0$ for $k>1$ we have
\begin{equation}\label{eq:prop}
	[o_{0},T_{2m-1}(u)]=2uo_{1}T_{2m-4}(u)
\end{equation}
on the l.h.s. of \eqref{eq:commacomm}.
On the r.h.s. we have
\begin{equation}\label{eq:commTodd}
	\{o_{1},T_{2m-1}(u)\}=\{o_{1},T_{2m-2}(u)\}=2o_{1}T_{2m-4}(u)
\end{equation}
where we first used the recursion~\eqref{eq:m2-transfermat-recursion-odd} for odd system sizes and $\{h_{2m-1}\chi_{2m-1},h_{2m-1}\}=0$; then we used the recursion~\eqref{eq:m2-transfermat-recursion-even} for even system sizes to expand $T_{2m-2}(u)$ and checked that all terms except the first one vanish.

To see this latter fact let us write \eqref{eq:m2-transfermat-recursion-even} as
\begin{equation}
	T_{2m}(u)=T_{2m-2}(u)-u\s{2m}T_{2m-4}(u)+u^{2}\a{2m-1}T_{2m-6}(u)+u^{2}\C{2m-2}T_{2m-8}(u)\label{eq:evenTrec}
\end{equation}
where we substituted $T_{2m-1}$.
Then the statement boils down to showing
\begin{equation}
	\{o_{1},\s{2m-2}\}=\{o_{1},\a{2m-3}\}=\{o_{1},\C{2m-4}\}=0
\end{equation}
with $o_{1}=h_{2m-1}\chi_{2m-1}$.

For even system sizes $M=2m$ our task is a bit more involved.
Now
$o_{0}=\chi_{2m}$ while $o_{1}=\s{2m}\chi_{2m}=(h_{2m}+h_{2m-1}+h_{\overline{2m+1}})\chi_{2m}$.

First, we show that \eqref{eq:evenTrec} simplifies to \eqref{eq:shortrec}.
The prefactors in \eqref{eq:evenTrec} satisfy
\begin{align}
	[\s{2m}, \a{2 m-1}] = \lbrace \s{2 m} , \C{2 m-2} \rbrace = \lbrace \s{2 m}, \a{2 m - 3} \rbrace = \lbrace \s{2 m} , \C{2 m -4} \rbrace = [\s{2 m} , \a{2 m -5 }] = [\s{2 m} ,\C{2 m - 6} ] = 0\,.
\end{align}
Using these and their definitions in~\eqref{eq:def-a} and~\eqref{eq:def-c} the commutator and anticommutator of $T_{2 m -2}(u)$ and $T_{2 m-4}(u)$ with $\s{2 m}$ leads to
\begin{align}
	\lbrace \s{2 m} , T_{2 m -2 }(u) \rbrace & = \lbrace  \s{2 m} , T_{2 m - 4}(u) \rbrace - 2 u \a{2 m - 1} T_{2 m - 6}(u), \label{eq:STacomm} \\
	[\s{2 m}, T_{2 m -4}(u)]                 & = 2 u \C{2 m - 2} T_{2 m-8}(u). \label{eq:STcomm}
\end{align}
Replacing the last two terms on the r.h.s. of \eqref{eq:evenTrec} by \eqref{eq:STacomm} and \eqref{eq:STcomm} respectively, leads to
\begin{equation}
	T_{2 m}(u) = T_{2 m -2}(u) - \frac{u}{2} \lbrace \s{2 m}, T_{2 m -2}(u)\rbrace,
\end{equation}
after collecting the terms.
Then, proving~\eqref{eq:commacomm} for $M=2m$ requires taking the commutator of both sides of~\eqref{eq:shortrec} with $\chi_{2m}$.
Since $[\chi_{2 m}, T_{2 m - k}] = 0$ for $k>1$ one can bring this commutator inside the anticommutator on the r.h.s. of \eqref{eq:shortrec}, giving $[ \chi_{2 m} , \s{2 m}] = - 2 \s{2 m} \chi_{2 m} = -2 o_1$.
That is, we arrive at
\begin{equation}
	[o_0, T_{2 m}(u)] = u \{ o_1, T_{2 m -2}(u) \} = u \{o_1,T_{2m}(u)\},
\end{equation}
where it is a remaining task to prove the last equality.
After applying $\{o_1, \cdot \}$ on \eqref{eq:shortrec} one can indeed see the cancellation of the last term on the r.h.s. as
\begin{align}
	\{o_1, \{\s{2 m} , T_{2m-2}(u)\} \} = -\s{2 m}  \{\s{2 m}, T_{2m-2}(u)\} \chi_M +  \{\s{2 m}, T_{2m-2}(u)\} \s{2 m} \chi_M \nonumber \\
	=   -  \left(\mathbb{S}^2_{2 m} T_{2m-2}(u) +   \s{2 m} T_{2m-2}(u) \s{2 m} \right) \chi_M +  \left( \s{2 m} T_{2m-2}(u) \s{2 m} +  T_{2m-2}(u) \mathbb{S}^2_{2 m}  \right) \chi_M
\end{align}
where we anticommuted and commuted $\chi_M$ through $\s{2m}$ and $T_{2m-2}(u)$, respectively, and used $\mathbb{S}^2_{2m} \propto \mathbb{1}$.

\subsection{\label{subsec:CAR}The canonical anticommutation relations}

For later convenience we introduce $\varphi_{M}(u)\equiv T_{M}(-u)\chi_{M}T_{M}(u)$.
We will study the anticommutator
\begin{equation}
	\{\varphi_{M}(u),\chi_{M}\}=\bar{T}_{M}(-u)T_{M}(u)+T_{M}(-u)\bar{T}_{M}(u)\label{eq:phichiacomm}
\end{equation}
with $\bar{T}_{M}(u)\equiv\chi_{M}T_{M}(u)\chi_{M}$, and show that it is proportional to identity.
We consider the odd and even cases together.
It is possible to rephrase each of the corresponding recursions \eqref{eq:m2-transfermat-recursion-odd} and \eqref{eq:shortrec} as
\begin{align}
	T_{M}(u)       & =T_{M-r_{M}}(u)-\frac{1}{2}[\chi_{M},T_{M}(u)]\chi_{M}\label{eq:Tplus}   \\
	\bar{T}_{M}(u) & =T_{M-r_{M}}(u)+\frac{1}{2}[\chi_{M},T_{M}(u)]\chi_{M}.\label{eq:Tminus}
\end{align}
where $r_{M}=1$ for $M$ odd and $r_{M}=2$ for $M$ even.
Then, after substituting \eqref{eq:Tplus} and \eqref{eq:Tminus} into \eqref{eq:phichiacomm} our anticommutator looks like
\begin{align*}
	\{\varphi_{M}(u),\chi_{M}\} & =2P_{M-r_{M}}(u^{2})\mathbb{1}+\frac{1}{2}[\chi_{M},T_{M}(-u)][\chi_{M},T_{M}(u)]=                                             \\
	                            & =2P_{M-r_{M}}(u^{2})\mathbb{1}+\frac{1}{2}\left(\bar{T}_{M}(-u)T_{M}(u)+T_{M}(-u)\bar{T}_{M}(u)-2P_{M}(u^{2})\mathbb{1}\right)
\end{align*}
where the cross-terms dropped out from the products in \eqref{eq:phichiacomm} explicitly.
Recognizing \eqref{eq:phichiacomm} and reordering leads to
\begin{equation}\label{eq:numeratorpoly}
	\{\varphi_{M}(u),\chi_{M}\}=2\left(2P_{M-r_{M}}(u^{2})-P_{M}(u^{2})\right)\mathbb{1}.
\end{equation}

One can then derive the exchange relation of the fermions and the transfer matrix that reads
\begin{equation}
	\left(1-v/u_{k}\right)\varphi_{M}(u_{k})T_{M}(v)=\left(1+v/u_{k}\right)T_{M}(v)\varphi_{M}(u_{k}).\label{eq:tmpsiflip}
\end{equation}
For this we have to start out as
\begin{align}
	\varphi_{M}(u_{k})T_{M}(v) & =T_{M}(-u_{k})o_{0}T_{M}(v)T_{M}(u_{k})=                                                                         \\
	                           & =T_{M}(-u_{k})\left(T_{M}(v)o_{0}+vT_{M}(v)o_{1}+vo_{1}T_{M}(v)\right)T_{M}(u_{k})=\nonumber                     \\
	                           & =T_{M}(v)\varphi_{M}(u_{k})+vu_{k}^{-1}T_{M}(v)\varphi_{M}(u_{k})+vu_{k}^{-1}\varphi_{M}(u_{k})T_{M}(v)\nonumber
\end{align}
where we used \eqref{eq:master}, the commutation of transfer matrices $ [T_M(u),T_M(v)]=0$ and that
\begin{equation}
	T_{M}(-u_{k})o_{1}T_{M}(u_{k})=\frac{1}{2}[H,\varphi_{M}(u_{k})]=u_{k}^{-1}\varphi_{M}(u_{k}).
\end{equation}
After combining \eqref{eq:numeratorpoly} and \eqref{eq:tmpsiflip} we arrive at
\begin{align}
	\{\varphi_{M}(u_{k}),\varphi_{M}(v)\} & =\varphi_{M}(u_{k})T_{M}(-v)\chi_{M}T_{M}(v)+T_{M}(-v)\chi_{M}T_{M}(v)\varphi_{M}(u_{k})\nonumber \\
	                                      & =\frac{1-v/u_{k}}{1+v/u_{k}}T_{M}(-v)\{\varphi_{M}(u_{k}),\chi_{M}\}T_{M}(v)=\nonumber            \\
	                                      & =4P_{M-r_{M}}(u_{k}^{2})\frac{1-v/u_{k}}{1+v/u_{k}}P_{M}(v^{2})\mathbb{1}.\label{eq:fermionlim}
\end{align}
Due to the $P_{M}(v^{2})$ factor this expression vanishes for any
$v=u_{j},\quad j\ne-k$, while for $v=u_{-k}=-u_{k}$ one has to take
the limit of the r.h.s. in \eqref{eq:fermionlim}.
\begin{equation}
	\lim_{v\to-u_{k}}\{\varphi_{M}(u_{k}),\varphi_{M}(v)\}=16P_{M-r_{M}}(u_{k}^{2})\prod_{j=1:j\ne k}^{S_M}\left(1-u_{k}^{2}/u_{j}^{2}\right)\mathbb{1}.\label{eq:normlim}
\end{equation}
In summary, we have
\begin{align}
	\Psi_{k} & \equiv\frac{T_{M}(-u_{k})\chi_{M}T_{M}(u_{k})}{\mathcal{N}_{k}}, & \{\Psi_{k},\Psi_{k'}\} & =\delta_{k+k'}\mathbb{1},\quad k,k'\neq0
\end{align}
with normalization factors given as the square roots of the r.h.s.
of \eqref{eq:normlim}:
\begin{equation}
	\mathcal{N}_{k}=4\sqrt{P_{M-r_{M}}(u_{k}^{2})\left(-u_{k}^{2}P_{M}'(u_{k}^{2})\right)},
\end{equation}
where we have rewritten the product in \eqref{eq:normlim} using the derivative of the polynomial w.r.t. its argument, denoted by the prime.

When considering the anticommutator of a $\Psi_{k},\ k\neq0$ fermionic mode and the zero mode we may introduce $\mathcal{Q}_{u}\equiv\frac{1}{2}\left(\chi_{M}+\varphi_{M}(u)/P_{M}(u^{2})\right)$ such that $C_{0}\Psi_{0}=\lim_{u\to\infty}\mathcal{Q}_{u}$ and using \eqref{eq:fermionlim} and \eqref{eq:numeratorpoly} it is straightforward to show that this anticommutator vanishes
\begin{equation}
	C_{0}\mathcal{N}_{k}\{\Psi_{k},\Psi_{0}\}=\lim_{u\to\infty}\{\varphi_{M}(u_{k}),\mathcal{Q}_{u}\}=\lim_{u\to\infty}2P_{M-r_{M}}(u_{k}^{2})\left(\frac{1}{1+u/u_{k}}\right)\mathbb{1}=0.
\end{equation}
One can then extract the coefficients $C_{k}$ from \eqref{eq:chidecomp} for $k\neq0$ using \eqref{eq:numeratorpoly} as
\begin{equation}
	k\neq0:\quad\{\varphi_{M}(u_{k}),\chi_{M}\}=\mathcal{N}_{k}C_{k}\mathbb{1}=4P_{M-r_{M}}(u_{k}^{2})\mathbb{1},
\end{equation}
leading to formula \eqref{eq:Ccoeffs}.
We define the normalization of the zero mode through \eqref{eq:zerodef} such that $\Psi_{0}^{2}=\mathbb{1}$.
Since $\varphi_{M}^{2}(u)=P_{M}^{2}(u^{2})\mathbb{1}$ due to the inversion formula $T_M(u)T_M(-u)=P_M(u^2)\mathbb{1}$ and the property $\chi_{M}^{2}=\mathbb{1}$, one can also easily extract $C_{0}$ using
\begin{equation}
	C_{0}^{2}\mathbb{1}=\lim_{u\to\infty}\mathcal{Q}_{u}^{2}=\lim_{u\to\infty}P_{M-r_{M}}(u^{2})/P_{M}(u^{2})\mathbb{1}.
\end{equation}
Note that the limit is zero whenever the polynomial in the numerator has lower order than that in the denominator, i.e. if $S_{M-r_M} < S_M$.
The completeness of the relation \eqref{eq:chidecomp} is guaranteed by the argument using the contour integral in Appendix B of \cite{sajat-ffd-corr}.

\section{Proof of recursion equation for the circuit}
\label{app:proof-of-circuit-recursion}

In this appendix, we give the proof of the recursion relation for the circuit~\eqref{eq:m2-transfermat-recursion-circuit-even}.
For this, we utilize the unitary local gate rather than the non-unitary local gate~\eqref{eq:local-gate}:
\begin{align}
	u_{j}(\phi_{j})=\cos\frac{\phi_{j}}{2}+i\sin\frac{\phi_{j}}{2} b_j^{-1}h_{j},
\end{align}
since the calculation is slightly simpler in this notation.
The translation between these two notations is given by~\cite{sajat-floquet}:
\begin{align}
	A_j g_j(\theta_j)= & u_j(\phi_j)
	\label{eq:gate-translation-with-unitary}
	\,,
\end{align}
where the relation between the two angles $\theta_j$ and $\phi_j$ is
\begin{align}
	\tan(\theta_j/2)= & i\tan(\phi_j/2)
	\label{eq:tan-relation-with-unitary}
	\,.
\end{align}
From~\eqref{eq:tan-relation-with-unitary} we have
\begin{align}
	\cos\theta_j=\frac{1}{\cos\phi_j}
	\,,\quad
	\sin\theta_j=i\frac{\sin\phi_j}{\cos\phi_j}
	\,.
\end{align}
The scalar factor $A_j$ is
\begin{align}
	A_j= & \sqrt{\cos\phi_j}=\frac{1}{\sqrt{\cos\theta_j}}\,.
\end{align}

In the following, we use the abbreviated notations:
\begin{align}
	u_j & \equiv u_j(\phi_j)\,,\quad u_j^{-} \equiv u_j(-\phi_j)
	\,.
\end{align}
We show the useful relations for the unitary local gate below:
\begin{align}
	% u_j u_j^{-}                                                             & = u_j^{-} u_j = 1\,,
	% \\
	% {h}_{j+\delta} u_j                                                & = u_j^{-} {h}_{j+\delta}
	% \quad (\abs{\delta} = 1, 2),
	% \\
	u_j  \hat{h}_{j+\delta} u_j & = \hat{h}_{j+\delta}
	\quad (\abs{\delta} = 1, 2),
	\\
	u_{j}^{2}=                  & \hat{c}_{j} + \hat{h}_{j}
	\,,
\end{align}
where $\hat{h}_j \equiv i\sin\phi_j b_j^{-1} h_j$ and $\hat{c}_{j} \equiv \cos\phi_{j}$.

We define the transfer matrix for the unitary notation:
\begin{align}
	U_{M} \equiv F_{M} \cdot F_{M}^\top\,,
\end{align}
where $F_{M}$ is defined by
\begin{align}
	F_{2k}
	 & =
	(u_2 u_1) (u_4 u_3) \cdots (u_{2k} u_{2k-1})
	\,,
\end{align}
and $F_{2k-1} \equiv \eval{F_{2k}}_{\phi_{2k} = 0}$.
The translation of $\mathcal{V}_{M}$ into $U_M$ is given by
\begin{align}
	\mathcal{V}_{M}
	=
	\qty(\prod_{j=1}^{M} \cos\theta_j)
	U_M\,.
	\label{eq:translate-U-V}
\end{align}
We also introduce another transfer matrix as $\hat{U}_{2k} \equiv \hat{F}_{2k} \cdot \hat{F}_{2k}^\top$ where $\hat{F}_{2k} \equiv \eval{F_{2k}}_{\phi_{2k-1} = 0}$.

In the following, we will derive the recursion relation by expanding the products in the transfer matrix $U_M$.
\begin{align}
	U_{2k}= & (u_{2}u_{1})(u_{4}u_{3})\cdots(u_{2k-2}u_{2k-3})(u_{2k}u_{2k-1})\cdot(u_{2k-1}u_{2k})(u_{2k-3}u_{2k-2})\cdots(u_{3}u_{4})(u_{1}u_{2})
	\nonumber
	\\=&
	\hat{c}_{2k-1}\hat{U}_{2k}+(u_{2}u_{1})(u_{4}u_{3})\cdots(u_{2k-2}u_{2k-3})\hat{h}_{2k-1}(u_{2k-3}u_{2k-2})\cdots(u_{3}u_{4})(u_{1}u_{2})
	\nonumber
	\\=&
	\hat{c}_{2k-1}\hat{U}_{2k}+\hat{h}_{2k-1}U_{2k-4}
	\,.
	\label{eq:U-Uhat-recursion-pre}
\end{align}
$\hat{U}_{2k}$ can also be expanded as
\begin{align}
	\hat{U}_{2k}
	= &
	(u_{2}u_{1})(u_{4}u_{3})\cdots(u_{2k-2}u_{2k-3})u_{2k}^{2}(u_{2k-3}u_{2k-2})\cdots(u_{3}u_{4})(u_{1}u_{2})
	\nonumber                                                                                                                              \\
	= & \hat{c}_{2k}U_{2k-2}+(u_{2}u_{1})(u_{4}u_{3})\cdots(u_{2k-2}u_{2k-3})\hat{h}_{2k} (u_{2k-3}u_{2k-2})\cdots(u_{3}u_{4})(u_{1}u_{2})
	\nonumber                                                                                                                              \\
	= & \hat{c}_{2k}U_{2k-2}+(u_{2}u_{1})(u_{4}u_{3})\cdots(u_{2k-4}u_{2k-5})u_{2k-2}\hat{h}_{2k} (\hat{c}_{2k-3}+\hat{h}_{2k-3} )
	\nonumber                                                                                                                              \\
	  & \hspace{17em} \times u_{2k-2}(u_{2k-5}u_{2k-4})\cdots(u_{3}u_{4})(u_{1}u_{2})
	\nonumber                                                                                                                              \\
	= & \hat{c}_{2k}U_{2k-2}+(u_{2}u_{1})(u_{4}u_{3})\cdots(u_{2k-4}u_{2k-5})\hat{h}_{2k} (\hat{c}_{2k-3}+\hat{h}_{2k-3} u_{2k-2}^{2})
	\nonumber                                                                                                                              \\
	  & \hspace{17em} \times
	(u_{2k-5}u_{2k-4})\cdots(u_{3}u_{4})(u_{1}u_{2})
	\nonumber                                                                                                                              \\
	= & \hat{c}_{2k}U_{2k-2}+\hat{c}_{2k-3}\hat{h}_{2k} U_{2k-4}
	\nonumber                                                                                                                              \\
	  & \qquad
	+\underbrace{(u_{2}u_{1})(u_{4}u_{3})\cdots(u_{2k-4}u_{2k-5})\hat{h}_{2k} \hat{h}_{2k-3} u_{2k-2}^{2}(u_{2k-5}u_{2k-4})\cdots(u_{3}u_{4})(u_{1}u_{2})}_{\qty(\#1)}
	\,,
\end{align}
and the last term can be further expanded as
\begin{align}
	\qty(\#1) = & (u_{2}u_{1})(u_{4}u_{3})\cdots(u_{2k-4}u_{2k-5})\hat{h}_{2k} \hat{h}_{2k-3} (\hat{c}_{2k-2}+\hat{h}_{2k-2} )(u_{2k-5}u_{2k-4})\cdots(u_{1}u_{2})
	\nonumber                                                                                                                                                      \\
	=           & U_{2k-6}\hat{c}_{2k-2}\hat{h}_{2k} \hat{h}_{2k-3}
	+
	\underbrace{(u_{2}u_{1})(u_{4}u_{3})\cdots(u_{2k-4}u_{2k-5})\hat{h}_{2k} \hat{h}_{2k-3} \hat{h}_{2k-2} (u_{2k-5}u_{2k-4})\cdots(u_{1}u_{2})}_{\qty(\#2)}
	\,,
\end{align}
and again, the last term can be expanded as
\begin{align}
	\qty(\#2)
	= & (u_{2}u_{1})(u_{4}u_{3})\cdots(u_{2k-6}u_{2k-7})(u_{2k-4}u_{2k-5})\hat{h}_{2k} \hat{h}_{2k-3} \hat{h}_{2k-2} (u_{2k-5}u_{2k-4})(u_{2k-7}u_{2k-6})\cdots(u_{1}u_{2})
	\nonumber                                                                                                                                                               \\
	= & (u_{2}u_{1})(u_{4}u_{3})\cdots(u_{2k-6}u_{2k-7})u_{2k-4}\hat{h}_{2k} \hat{h}_{2k-3} \hat{h}_{2k-2} u_{2k-4}(u_{2k-7}u_{2k-6})\cdots(u_{1}u_{2})
	\nonumber                                                                                                                                                               \\
	= & \hat{h}_{2k} \hat{h}_{2k-3} \hat{h}_{2k-2} (u_{2}u_{1})(u_{4}u_{3})\cdots(u_{2k-6}u_{2k-7})u_{2k-4}^{2}(u_{2k-7}u_{2k-6})\cdots(u_{1}u_{2})
	\nonumber                                                                                                                                                               \\
	= & \hat{h}_{2k} \hat{h}_{2k-3} \hat{h}_{2k-2} \hat{U}_{2k-4}
	\,.
\end{align}
Then we have
\begin{align}
	\hat{U}_{2k}= & \hat{c}_{2k}U_{2k-2}+\hat{c}_{2k-3}\hat{h}_{2k} U_{2k-4}+\hat{h}_{2k-2} \hat{h}_{2k-3} \hat{h}_{2k} \hat{U}_{2k-4}+\hat{c}_{2k-2}\hat{h}_{2k-3} \hat{h}_{2k} U_{2k-6}
	\,.
	\label{eq:Uhat-U-recursion-pre}
\end{align}
Eliminating $\hat{U}_{2k}$ and $\hat{U}_{2k-4}$ from~\eqref{eq:Uhat-U-recursion-pre} using~\eqref{eq:U-Uhat-recursion-pre}, we have
\begin{align}
	U_{2k}= & \hat{c}_{2k}\hat{c}_{2k-1}U_{2k-2}+(\hat{c}_{2k-1}\hat{c}_{2k-3}\hat{h}_{2k}+\hat{h}_{2k-1}  + \frac{\hat{c}_{2k-1}}{\hat{c}_{2k-5}} \hat{h}_{2k-2}  \hat{h}_{2k-3} \hat{h}_{2k} )U_{2k-4}
	\nonumber                                                                                                                                                                                            \\
	        & \hspace{7em}
	+\hat{c}_{2k-1}\hat{c}_{2k-2}\hat{h}_{2k-3} \hat{h}_{2k} U_{2k-6} + \frac{\hat{c}_{2k-1}}{\hat{c}_{2k-5}}\hat{h}_{2k-5}  \hat{h}_{2k-2} \hat{h}_{2k-3} \hat{h}_{2k} U_{2k-8}
	\,.
\end{align}
Using~\eqref{eq:translate-U-V}, we can derive the recursion for $\mathcal{V}_{2k}$:
\begin{align}
	\mathcal{V}_{2k}
	 & =
	\qty(\prod_{j=1}^{2k} c_j)
	\big[
	{c}^{-1}_{2k}{c}^{-1}_{2k-1}U_{2k-2}
	+
	(
	{c}^{-1}_{2k-1}{c}^{-1}_{2k-3} c^{-1}_{2k} {h'}_{2k}
	+
	c^{-1}_{2k-1} {h'}_{2k-1}
	\nonumber       \\
	 & \hspace{0em}
	+
	{{c}_{2k-5} {c}^{-1}_{2k-1} {c}^{-1}_{2k-2} {c}^{-1}_{2k-3} {c}^{-1}_{2k}} {h'}_{2k-2} {h'}_{2k-3} {h'}_{2k}
	) U_{2k-4}
	+
	{c}^{-1}_{2k-1}{c}^{-1}_{2k-2} {c}^{-1}_{2k-3} {c}^{-1}_{2k} {h'}_{2k-3} {h'}_{2k} U_{2k-6}
	\nonumber       \\
	 & \hspace{1em}
	+
	{c}_{2k-5}{{c}^{-1}_{2k-1}} {c}^{-1}_{2k-5} {c}^{-1}_{2k-2} {c}^{-1}_{2k-3} {c}^{-1}_{2k} {h'}_{2k-5} {h'}_{2k-2} {h'}_{2k-3} {h'}_{2k} U_{2k-8}
	\big]
	\nonumber       \\
	 & =
	\mathcal{V}_{2k-2}
	+
	(
	c_{2k-2} {h'}_{2k}
	+
	c_{2k-3} c_{2k-2} c_{2k} {h'}_{2k-1}
	+
	c_{2k-5} {h'}_{2k-2} {h'}_{2k-3} {h'}_{2k}
	) \mathcal{V}_{2k-4}
	\nonumber       \\
	 & \qquad
	+
	c_{2k-4} c_{2k-5} {h'}_{2k-3} {h'}_{2k} \mathcal{V}_{2k-6}
	+
	c_{2k-4} c_{2k-5} c_{2k-6} c_{2k-7} {h'}_{2k-5} {h'}_{2k-2} {h'}_{2k-3}  {h'}_{2k} \mathcal{V}_{2k-8}
	\,,
	\label{eq:recursion-proof}
\end{align}
where we used the following relation:
\begin{align}
	\hat{h}_{j}
	=
	\frac{1}{\cos\theta_j}
	h_{j}^{\prime}
	\,.
\end{align}
The last line in~\eqref{eq:recursion-proof} is the right-hand side of~\eqref{eq:m2-transfermat-recursion-circuit-even}, which proves the recursion relation~\eqref{eq:m2-transfermat-recursion-circuit-even}.

By substituting $\theta_{2k}=0$ into~\eqref{eq:recursion-proof}, we can also prove~\eqref{eq:m2-transfermat-recursion-circuit-odd}.

\nolinenumbers

\end{document}